\documentclass{article}
\usepackage[utf8]{inputenc}

\usepackage{graphicx} 
\usepackage{amsmath}
\usepackage{amsthm}
\usepackage{thmtools}
\usepackage{amssymb}
\usepackage[dvipsnames]{xcolor}
\usepackage{algorithm}
\usepackage{algpseudocode}
\usepackage{dsfont}
\usepackage{booktabs}
\usepackage{multirow}
\usepackage{colortbl}
\usepackage[textwidth=2cm]{todonotes}
\usepackage[a4paper,top=3cm,bottom=3cm,left=3cm,right=3cm,marginparwidth=4cm]{geometry}
\usepackage[hidelinks,colorlinks=true,linkcolor=blue,citecolor=blue]{hyperref}
\usepackage[nameinlink]{cleveref}
\usepackage{authblk}
\usepackage{csquotes}

\newtheorem{theorem}{Theorem}
\declaretheorem[name=Theorem,numbered=no]{theorem*}
\newtheorem{lemma}{Lemma}
\newtheorem{definition}{Definition}
\newtheorem{corollary}{Corollary}

\newtheorem{observation}{Observation}
\newtheorem{proposition}{Proposition}
\declaretheorem[name=Proposition,numbered=no]{proposition*}



\newcommand\leo[1]{\textcolor{black}{#1}}

\newcommand\chris[1]{\textcolor{black}{#1}}
\newcommand\yukinew[1]{\textcolor{black}{#1}}

\crefname{chapter}{Section}{Sections}

\newcommand\indeg[1]{\delta^-(#1)} 
\newcommand\net{\mathcal{N}} 
\newcommand\pth{\rightsquigarrow} 
\newcommand\mux{\bar\mu}

\newcommand\pair{$(b,a)$}

\usepackage[english]{babel}



\title{Metrics for classes of semi-binary phylogenetic networks using $\mu$-representations\footnote{This paper received funding from the Netherlands Organisation for Scientific Research (NWO) under
projects OCENW.M.21.306 and OCENW.KLEIN.125.}}
\author[1]{Christopher Reichling}
\author[1]{Leo van Iersel} 
\author[1]{Yukihiro Murakami\footnote{Corresponding author, email: y.murakami@tudelft.nl}}
\affil[1]{Delft Institute of Applied Mathematics, Delft University of Technology, The Netherlands}
\date{\today}

\begin{document}

\maketitle

\begin{abstract}
    Phylogenetic networks are 
    useful in representing the evolutionary history of taxa. 
    In certain scenarios, one requires a way to compare different networks.
    In practice, this can be rather difficult, except within specific classes of networks.
    In this paper, we derive metrics for the class of \emph{orchard networks} and the class of \emph{strongly reticulation-visible} networks, from variants of so-called \emph{$\mu$-representations}, which are vector representations of networks.
    For both network classes, we impose degree constraints on the vertices, by considering \emph{semi-binary} networks.
\end{abstract}

\section{Introduction}


Phylogenetic trees are used to model the evolutionary history of species~\cite{nei2000molecular}.
Recent studies have focused on generalizing trees to account for more complex evolutionary scenarios.
Trees suffice to illustrate vertical descent evolution, however fail to accommodate for reticulate evolution, which arises from hybridization events and horizontal gene transfers~\cite{bai2021defining}.
To represent such events, phylogenetic networks have proven to be more fruitful~\cite{huson2010phylogenetic,bapteste2013networks}.


How does one construct phylogenetic networks?
Traditional phylogenetic inference methods can be grouped into model-based methods (e.g. Bayesian inference~\cite{wen2016bayesian,huelsenbeck2001bayesian}, maximum likelihood~\cite{wen2018inferring,brocchieri2001phylogenetic}) or non-model-based methods (e.g. distance-based~\cite{bryant2004neighbor}, maximum parsimony~\cite{swofford1998phylogenetic}, and combinatorial~\cite{huson2011survey}).
In all cases, one must evaluate the accuracy of the output.
\yukinew{For certain evolutionary histories, the true phylogeny is known.
In validating the inference method, this means one can compare output phylogenies to the benchmark phylogeny.
In doing so, one needs a notion of computing distances between the two phylogenies.}
Existing metrics such as rearrangement moves suffer from computational intractability~\cite{DasGupta1997,bordewich2017lost,janssen2018exploring}; others like the triplet distance~\cite{dobson1975comparing} suffer from non-identifiability (two distinct networks could be at a distance 0. See e.g., Figure 19 in~\cite{cardona2008metrics}).

One way of avoiding these situations is to first find complete graph invariants, sometimes called encodings, for specific classes of phylogenetic networks.
Statements of the sort `two networks are isomorphic if and only if they have the same encodings' are typically sought after in this area.
Taking the symmetric difference 
of the invariants often leads to a metric, by definition of complete graph invariants~\cite{gurevich2001}.
In this paper, we consider invariants based on so-called $\mu$-vectors.
For every vertex in the graph, these vectors encode the number of paths from it to every leaf.
The multiset of all $\mu$-vectors is called the $\mu$-representation of the network. 
Cardona et al. introduced the $\mu$-representation and showed that it can be used as a metric for binary tree-child networks \cite{cardona2008comparison}.

In general, two networks can have the same $\mu$-representations; in this sense, we say that the $\mu$-representations do not encode networks, see \autoref{fig: counterYuki}. However, $\mu$-representations may encode subclasses of phylogenetic networks such as the class of tree-child networks that Cardona et al. considered as mentioned. \bigskip

\begin{figure}
    \centering
    \includegraphics[width=0.6\textwidth]{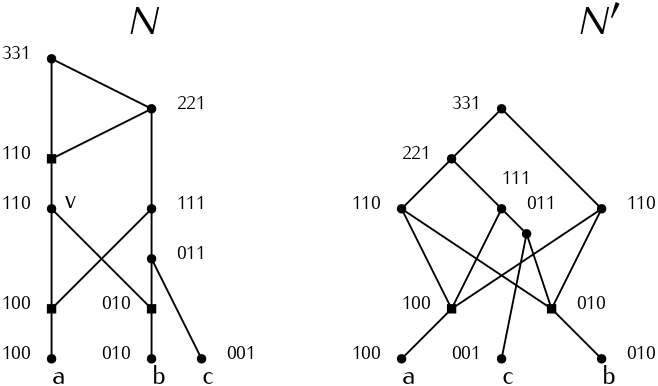}
    \caption{A binary stack-free network~$N$ and a semi-binary stack-free network~$N'$ with the same~$\mu$-representations ($\{331,221,111,110,110,011,100,100,010,010,001\}$). Note that the edges are directed downward. For instance node $v$ has $\mu$-vector $110$ because from it there is a single directed path to leaf $a$, a single directed path to leaf $b$ and there are no directed paths to leaf $c$.} 
    \label{fig: counterYuki}
\end{figure}
For most mathematical and algorithmic techniques, the full class of phylogenetic networks is too large. Therefore, several restricted classes of phylogenetic networks have been defined and studied. A network is called \emph{tree-child} if none of its nodes have only reticulation children and a network is \emph{stack-free} if no two reticulations are adjacent. A network is \emph{reticulation-visible} if for each reticulation there exists a leaf such that all paths to this leaf visit the reticulation. These classes have mainly been defined for their nice mathematical properties. However, an intuitive biological argument can be made for tree-child networks as well. As long as a species does not go extinct it is highly unlikely that all of its surviving offspring is the result of hybridization. Tree-child networks are automatically stack-free, because if two reticulations are adjacent then one of them must have the other one as their only child. Reticulation-visible networks are stack-free as well. Moreover, any phylogenetic network can be made stack-free by iteratively identifying any two adjacent reticulations. More recently the class of \emph{orchard} networks was introduced as a superclass of the class of tree-child networks with nice characteristics. A natural justification for this class is that orchard networks can be interpreted as trees with additional \emph{horizontal} arcs which correspond to horizontal gene transfer \cite{van2022orchard}. For example the left network in  \autoref{fig: counterYuki} is orchard and \autoref{fig: horizontalArcs} shows how it can be represented as a tree with additional horizontal arcs.

\begin{figure}
    \centering
    \includegraphics[width=0.6\textwidth]{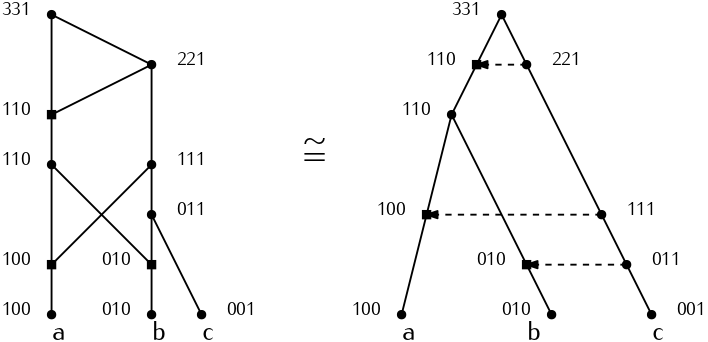}
    \caption{An orchard network can be represented as a tree with additional horizontal arcs.} 
    \label{fig: horizontalArcs}
\end{figure}

Building upon the work of Cardona et al., Erd\H{o}s et al. sought to extend the application of $\mu$-representations to a larger class of phylogenetic networks in \cite{erdHos2019class}, which they \yukinew{called} orchard networks. 
\yukinew{Their proof was based on identifying and reducing so-called \emph{cherries} and \emph{reticulated cherries}, straight from the $\mu$-representations.}
However, some of their findings were later refuted by Bai et al. 
\cite{bai2021defining} 
\leo{who}
showed that it is not possible to determine reticulated cherries \yukinew{from} the $\mu$-representation for general binary orchard networks. In that paper, Bai et al. proposed a \emph{stack-freeness} constraint within the class of orchard networks to establish the encoding result. They then aimed to show encoding holds for semi-binary stack-free orchard networks (because binary networks can be made stack-free by identifying stacks of reticulations, \leo{which} 
makes the network semi-binary).
\yukinew{This claim was shown, by counter-example, to only hold for networks which are binary \cite{murakami2021phylogenetic}}.
In \cite{cardona2024comparison}, Cardona et al. proposed an extended $\mu$-representation, which also takes into consideration the number of paths to reticulations from each node. 
In the paper they showed that this extended $\mu$-representation is an encoding for binary orchard networks, lifting the stack-free condition. 
This modification however does not show encoding for semi-binary \leo{orchard} networks as originally proposed in \cite{bai2021defining}, as the proof is restricted to networks which are binary. 

\yukinew{As it stands, $\mu$-representations encode stack-free binary orchard networks, and the extended $\mu$-representations encode binary orchard networks.}
\leo{The aims of this paper are to find a large subclass of semi-binary orchard networks that is encoded by $\mu$-representations and to find variants of $\mu$-representations 
that form an encoding for all semi-binary orchard networks.}
Our contributions are as follows, in \Cref{sec: extended mu} we first propose a modified $\mu$-representation including the in-degrees of nodes, which is different from the extended $\mu$-representation proposed by Cardona et al. in \cite{cardona2024comparison}. \autoref{thm: orchard encoding} states that this modified $\mu$-representation encodes semi-binary stack-free orchard networks. 
With this theorem, we can define a metric given by the cardinality of the symmetric difference of the modified $\mu$-representations. \leo{On the other hand, we}
show that encoding does not hold for non-binary stack-free orchard networks even if the out-degrees are also 
\leo{added to the modified $\mu$-representation}
(\autoref{thm: non-binary orchard not encoded}).

Furthermore, in \Cref{sec: strongly ret-vis encoding} we present a fundamental equation which governs the relationship between the in-degrees of reticulations and the $\mu$-representation of a network. We prove that such an equation exists (\autoref{thm: indeg formula}), and show how this gives rise to a system of equations on the $\mu$-vector of the root and the $\mu$-vectors and in-degrees of reticulations. We furthermore show, that for reticulation-visible networks with fixed reticulation set, the system of equations generated by \autoref{thm: indeg formula} has a unique solution (\autoref{thm: indeg ret vis}).
Then, we define a new class of networks called \emph{strongly reticulation-visible}
networks, for which there is a \emph{tree-path} \leo{(a path containing only tree-nodes, which may consist of just a single tree-node)} 
to a bridge from each child of a reticulation. A \emph{bridge} is an edge which disconnects the network if cut. We show that a bridge and the lowest reticulation ancestor of that bridge in any network \leo{are} 
uniquely determined by the $\mu$-representation (\autoref{thm: bridge-node mu-rep} and \autoref{lem: lowest ret above bridge}). We then use this to show that strongly reticulation-visible networks with the same $\mu$-representations have 
the same modified $\mu$-representation (\autoref{thm: strong ret vis in-degrees}). Finally\leo{,} we conclude that strongly reticulation-visible semi-binary stack-free orchard networks are encoded in the space of semi-binary stack-free networks by the $\mu$-representation (\autoref{thm: strong ret vis encoding}). This means that the cardinality of the symmetric difference of the $\mu$-representations gives a metric between these networks. \leo{Hence, it is not necessary to include indegrees in the $\mu$-representation for this class.}

\section{Preliminaries} \label{sec: preliminaries}
By a \emph{rooted directed acyclic graph (rooted DAG)} we mean a connected directed graph that has no directed cycles. 
A rooted DAG whose leaves are bijectively labeled by the elements of a finite set $X$, we call an \emph{$X$-DAG}. 
\yukinew{We assume henceforth that there is some ordering on the elements of $X$, i.e., that $X=\{x_1,\ldots, x_n\}$. This helps with defining certain terms.}
The in-degree of a node $v$ in an~$X$-DAG, which we denote $\indeg{v}$, is the number of edges which end in $v$. 
The out-degree $\delta^+(v)$ of a node $v$ 
is the number of edges starting in $v$. The degree of a node is the sum of its in-degree and out-degree. Every vertex of an $X$-DAG can be classified by their in-degree or their out-degree.
In particular, an~$X$-DAG contains
\begin{itemize}
    \item a single \emph{root} $\rho$ with in-degree $\indeg{\rho} = 0$;
    \item \emph{tree-nodes} $v$ with in-degree $\indeg{v} \leq 1$;
    \item \emph{reticulations} $r$ with in-degree $\indeg{r} \geq 2$; 
    \item and \emph{leaves} $a$ with out-degree $\delta^+(a) = 0$.
\end{itemize}
Note that the root is a tree-node, and leaves can be either tree-nodes or reticulations. Nodes which are not leaves are sometimes called \emph{internal} nodes.
A node with indegree-1 and outdegree-1 is called an \emph{elementary node}. 
The set of all reticulations contained in a given $X$-DAG $\net$ we will denote $R(\net)$ or simply $R$ when the $X$-DAG is obvious from the context. 
We call an edge a \emph{reticulation edge} if it ends in a reticulation. 
The \emph{hybridization number} $h(\net) = \sum_{r_i \in R}(\indeg{r_i} - 1)$ of an $X$-DAG $\net$ is the number of reticulation edges minus the number of reticulations. 
We call an $X$-DAG a \emph{tree} if it has no reticulations.

A \emph{phylogenetic network} is an $X$-DAG without parallel arcs or elementary nodes, where the root must be a leaf (in which case the network is one on a single leaf) or have out-degree greater than or equal to 2, reticulations have out-degree 1 and leaves are tree-nodes. A phylogenetic network in which all nodes except the root or the leaves have degree 3 is called \emph{binary}. A phylogenetic network in which all tree-nodes except the root or the leaves have degree 3 but reticulations can have degree $\geq 3$ is called \emph{semi-binary}. 
\yukinew{Biologically, such evolutionary histories can occur if there are ambiguities in the order of consecutive reticulate events.}
We say a phylogenetic network is \emph{non-binary} when there are no such added restrictions on the degrees of the nodes. 
Note that non-binary does not mean that the network is `not binary', but rather, that the network is `not necessarily binary'.
A phylogenetic network which does not contain any reticulations is called a \emph{phylogenetic tree}. See \autoref{fig: example nets} for some examples. Henceforth, we may refer to phylogenetic networks as networks for brevity.

All phylogenetic networks are \emph{$X$-DAGs}. From here on out we will identify the leaf nodes with the elements of the set $X$ and no longer make a distinction between the two. Furthermore, we will assume edges are directed unless otherwise mentioned, and in all figures edges will be directed downward, such that the root is at the top and the leaves are at the bottom.


A path $v_0 \pth v_k$ between two nodes $v_0,v_k \in V$ is a sequence of edges $v_0v_1,v_1v_2,...,v_{k-1}v_k$ such that $v_iv_{i+1}\in E$ for $i \in\{0,1,\ldots,k-1\}$. Note, that this means that in this paper all paths will be directed paths as all edges are directed.


We say a node $v_1$ is an \emph{ancestor} of another node $v_2$ if there is a path from $v_1$ to $v_2$. We also say that $v_2$ is a \emph{descendant} of $v_1$. In this case we may also say $v_1$ is \emph{above} $v_2$ and that $v_2$ is \emph{below} $v_1$. If the path consists of a single edge, then we say $v_1$ is the \emph{parent} of $v_2$, usually denoted $p_{v_2}$ and $v_2$ is the \emph{child} of $v_1$. We also consider the trivial path, therefore each node is both an ancestor and a descendant of itself. The number of paths from $v_1$ to $v_2$ we will denote $P_{v_1v_2}$.

Given a directed edge $e = v_1v_2$ we call $v_2$ the \emph{head} of $e$ and $v_1$ the \emph{tail} of $e$. We say a node is below $e$ if it is a descendant of $v_2$ and we say it is above $e$ if it is an ancestor of $v_1$. We say two nodes are connected if there is an undirected path between them. We say a set of nodes is connected if every pair of nodes in the set is connected. We say a graph is connected if the set of its vertices is connected. Recall that an $X$-DAG is a connected graph.

A \emph{tree-path} is a path $v_0 \pth v_k$, such that $v_i$ is a tree-node for each $i \in \{0,1,\ldots,k\}$. A tree-node $v$ which has out-degree $\delta^+(v) = 1$ we shall call an \emph{elementary} node. A path for which all but the start and end nodes are elementary nodes, we shall call an \emph{elementary path}. The \emph{height} of a node is the length of the longest path from the node to a leaf.

Two $X$-DAGs $\net = (V,E)$ and $\net' = (V',E')$ are said to be \emph{isomorphic}, denoted by $\net \cong \net'$, when there exists a bijective function $f: V \rightarrow V'$ such that $f(a) = a$ for all $a \in X$ and $v_1v_2 \in E \iff f(v_1)f(v_2) \in E'$ for all $v_1,v_2 \in V$.

\subsection{Network classes}

\begin{figure}
    \centering
    \includegraphics[width = \textwidth]{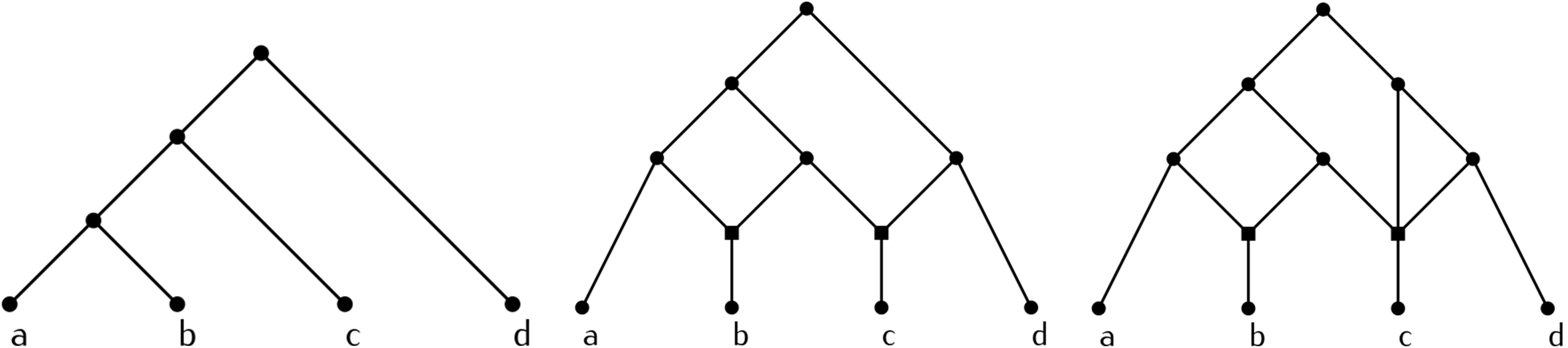}
    \caption{A binary phylogenetic tree, a binary phylogenetic network and a semi-binary phylogenetic network.}
    \label{fig: example nets}
\end{figure}
A phylogenetic network $\net$ is said to be \emph{stack-free} if no reticulation in $\net$ is the child of another reticulation. A phylogenetic network such that for each reticulation there is a leaf for which all paths from the root to this leaf pass through the reticulation is called \emph{reticulation-visible}. The network in \autoref{fig: mu-example} is binary and stack-free. However, it is not reticulation-visible, because there are no leaves such that all paths from the root pass through $u$. All networks in \autoref{fig: example nets} are stack-free and reticulation-visible. In \Cref{sec: mu stabilitiy}, we will introduce the class of  \emph{strongly reticulation-visible networks} as the class of phylogenetic networks, in which there is a tree-path to a bridge from the child of each reticulation.

To define the class of \emph{orchard networks}, we require notions of \emph{cherries}, \emph{reticulated cherries}, and their reductions. A \emph{cherry} is an ordered pair of leaves \pair{} which have the same parent. A \emph{reticulated cherry} is an ordered pair of leaves \pair{} such that the parent $p_b$ of $b$ is a reticulation and the parent of $a$ is a tree-node $p_a$ which is also the parent of $p_b$. A pair \pair{} which is either a cherry or a reticulated cherry is also called a \emph{reducible pair}. \emph{Suppressing} an elementary node is the action of deleting the node and adding an edge between the parent and the child of the node. To \emph{reduce} a cherry in a network $\net$, we delete the leaf $b$ and suppress its parent $p_b$ if it has become elementary. To reduce a reticulated cherry in $\net$ we delete the edge $p_ap_b$ and suppress any nodes which have become elementary. In this way one always obtains another phylogenetic network as the result of reducing a reducible pair in a phylogenetic network.

A network is called \emph{orchard} if there exists a sequence $s_1s_2s_3\ldots s_i \dots s_n$, of ordered pairs, such that $s_i$ is a reducible pair in the network after reducing each pair in the sequence up to $s_{i-1}$ and the entire sequence reduces the network to a network on a single leaf. Note that in that case, each network generated by performing reductions $s_1$ up to $s_i$, is orchard with sequence $s_{i+1},s_{i+2},\ldots,s_n$, see Corollary 4.2 in \cite{erdHos2019class}. \chris{Furthermore, any maximal sequence of cherry reductions is complete, meaning any cherry reduction will result in an orchard network and any partial sequence can be made complete}. The network in \autoref{fig: mu-example} is orchard with sequence $(b,c)(a,c)(b,a)(a,c)(c,a)$. It contains the reticulated cherry $(b,c)$. The networks in \autoref{fig: mu-example} are all orchard and contain the reducible pair \pair{}. In the phylogenetic tree \pair{} is a cherry, while in the other networks \pair{} is a reticulated cherry. See \autoref{fig: classes of phylogenetic networks} for a visualization of the way the different classes of phylogenetic networks discussed in this paper are related.
\begin{figure}
    \centering
    \includegraphics[width = 0.3\textwidth]{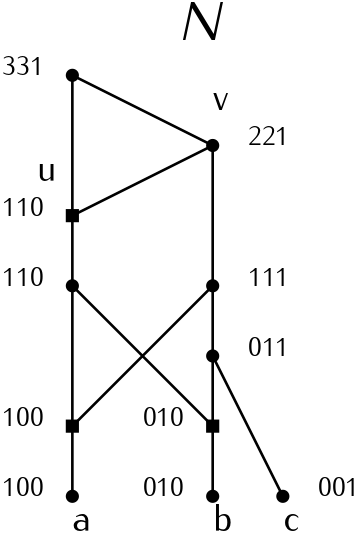}
    \caption{An example network $\net$, on leaves $a,b$ and $c$, with $\mu$-representation $\mu(\net) = \{(100,2),(010,2),(001,1),(110,2),(011,1),(111,1),(221,1),(331,1)\}$. Edges are directed downward. Nodes $u$ and $v$ have vectors $\mu(u) = 110$ with multiplicity 2 and $\mu(v) = 221$ with multiplicity 1. Following \Cref{def:modifiedmu}, we have~$\mux(u) = (1,110)$ and $\mux(v) = (1,221)$.}
    \label{fig: mu-example}
\end{figure}

\begin{figure}
    \centering
    \includegraphics[width= 0.6\textwidth]{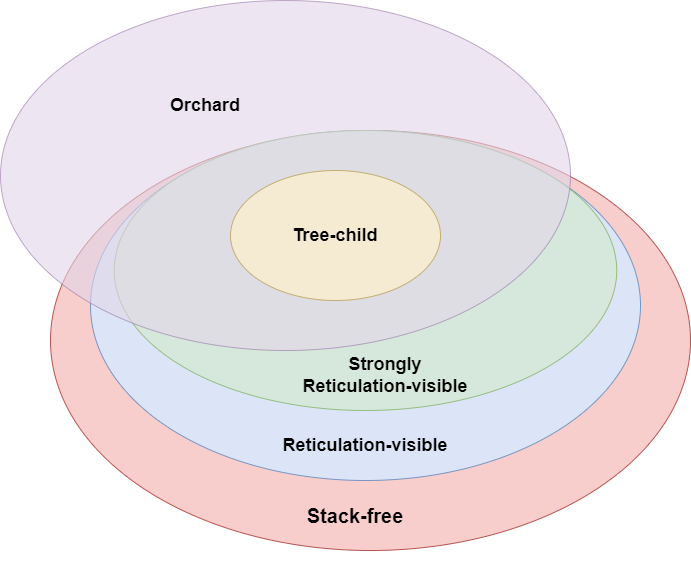}
    \caption{A diagram showing the relations between several different classes of phylogenetic networks.}
    \label{fig: classes of phylogenetic networks}
\end{figure}

\subsection{The \texorpdfstring{$\mu$}{m}-representation}
Given an $X$-DAG $\net = (V,E)$, the \emph{path-multiplicity vector} or $\mu$\emph{-vector} of any node $v\in V$ is defined as follows: let $\mu(v) \in \mathbb{Z}^X$ be a vector indexed by the leaves of $\net$, such that the coordinate indexed by leaf $a$, denoted $\mu(v)_a$, is equal to the number of paths from $v$ to $a$. Note that $\mu(v)$ only contains non-negative integer coordinates and is never equal to the zero vector. Moreover, since any leaf constitutes a trivial path as well their $\mu$-vector is the
corresponding unit vector. With the exception of leaf nodes, the $\mu$-vector of a node is always the sum of the $\mu$-vectors of its children. For two vertices~$u,v\in V$, we write~$\mu(u)\le\mu(v)$ if $\mu(u)_a\le\mu(v)_a$ for all~$a\in X$. We write $\mu(u) < \mu(v)$ if~$\mu(u)\le \mu(v)$ and~$\mu(u)_a<\mu(v)_a$ for at least one~$a\in X$.
Note that if~$u$ is descendant of~$v$, it always holds that~$\mu(u) \le \mu(v)$. We say that a $\mu$-vector~$\mu(v)$ \emph{belongs} to a node~$u$ if~$\mu(u) = \mu(v)$. In particular,~$\mu(v)$ belongs to~$v$. Later on, we shall see that a $\mu$-vector can belong to multiple nodes. The $\mu$-representation of $\net$, denoted $\mu(\net)$, is the multiset of all $\mu$-vectors of nodes in $V$.

A multiset is similar to a set. The main difference between a multiset and a set, is that a multiset can contain multiple instances of the same element. The number of instances of an element in a given multiset is called the multiplicity of that element in that multiset. For example, if the $\mu$-representation $\mu(\net)$ contains two instances of a vector $\mu(v)$, then we say $\mu(v)$ has multiplicity 2 in $\mu(\net)$. We may shorten this to $\#\mu(v) = 2$, whenever the multiset containing $\mu(v)$ is implied. Usually the implied multiset is $\mu(\net)$. Then, $\#\mu(v)$ denotes the multiplicity of $\mu(v)$ in $\mu(\net)$.

If a given $\mu$-vector $\mu(v)$ is not contained in a multiset $\mu(\net)$, we may say $\mu(v)$ has multiplicity 0 in $\mu(\net)$. The operation of removing a $\mu$-vector $\mu(v)$ from $\mu(\net)$ is equivalent to lowering the multiplicity of $\mu(v)$ in $\mu(\net)$ by 1. Clearly, the multiplicity of a $\mu$-vector cannot be negative and a $\mu$-vector which has multiplicity 0 in $\mu(\net)$ cannot be removed from $\mu(\net)$. The operation of adding a $\mu$-vector $\mu(v)$ to a multiset $\mu(\net)$ is equivalent to increasing the multiplicity of $\mu(v)$ in $\mu(\net)$ by 1. We say that $\mu(\net)$ is generated by adding $\mu(v)$ for each node $v \in V$. Therefore, the multiplicity of a vector $\mu(v)$ in $\mu(\net)$ is equal to the number of nodes in $\net$ with $\mu$-vector equal to $\mu(v)$. We do not equate the nodes $v \in V$ with their $\mu$-vectors because multiple nodes may have the same $\mu$-vector. A set of at least two tree-nodes which have the same $\mu$-vector we shall call \emph{tree-clones}. A node which is part of a set of tree-clones, we shall call a tree-clone. Bai et al. showed in \cite{bai2021defining} Lemma 4.4, that orchard networks do not contain tree-clones.

For example, the node $u$ in \autoref{fig: mu-example} has $\mu$-vector $110$, because there is exactly one path to leaf $a$, one path to leaf $b$, and there are no paths to leaf $c$ starting in $u$. There are two instances of nodes with $\mu$-vector $110$, because the paths starting in the reticulation $u$ are in bijection with the paths starting in its child, by adding or deleting the edge between them. Therefore, $\mu(u)$ has multiplicity 2 in $\mu(\net)$. The node $v$ in \autoref{fig: mu-example} has $\mu$-vector $221$, because there are 2 paths to leaf $a$, one via node $u$ and one via the other child of $v$, and 2 paths to $b$ and one path to $c$, starting in $v$. It should be clear from these examples why, with the exception of leaf nodes, the $\mu$-vector of a node is always the sum of the $\mu$-vectors of its children. 



\subsection{The symmetric difference of multisets}
The symmetric difference between two sets $S_1, S_2$ is the set of elements from $S_1$ and $S_2$, which are not contained in both sets.
\begin{equation*}
    S_1 \triangle S_2 = (S_1 \cup S_2) \setminus (S_1 \cap S_2)
\end{equation*}
The cardinality of the symmetric difference, or the number of elements that are unique to either set, can be used as a measure for the difference between these two sets. 

For multisets $M_1,M_2$,
the \emph{symmetric difference} $M_1\triangle M_2$ contains elements in~$M_1$ or~$M_2$, with multiplicity equal to the absolute difference of the respective multiplicities within~$M_1$ or~$M_2$. The cardinality of a multiset is the sum of the multiplicities of all its elements.
The cardinality of the symmetric difference is a metric on multisets.
For it to be a metric on some subset of phylogenetic networks, we need a modified version of the \chris{identity axiom} to hold. Let~$\net_1$ and~$\net_2$ be networks.
\begin{itemize}
    \item $|\mu(\net_1) \triangle \mu(\net_2)| = 0$ if, and only if $\net_1$ is isomorphic to $\net_2$.
\end{itemize}
Therefore, we will be focusing on determining the conditions such that the $\mu$-representations are equal if, and only if the networks are isomorphic.
It is important to note that there is always only one $\mu$-representation belonging to a given network.







\section{Encoding by 
modified \texorpdfstring{$\mu$}{m}-representations} 
\label{sec: extended mu}

\subsection{Preliminary lemmas}
In this section we will show one of the main results. This result is in a way a continuation and modification of previous propositions by Bai et al. \cite{bai2021defining} and Erd\H{o}s et al. \cite{erdHos2019class}.
We will make use of a modified $\mu$-representation.
Let~$a=(a_1,\ldots, a_k)$ and~$b = (b_1,\ldots, b_\ell)$ be vectors. 
We write~$a\oplus b = (a_1,\ldots, a_k, b_1, \ldots, b_\ell)$ to denote the concatenation of the two vectors.
If one of the arguments (either $a$ or $b$) is an integer, it is understood and treated as a 1-dimensional vector.

\begin{definition}\label{def:modifiedmu}
    Let $\net = (V,E)$ be a network.
    Given a $\mu$-vector $\mu(v)$ of a node $v \in V$, its \emph{modified $\mu$-vector} is
    \begin{equation*}
        \mux(v) = \indeg{v} \oplus \mu(v).
    \end{equation*}
    For a network on leaf set~$X$, $\mux(v)$ is the~$|X|+1$-th dimensional vector with~$\indeg{v}$ as the first coordinate and the coordinates of~$\mu(v)$ as the latter~$|X|$ coordinates.
    The \emph{modified $\mu$-representation} $\mux(\net)$ of a network $\net$, is the multiset of modified $\mu$-vectors $\mux(v)$ of nodes $v$ in $\net$.
\end{definition}

We also define two types of reticulated cherries.

\begin{definition} 
    Let~$\net$ be a network and let~$\pair{}$ be a reticulated cherry, where~$p_b$ is the parent of~$b$.
    \begin{itemize}
        \item If $\indeg{p_b} = 2$, then \pair{} is a \emph{simple reticulated cherry}.
        \item If $\indeg{p_b} \ge 3$, then \pair{} is a \emph{complex reticulated cherry}.
    \end{itemize}
\end{definition}

See \autoref{fig: types of cherries}, for examples of a cherry, a simple reticulated cherry and a complex reticulated cherry.

\begin{figure}[ht]
    \centering
    \includegraphics[width = 0.8\textwidth]{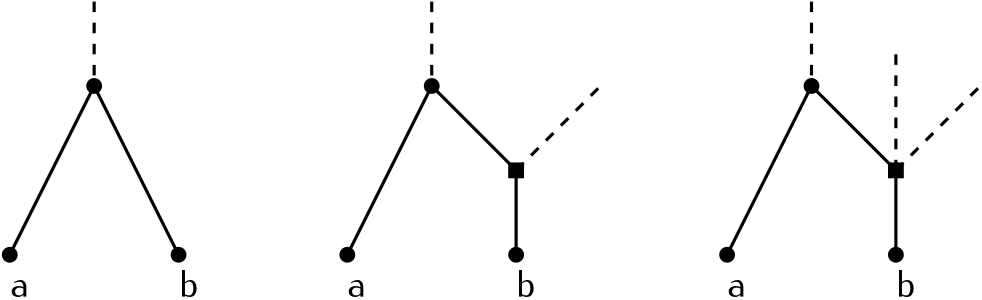}
    \caption{A cherry, a simple reticulated cherry~\pair{} and a complex reticulated cherry~\pair{} on the leaves $a$ and $b$.}
    \label{fig: types of cherries}
\end{figure}

For the rest of the section, let~$\net$ be a semi-binary stack-free network on~$X$,
and let~$a,b \in X$ be leaves of $\net$. We shall show encoding results regarding modified $\mu$-representations. First, we prove some preliminary results.

\begin{observation} \label{lem: stack free leaf mults} \label{lem: leaf tree parent}
    Let $a$ be a leaf in~$\net$. Then, $\mu(a)$ has multiplicity 1 or 2 in $\mu(\net)$. If $\#\mu(a) = 1$, then its parent $p_a$ is a tree-node with $\mu(p_a) \ne \mu(a)$. Furthermore, $\mu(p_a)$ is minimal in the multiset $\{\mu(v):\mu(v) > \mu(a), v\in V\}$. Otherwise, if~$\#\mu(a) = 2$, then~$p_a$ is a reticulation with $\mu(p_a) = \mu(a)$.
\end{observation}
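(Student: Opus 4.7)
The plan is to characterize all nodes~$v \in V$ with~$\mu(v) = \mu(a) = e_a$, where~$e_a$ denotes the unit vector at coordinate~$a$. Any such~$v$ has exactly one directed path to~$a$ and no paths to any other leaf, so~$v$ is an ancestor of~$a$ whose only leaf descendant is~$a$. I would first argue that any such~$v$ distinct from~$a$ must be a reticulation. In the semi-binary setting every internal tree-node has out-degree at least~$2$ (internal non-root tree-nodes by the degree-$3$ condition, and the root by the definition of a phylogenetic network), and its~$\mu$-vector equals the sum of its children's~$\mu$-vectors, each of which is a nonzero non-negative integer vector; the sum of two or more such vectors cannot equal a unit vector.

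Next, I would invoke stack-freeness to pin down this reticulation. If~$v \ne a$ is a reticulation with~$\mu(v) = e_a$, then~$v$ has a unique child~$c$ and~$\mu(c) = \mu(v) = e_a$. If~$c \ne a$, then~$c$ is an internal node with~$\mu(c) = e_a$, hence by the previous paragraph is also a reticulation, making~$v$ and~$c$ adjacent reticulations---contradicting stack-freeness. Thus~$c = a$ and~$v = p_a$. Since the leaf~$a$ has in-degree~$1$, the parent~$p_a$ is unique, so~$\#\mu(a) \le 2$. The multiplicity is exactly~$2$ precisely when~$p_a$ is a reticulation (in which case~$\mu(p_a) = \mu(a)$), and~$1$ otherwise, in which case~$p_a$ must be a tree-node.

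It remains to handle the case~$\#\mu(a) = 1$ in more detail. Here~$p_a$ is a tree-node of out-degree at least~$2$, so~$\mu(p_a) - \mu(a)$ equals the sum of the~$\mu$-vectors of the remaining children of~$p_a$, which is a nonzero non-negative vector; hence~$\mu(p_a) > \mu(a)$. For minimality, note that any~$v \in V$ with~$\mu(v) > \mu(a)$ is a strict ancestor of~$a$. Because~$p_a$ is the unique parent of~$a$, every strict ancestor of~$a$ is an ancestor of~$p_a$, so~$\mu(v) \ge \mu(p_a)$. No~$v$ therefore satisfies~$\mu(v) < \mu(p_a)$, confirming that~$\mu(p_a)$ is minimal in the multiset~$\{\mu(v) : \mu(v) > \mu(a),\, v \in V\}$.

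The only delicate step is the reticulation-cascade argument in the second paragraph: stack-freeness is the precise hypothesis that prevents arbitrarily long chains of reticulations above~$a$ from all acquiring~$\mu$-vector~$e_a$, which would otherwise inflate the multiplicity beyond~$2$. Everything else reduces to decomposing a non-leaf~$\mu$-vector as the sum of its children's~$\mu$-vectors, combined with the semi-binary degree bound and the fact that~$a$'s unique parent forces every strict ancestor of~$a$ to factor through~$p_a$.
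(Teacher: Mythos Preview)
Your argument is correct. The paper presents this statement as an observation without proof, so there is nothing to compare against; your write-up supplies exactly the routine verification the authors left implicit, and the key ingredients you isolate---that a unit $\mu$-vector cannot arise as a sum of two nonzero $\mu$-vectors (forcing out-degree~$\le 1$), that stack-freeness blocks a second reticulation above~$a$ with the same $\mu$-vector, and that the unique in-edge of~$a$ forces every strict ancestor through~$p_a$---are precisely what makes the observation immediate in the semi-binary stack-free setting.
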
 

Now we will show that cherries and reticulated cherries are uniquely determined by $\mu(\net)$. 

\begin{lemma}\label{lem: cherry mu}
The pair \pair{} is a cherry in $\net$ if, and only if, $\mu(v)_a = \mu(v)_b$ for each~$\mu(v) \in \mu(\net)\setminus\{\mu(a),\mu(b)\}$.
\end{lemma}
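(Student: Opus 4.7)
The plan is to handle the two directions of the biconditional separately, with the backward implication carrying most of the work. For the forward direction, suppose $(b,a)$ is a cherry with common parent $p$. Since $\net$ is semi-binary and reticulations have out-degree $1$, $p$ must be a tree-node, and since $a,b$ are leaves with in-degree $1$, $p$ is their unique in-neighbor. For any node $v\notin\{a,b\}$: either $v$ is not an ancestor of $p$, in which case $v$ has no paths to $a$ or to $b$, so $\mu(v)_a = 0 = \mu(v)_b$; or $v$ is an ancestor of $p$, in which case every $v\pth a$ path factors as a $v\pth p$ path followed by the unique edge $pa$, giving $\mu(v)_a = P_{vp} = \mu(v)_b$.

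For the backward direction, I would first show that the hypothesis forces $\#\mu(a) = \#\mu(b) = 1$. Suppose $\#\mu(a) = 2$: by the observation, $p_a$ is then a reticulation with $\mu(p_a) = \mu(a)$, so $\mu(\net)$ contains two copies of the vector $\mu(a)$. Reading $\mu(\net)\setminus\{\mu(a),\mu(b)\}$ as a multiset difference that removes only a single copy of each, the residual multiset still contains $\mu(a)$, whose $a$-coordinate $1$ and $b$-coordinate $0$ contradict the hypothesis. Hence $\#\mu(a) = 1$ and $p_a$ is a tree-node with $\mu(p_a)\ne\mu(a)$; symmetrically $\#\mu(b) = 1$.

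Second, I would locate the other child $c$ of $p_a$ and show $c = b$. Since $p_a$ is a semi-binary tree-node, $\delta^+(p_a) = 2$, so $c$ is unique and $\mu(p_a) = \mu(a) + \mu(c)$. As $\mu(p_a)\ne\mu(a)$ and $\mu(p_a)\ne\mu(b)$ (the latter because $\mu(p_a)_a \ge 1$), the hypothesis applies to $\mu(p_a)$, yielding $1 + \mu(c)_a = \mu(c)_b$. A coordinate check rules out $\mu(c) = \mu(a)$, and the hypothesis rules out any $\mu(c)\notin\{\mu(a),\mu(b)\}$, forcing $\mu(c) = \mu(b)$; combined with $\#\mu(b) = 1$, this gives $c = b$ by the observation, so $p_a = p_b$ and $(b,a)$ is a cherry.

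The main subtle point is the multiplicity argument at the start of the backward direction: one must interpret $\mu(\net)\setminus\{\mu(a),\mu(b)\}$ as multiset subtraction so that residual copies contributed by reticulation parents persist and drive the contradiction. Once $\#\mu(a) = \#\mu(b) = 1$ is pinned down, the remaining steps reduce to a short local analysis of the other child of $p_a$ using the out-degree of a semi-binary tree-node.
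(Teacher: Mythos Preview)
Your proof is correct. The forward direction matches the paper's argument, with more detail.

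For the backward direction, you take a genuinely different route. The paper argues by contradiction: assuming $p_a \neq p_b$, it applies the hypothesis to $\mu(p_a)$ to get $\mu(p_a)_b = \mu(p_a)_a = 1$, so $p_a$ is an ancestor of $p_b$; symmetrically $p_b$ is an ancestor of $p_a$; in a DAG this forces $p_a = p_b$, a contradiction. You instead exploit the out-degree-$2$ structure of the semi-binary tree-node $p_a$: isolate its other child $c$, derive $1 + \mu(c)_a = \mu(c)_b$ from the hypothesis applied to $\mu(p_a)$, and conclude $\mu(c) = \mu(b)$, hence $c = b$ via $\#\mu(b) = 1$. Both arguments ultimately need $\mu(p_a) \notin \{\mu(a),\mu(b)\}$, which your explicit multiplicity step at the outset secures; the paper defers this point to a remark after the proof. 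The paper's mutual-ancestry argument is slightly more robust in that it does not rely on $p_a$ having exactly two children (and so would adapt to the non-binary setting with no change), whereas your approach makes essential use of the semi-binary assumption. On the other hand, your version is more self-contained, pinning down the multiplicity issue up front rather than as an afterthought.
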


\begin{proof}
Let us assume first that the pair \pair{} is a cherry in $\net$ with parent node $p$. Then, for each node $v \in V\setminus \{a,b\}$ the number of paths from $v$ to either $a$ or $b$ is equal to the number of paths~$P_{vp}$ from $v$ to $p$, so $\mu(v)_a = \mu(v)_b$.\medskip

For the other direction, we will use a proof by contradiction. Assume $\mu(v)_a = \mu(v)_b$ for each $\mu(v) \in \mu(\net)\setminus\{\mu(a),\mu(b)\}$. Now assume the pair \pair{} is not a cherry in $\net$. This means $a$ and $b$ must have different parents $p_a \neq p_b$. However, because $\mu(v)_a = \mu(v)_b$ for each $\mu(v) \in \mu(\net)\setminus\{\mu(a),\mu(b)\}$, we have that $\mu(p_a)_b = \mu(p_a)_a = 1$, therefore there is a path from $p_a$ to $b$. This means $p_a$ must be an ancestor of $p_b$. But also $\mu(p_b)_a = \mu(p_b)_b = 1$, therefore $p_b$ must also be an ancestor of $p_a$. In acyclic graphs two nodes cannot be ancestors of each other unless they are the same node, therefore $p_a = p_b$, but this contradicts our assumption that \pair{} is not a cherry.
\end{proof}

Note that the condition on the $\mu$-vectors implies that $\mu(a)$ and $\mu(b)$ have multiplicity 1 in $\mu(\net)$, because if for instance $\mu(b)$ has multiplicity greater than 1 in $\mu(\net)$, then $\mu(\net)\setminus\{\mu(a),\mu(b)\}$ would still contain a vector $\mu(b)$, for which $\mu(b)_a = 0 \neq 1 = \mu(b)_b$.

\begin{lemma} \label{lem: ret cherry mu}
  The pair \pair{} is a reticulated cherry in $\net$ with $b$ the reticulation leaf if, and only if, $\#\mu(a) = 1$, $\#\mu(b)=2$, $\mu(v)_b \geq \mu(v)_a$ for each $\mu(v) \in \mu(\net)\setminus\{\mu(a),\mu(b)\}$ and $\mu(\net)$ contains a vector $\mu(p_a) = \mu(a) + \mu(b)$.
\end{lemma}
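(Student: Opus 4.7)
The plan is to verify each direction of the biconditional separately, relying on \autoref{lem: stack free leaf mults} together with the semi-binary and stack-free assumptions on $\net$.

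For the forward direction, I would assume $(b,a)$ is a reticulated cherry with $b$ the reticulation leaf, so that $p_a$ is a tree-node of outdegree $2$ with children $a$ and $p_b$, and $p_b$ is a reticulation of outdegree $1$ with child $b$. \autoref{lem: stack free leaf mults} then yields $\#\mu(a)=1$ and $\#\mu(b)=2$, while $\mu(p_b)=\mu(b)$ since $p_b$ has the unique child $b$; hence $\mu(p_a)=\mu(a)+\mu(p_b)=\mu(a)+\mu(b)$, placing the required vector in $\mu(\net)$. For the inequality, I would observe that for any $v\neq a$ every path from $v$ to $a$ factors uniquely through $p_a$ and every path from $v$ to $b$ uniquely through $p_b$, so $\mu(v)_a=P_{vp_a}$ and $\mu(v)_b=P_{vp_b}$; each $v\pth p_a$ extends via the edge $p_ap_b$ to a distinct $v\pth p_b$, giving $\mu(v)_b\ge\mu(v)_a$. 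The sole offender is $v=a$, whose unique copy of $\mu(a)$ is precisely what is removed when forming $\mu(\net)\setminus\{\mu(a),\mu(b)\}$.

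For the backward direction, I would assume the four conditions and aim to show that $p_b$ is a child of $p_a$. By \autoref{lem: stack free leaf mults}, $p_a$ is a tree-node (hence of outdegree $2$ by semi-binarity), and the only two nodes carrying the vector $\mu(b)$ are $b$ and the reticulation $p_b$. Pick any $w\in V$ with $\mu(w)=\mu(a)+\mu(b)$, which exists by the fourth condition. The first step is to reduce to the case where $w$ is a tree-node: if $w$ is a reticulation then its unique child (outdegree $1$) is not a reticulation by stack-freeness and not a leaf (its $\mu$-vector is not a unit vector), so it is a tree-node with the same $\mu$-vector; iterating terminates by acyclicity. With $w$ now a tree-node of outdegree $2$, its children $c_1,c_2$ satisfy $\mu(c_1)+\mu(c_2)=\mu(a)+\mu(b)$, and since $\mu$-vectors are nonzero this forces $\{\mu(c_1),\mu(c_2)\}=\{\mu(a),\mu(b)\}$. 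The condition $\#\mu(a)=1$ makes the child with $\mu$-vector $\mu(a)$ equal to $a$ itself, so $w$ is the parent of $a$ and $w=p_a$. The other child has $\mu$-vector $\mu(b)$; it cannot be $b$ (whose parent is a reticulation, not the tree-node $p_a$), so it must be $p_b$. Hence $p_ap_b\in E$, and $(b,a)$ is a reticulated cherry with $b$ the reticulation leaf.

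The main obstacle is precisely the reduction to a tree-node carrying $\mu$-vector $\mu(a)+\mu(b)$: the fourth condition only guarantees that \emph{some} node has this vector, and \emph{a priori} a chain of reticulations above $p_a$ could all inherit it. Stack-freeness together with the outdegree-$1$ property of reticulations is precisely what lets me descend through any such chain without changing the $\mu$-vector until a tree-node is reached, after which the multiplicity conditions $\#\mu(a)=1$ and $\#\mu(b)=2$ pin the cherry structure down uniquely. The path-inequality condition does not appear to be needed in the backward direction, although it is consistent with the conclusion.
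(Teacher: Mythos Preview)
Your proposal is correct. The forward direction matches the paper's argument essentially word for word. In the backward direction you take a slightly different route: the paper first invokes the minimality clause of \autoref{lem: leaf tree parent} to recognise the vector $\mu(a)+\mu(b)$ as the $\mu$-vector of the \emph{actual} parent of~$a$, and then argues by contradiction about its second child; you instead start from an arbitrary node $w$ carrying this vector, pass to its child once if $w$ happens to be a reticulation (one step suffices by stack-freeness, so no genuine iteration is needed), and then use $\#\mu(a)=1$ to force $w=p_a$ directly. Both arguments land on the same endgame---the second child of $p_a$ has $\mu$-vector $\mu(b)$, and since $\#\mu(b)=2$ the only carriers of that vector are $b$ and $p_b$, so it must be $p_b$. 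Your remark that the path-inequality hypothesis is not used in the backward direction is accurate; the paper nominally appeals to it to obtain $\mu(p_a)_b\geq 1$, but this already follows from $\mu(p_a)=\mu(a)+\mu(b)$, so the condition is indeed redundant for that implication.
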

\begin{proof}
First let us assume \pair{} is a reticulated cherry in $\net$ with $b$ the reticulation leaf. Then the parent of $a$ is a tree node $p_a$ and, by \autoref{lem: stack free leaf mults}, $\mu(a)$ has multiplicity 1 in the multiset. Also, the parent of $b$ is a reticulation $p_b$, therefore by \autoref{lem: stack free leaf mults}, $\mu(b)$ has multiplicity 2 in the multiset. Furthermore, $p_a$ is the parent of $p_b$ and thus an ancestor of $b$. Therefore, for each path $v \pth p_a$ with $v \in V\setminus \{a,b\}$, there is at least one path $v \pth b$ via $p_a$. Furthermore, the number of paths from $v$ to $a$ equals the number of paths from $v$ to $p_a$. This means that $\mu(v)_b \geq P_{vp_a} = \mu(v)_a$ for any node $v \in V\setminus \{a,b\}$. Finally, note that $\mu(p_a) = \mu(a) + \mu(p_b) = \mu(a) + \mu(b)$. This proves the first direction. \medskip

For the second direction, we will use proof by contradiction. Let us assume, $\mu(a)$ has multiplicity 1 in the multiset, $\mu(b)$ has multiplicity 2 in the multiset, $\mu(v)_b \geq \mu(v)_a$ for each $\mu(v) \in \mu(\net)\setminus\{\mu(a),\mu(b)\}$ and $\mu(\net)$ contains $\mu(p_a) = \mu(a) + \mu(b)$. Now assume \pair{} is not a reticulated cherry. Note $\mu(p_a) > \mu(a)$ and the only $\mu$-vectors $\mu(v)$ with $\mu(v) < \mu(p_a)$ are $\mu(a)$ and $\mu(b)$, thus $\mu(p_a)$ is minimal in $\{\mu(v) : \mu(v) > \mu(a), v \in V\}$. Then, by \autoref{lem: leaf tree parent}, $\mu(p_a)$ belongs to the parent of $a$. Furthermore, by \autoref{lem: stack free leaf mults}, we know $b$ has a reticulation parent $p_b$. Therefore the parent $p_a$ of $a$ is not a parent of $p_b$ the parent of $b$, because otherwise \pair{} would be a reticulated cherry. But there must be a path from $p_a$ to $b$ because $\mu(p_a)_b \geq \mu(p_a)_a = 1$. This means there must be other nodes on the path from $p_a$ to $p_b$. Let $c$ then be the child of $p_a$, then $\mu(p_a) = \mu(a) + \mu(c) = \mu(a) + \mu(b)$. Subtracting $\mu(a)$ gives $\mu(c) = \mu(b)$. This means $c$ is either the leaf $b$, which is not possible, or it is $p_b$, which we assumed it was not, or it is some other reticulation which has a child with the same $\mu$-vector equal to $\mu(b)$. Its child cannot be the leaf $b$, because by assumption $c$ is not $p_b$ and its child cannot be $p_b$ because the network is stack-free. Therefore, its child must be a tree-node with the same $\mu$-vector as the leaf $b$, which is not itself leaf $b$ or $p_b$. But, then $\#\mu(b) = 3$, which contradicts our assumption.
\end{proof}

Let us now define cherries and reticulated cherries as subsets of the original and modified $\mu$-representation as follows naturally from \Cref{lem: cherry mu} and \Cref{lem: ret cherry mu}.
\chris{
\begin{definition}\hfill
\begin{itemize}
    \item The pair \pair{} is a cherry in $\mu(\net)$ if $\mu(v)_a = \mu(v)_b$ for each $\mu(v) \in \mu(\net) \setminus \{\mu(a),\mu(b)\}$. 
    \item The pair \pair{} is a cherry in $\mux(\net)$ if $\mux(v)_a = \mux(v)_b$ for each $\mux(v) \in \mux(\net) \setminus \{\mux(a),\mux(b)\}$.
    \item The pair \pair{} is a reticulated cherry in $\mu(\net)$ with $b$ the reticulation leaf if $\#\mu(a) = 1$, $\#\mu(b)=2$, $\mu(v)_b \geq \mu(v)_a$ for each $\mu(v) \in \mu(\net)\setminus\{\mu(a),\mu(b)\}$ and $\mu(\net)$ contains a vector $\mu(p_a) = \mu(a) + \mu(b)$.
    \item The pair \pair{} is a reticulated cherry in $\mux(\net)$ with $b$ the reticulation leaf if the following hold:
    \begin{itemize}
        \item there does not exist $\mux(v) \in \mux(\net)$ with $\mux(v)_i = \mux(a)_i$ for $i \in X$,
        \item there exists a vector $\mux(p_b) \in \mux(\net)$ with $\mux(v)_i = \mux(b)_i$ for $i \in X$,
        \item $\mux(v)_b \geq \mux(v)_a$ for each $\mux(v) \in \mux(\net)\setminus\{\mux(a),\mux(b)\}$ and
        \item $\mux(\net)$ contains a vector $\mux(p_a)$ s.t. $\mux(p_a)_i = \mux(a)_i + \mux(b)_i$ for $i \in X$.
    \end{itemize}
    It is simple if $\mux(\net)$ contains $\mux(p_b) = [2] \oplus \mu(b)$ and otherwise it is complex.
\end{itemize}
\end{definition}
}
Note that if \pair{} is a cherry or a reticulated cherry in $\net$, it is a cherry or reticulated cherry in $\mu(\net)$ and $\mux(\net)$. Furthermore, if \pair{} is a reticulated cherry, it is simple if, and only if, $\mux(\net)$ contains $\mux(p_b) = [2] \oplus \mu(b)$, because then $\mux(p_b)_0 = 2$, for $p_b$ the parent of $b$. Otherwise, it is complex. If \pair{} is a cherry or a reticulated cherry in $\net$ we say that \pair{} is a reducible pair in $\net$, in $\mu(\net)$ and in $\mux(\net)$.\medskip

\begin{figure}
    \centering
    \includegraphics[width = 0.7\textwidth]{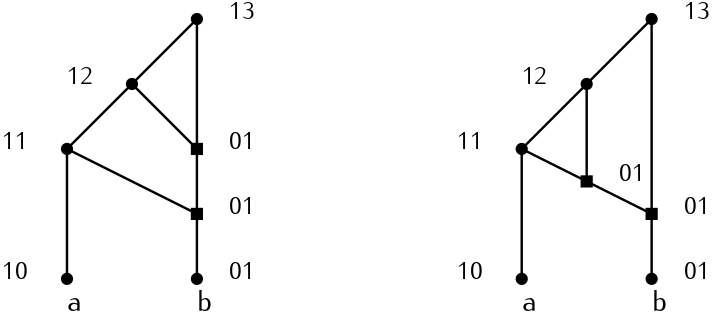}
    \caption{Two networks which are not stack-free. Although they have the same $\mu$-representation, they are not isomorphic. In the first network $(b,a)$ is a reticulated cherry while in the second network it is not. The first network is orchard, as it can be reduced by the sequence~$(b,a)(b,a)(b,a)$. The second network is not orchard.}
    \label{fig: stack-free counter}
\end{figure}

It is important to mention here that we restrict to the class of stack-free networks to be able to identify reticulated cherries in the $\mu$-representation.
Without the stack-free assumption, the conditions in \autoref{lem: ret cherry mu} are not sufficient to determine whether \pair{} is a reticulated cherry in general. In \autoref{fig: stack-free counter}, two non-isomorphic networks are displayed which have the same $\mu$-representation. To see the non-isomorphism, the left network contains a reticulated cherry while the right network does not. Clearly, $\mu$-representations do not suffice in identifying reticulated cherries if the stack-free condition is not imposed. 
In light of this, 
Cardona et al. proposed a different \emph{extended $\mu$-representation} 
by considering, for every vertex, the number of paths to reticulations, in addition to the $\mu$-vectors.
While this sufficed to prove that binary orchard networks are encoded by extended $\mu$-representations, the encoding result does not easily translate to the semi-binary network case.
The networks in \autoref{fig: counterYuki} form a counterexample, they have the same extended $\mu$-representation, while they are not isomorphic.

\subsection{Reconstructing orchard networks}
Now let us define cherry and reticulated cherry reductions in $\mux(\net)$. 
We first define operations on vectors.
Recall that to remove a vector from a multiset means to lower the multiplicity by 1 to a minimum of 0. If a vector has multiplicity 0 in a multiset we say the multiset does not contain the vector. 
\medskip

Let \pair{} be a cherry in $\net$. We define the cherry reduction of \pair{} in $\mux(\net)$ as the following operations:
\begin{enumerate}
    \item Remove \chris{the unit vector} $\mux(b)$ from $\mux(\net)$.
    \item Remove \chris{the vector} $\mux(p_{ab}) = [1] \oplus (\mu(a) + \mu(b))$ from $\mux(\net)$.
    \item \chris{For each $\mux(v)$ in $\mux(\net)$, replace it with a vector $(\mux(v)_i)_{i\in S}$ where $S$ is the set $\{0\}\cup X\setminus \{b\}$.}
\end{enumerate}
Note, that because \pair{} is a cherry, the parent $p_{ab}$ of $a$ and $b$ will have $\mux$-vector $[1] \oplus (\mu(a) + \mu(b))$ before reduction and should be suppressed when reducing \pair{}. Note also that none of the in-degrees of any nodes have changed. Now let \pair{} be a simple reticulated cherry in $\mux(\net)$, we define the simple reticulated cherry reduction of \pair{} as the following operations:
\begin{enumerate}
    \item Remove \chris{the vector} $\mux(p_{a}) =[1] \oplus (\mu(a) + \mu(b))$ from $\mux(\net)$.
    \item Remove \chris{the vector} $\mux(p_b) = [2] \oplus \mu(b)$ from $\mux(\net)$.
    \item For each $\mux(v) \in \mux(\net) \setminus \mux(a)$ \chris{replace it with the vector $\mux(v) - \mux(v)_a  ([0] \oplus \mu(b))$.} 
\end{enumerate}
Note that tree-nodes have in-degree 1 and so, by \autoref{lem: ret cherry mu}, $\mux(p_a)$ is the $\mux$-vector of the parent of $a$, which should be suppressed when reducing \pair{}. Furthermore, because \pair{} is a simple reticulated cherry, the parent $p_b$ of $b$ has in-degree 2 before reducing and should be suppressed as well. Furthermore, because the edge between $p_a$ and $p_b$ there are no longer any paths to $b$ via $p_a$. Note that the in-degrees of any nodes that are not suppressed have not changed. Finally, we define the complex reticulated cherry reduction as follows:
\begin{enumerate}
    \item Remove \chris{the vector} $\mux(p_{a}) = [1] \oplus (\mu(a) + \mu(b))$ from $\mux(\net)$.
    \item Let $\mux(p_b) \in \mux(\net)$ be the vector with $\mux(p_b)_0 > 1$ and $\mux(p_b) = \mux(p_b)_0 \oplus \mu(b)$ and lower $\mux(p_b)_0$ by 1.
    \item For each $\mux(v) \in \mux(\net) \setminus \mux(a)$ \chris{replace it with the vector $\mux(v) - \mux(v)_a  ([0] \oplus \mu(b))$.} 
\end{enumerate}
For this reduction we keep the $\mux$-vector of the parent $p_b$ of $b$, because it is not suppressed when \pair{} is reduced in $\net$, because it has in-degree greater than 1 after reduction, but we do lower its in-degree by 1. Note that, by \autoref{lem: stack free leaf mults}, in stack-free networks there can only be one non-leaf node with $\mu$-vector equal to $\mu(b)$ and therefore $\mux(p_b)$ has multiplicity 1 in $\mux(\net)$. Finally note that the in-degrees of any other nodes, besides $p_b$ have not changed.

\begin{lemma} \label{lem: equivalence mu reduction}
    Let \pair{} be a reducible pair in $\mux(\net)$, the multiset generated by reducing \pair{} in $\mux(\net)$ is the $\mux$-representation of the network generated by reducing \pair{} in $\net$. 
\end{lemma}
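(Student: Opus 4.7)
The plan is to prove the lemma by a direct case analysis on the three possible types of reducible pair — cherry, simple reticulated cherry, and complex reticulated cherry — and in each case show that the multiset operations defined on $\mux(\net)$ faithfully mirror what happens to the modified $\mu$-vectors of the network when the pair is reduced. Let $\net'$ denote the network obtained by reducing $(b,a)$ in $\net$; the goal in each case is to match $\mux(\net')$ node-by-node with the multiset produced by the prescribed operations.

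First I would fix the pair $(b,a)$ and, in each case, enumerate exactly which nodes of $\net$ are deleted or suppressed by the network-level reduction. For a cherry, the common parent $p_{ab}$ becomes elementary (since semi-binarity forces it to have out-degree $2$) and is suppressed together with the deleted leaf $b$. For a simple reticulated cherry both $p_a$ (now of out-degree $1$) and $p_b$ (now of in-degree $1$) become elementary upon deletion of the arc $p_a p_b$ and are suppressed. For a complex reticulated cherry only $p_a$ is suppressed, while $p_b$ persists as a reticulation with in-degree decreased by exactly one. In each case the nodes removed from $\net$ correspond precisely to the vectors removed from $\mux(\net)$; \autoref{lem: stack free leaf mults} together with stack-freeness guarantees that in the complex case the vector $\mux(p_b)$ to be decremented is unambiguously identified as the unique non-leaf vector of the form $\indeg{p_b}\oplus\mu(b)$.

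Next I would verify that the modified $\mu$-vectors of the surviving vertices transform correctly. The in-degrees of all such vertices are unchanged, so the first coordinate is preserved; for the cherry reduction the $\mu$-vectors of surviving vertices simply have their $b$-coordinate dropped, which is routine. For the two reticulated-cherry reductions the key identity is that, for any surviving node $v\ne a$, the number of paths $v\pth b$ that use the deleted arc $p_a p_b$ is exactly $\mu(v)_a$. This follows because, by semi-binarity, $a$ has a unique parent $p_a$ so the number of paths $v\pth p_a$ equals $\mu(v)_a$, and $p_b$ has $b$ as its unique child; hence deleting the arc decreases $\mu(v)_b$ by exactly $\mu(v)_a$ while leaving all other coordinates untouched. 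This exactly matches the subtraction of $\mux(v)_a([0]\oplus\mu(b))$ prescribed in the reduction, and the exclusion of $\mux(a)$ is consistent with the fact that the leaf $a$ itself is unaffected by the reduction.

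The main obstacle is the careful path-counting bookkeeping: one must check that no surviving node's $\mu$-vector changes outside the $b$-coordinate, and that every affected coordinate is accounted for by the operation. Once the semi-binary structure at $p_a$ and the uniqueness of the arc from $p_b$ to $b$ are used to establish $P_{v p_a}=\mu(v)_a$, and the stack-free property is used in the complex case to pin down the vector $\mux(p_b)$ whose indegree coordinate must be decremented, the remaining verifications reduce to direct substitution.
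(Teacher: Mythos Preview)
Your proposal is correct and follows essentially the same route as the paper's proof: a three-way case split (cherry, simple reticulated cherry, complex reticulated cherry), identification of exactly which nodes are suppressed in each case, and the same path-counting identity $P_{vp_a}=\mu(v)_a$ to show that the $b$-coordinate of every surviving $\mu$-vector drops by $\mu(v)_a$ in the reticulated-cherry cases. The paper's appendix argument is organized the same way and relies on the same observations about semi-binarity and stack-freeness that you invoke.
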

The proof of this lemma is given in the appendix. Using these three lemmas we show the following isomorphism result for any two networks $\net_1$ and $\net_2$.
\begin{theorem} \label{thm: orchard encoding}
    Let $\net_1$ be semi-binary stack-free orchard and let $\net_2$ be semi-binary stack-free. Then,
    \begin{equation*}
        \mux(\net_1) = \mux(\net_2) \text{ if and only if } \net_1 \cong \net_2 
    \end{equation*}
\end{theorem}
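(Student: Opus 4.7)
The forward direction is immediate: any isomorphism $f:\net_1\to\net_2$ fixes the leaves and preserves the edge structure, hence preserves both in-degrees and path counts to each leaf, so $\mux(f(v))=\mux(v)$ for every node $v$ and the two modified $\mu$-representations coincide. For the reverse direction the plan is an induction on $|X|+h(\net_1)$, following the reduce--invoke--lift template standard in encoding proofs for orchard-type classes.

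The base case is $|X|=1$, where each network must be a single labeled leaf and the two are trivially isomorphic. For the inductive step, because $\net_1$ is orchard with at least two leaves it contains a reducible pair \pair{}. Using \Cref{lem: cherry mu} and \Cref{lem: ret cherry mu}, both the existence of this pair and whether it is a cherry or a reticulated cherry (and which leaf is the reticulation leaf) are determined purely by $\mu(\net_1)$, which is the projection of $\mux(\net_1)=\mux(\net_2)$ obtained by discarding the in-degree coordinate. Since those two lemmas are if-and-only-if characterizations valid for any semi-binary stack-free network and $\net_2$ lies in that class, \pair{} is the same type of reducible pair in $\net_2$; whether a reticulated cherry is simple or complex is then read off from the in-degree coordinate $\mux(p_b)_0$, which is encoded in $\mux$. \Cref{lem: equivalence mu reduction} now ensures that performing the corresponding reduction on $\mux(\net_i)$ yields the modified $\mu$-representation $\mux(\net_i')$ of the network $\net_i'$ obtained by reducing \pair{} in $\net_i$, and hence $\mux(\net_1')=\mux(\net_2')$. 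Reducing a reducible pair strictly decreases $|X|+h$, preserves both semi-binarity and stack-freeness, and preserves the orchard property of $\net_1'$ (by Corollary 4.2 of \cite{erdHos2019class} as recalled in the preliminaries), so the induction hypothesis gives $\net_1'\cong\net_2'$.

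The main obstacle is lifting an isomorphism $f:\net_1'\to\net_2'$ back to an isomorphism $\net_1\cong\net_2$. The essential observation is that the inverse of a reducible-pair reduction is deterministic given the labels $a$ and $b$: in the cherry case one reintroduces a common parent of $a$ and $b$ above the current parent of $a$; in the simple reticulated cherry case one subdivides the incoming edges of the current parents of $a$ and $b$ to reintroduce $p_a$ and $p_b$ and adds the reticulation edge $p_ap_b$; in the complex case one only subdivides the incoming edge of the current parent of $a$ and adds a new reticulation edge into the unique node whose original modified $\mu$-vector matches $\mux(p_b)$, uniqueness following from \Cref{lem: stack free leaf mults}. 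Because $f$ fixes $a$ and $b$ pointwise, it matches the attachment sites across $\net_1'$ and $\net_2'$, so the canonical bijection between the newly reintroduced vertices extends $f$ to a leaf-preserving edge-isomorphism $\tilde f:\net_1\to\net_2$, completing the induction.
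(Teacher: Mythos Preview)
Your argument is correct and follows the same reduce--match--lift strategy as the paper's proof: identify a reducible pair from the (modified) $\mu$-representation via \Cref{lem: cherry mu} and \Cref{lem: ret cherry mu}, reduce on both sides using \Cref{lem: equivalence mu reduction}, and then lift the resulting isomorphism back through the determinism of the inverse reduction. The only cosmetic differences are that you organise the induction on $|X|+h(\net_1)$ one step at a time whereas the paper reduces all the way to a single leaf before rebuilding, and you spell out the lifting step explicitly while the paper defers it to Corollary~1 of \cite{janssen2021cherry}.
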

Note that this means that semi-binary stack-free orchard networks are encoded by their modified $\mu$-representation in the class of semi-binary stack-free networks.
\begin{proof}
Suppose we are given a semi-binary stack-free orchard network $\net_1$, and a semi-binary stack-free network $\net_2$ with $\mux(\net_1) = \mux(\net_2)$. Note that $\mux(\net_1) = \mux(\net_2)$ implies that also  $\mu(\net_1) = \mu(\net_2)$. Then, because $\net_1$ is orchard it must contain a reducible pair of leaves \pair{}. If the pair \pair{} is a cherry, then by \autoref{lem: cherry mu} it must be a cherry in $\mu(\net_1)$. Therefore, it is also a cherry in $\mu(\net_2)$ and thus again by \autoref{lem: cherry mu} it is a cherry in $\net_2$. In this case, if $\net_1'$ is the network generated by reducing \pair{} in $\net_1$, and $\mux(\net_1')$ is the $\mu$-representation of this network, then by \autoref{lem: equivalence mu reduction}, $\mux'(\net_1) = \mux(\net_1')$, where  $\mux'(\net_1)$ is the multiset generated by reducing \pair{} in $\mux(\net_1)$. Thus, because there is only a single way of reducing a cherry in the $\mux$-representation, we have that $\mux'(\net_1) = \mux'(\net_2)$ and by \autoref{lem: equivalence mu reduction}, $\mux'(\net_2)$ is the $\mux$-representation $\mux(\net_2')$ of the network generated by reducing \pair{} in $\net_2$. To conclude, after reducing the cherry \pair{} in both $\net_1$ and $\net_2$, the two networks still have the same $\mux$-representation. \medskip

Alternatively, if \pair{} is a reticulated cherry in $\net_1$ then by \autoref{lem: ret cherry mu}, it is a reticulated cherry in $\mu(\net_1)$. Therefore, by the same argument as before, it is a reticulated cherry in $\net_2$. Furthermore, because $\mux(\net_1) = \mux(\net_2)$, if \pair{} is simple in $\net_1$ then it is simple in $\net_2$ and otherwise it is complex in both networks. As for each type of reticulated cherry there is a single way of reducing it in the $\mux$-representation, we again see that both networks must have the same modified $\mux$-representation after reduction of \pair{}.\medskip

Moreover, because $\net_1$ is orchard, the network will still be orchard after reducing the pair \pair{}. Therefore, it will again contain a reducible pair which is also a reducible pair in $\net_2$. It follows, that any sequence $S = s_1,s_2,\ldots,s_n$ of reducible pairs $s_i$, which reduces $\net_1$ (to a network on a single leaf), will also be a sequence of reducible pairs for $\net_2$. Furthermore, because $\net_1$ and $\net_2$ start out with the same set of leaves and each cherry reduction removes the same leaf from both networks, $S$ will also reduce $\net_2$ to a network on a single leaf, and it will be the same leaf. We will show that $\net_1$ and $\net_2$ are isomorphic by an inductive proof. Let $\net_1^{(i)}$ and $\net_2^{(i)}$ be the networks generated from $\net_1$ and $\net_2$ by performing reductions $s_1$ up to $s_i$, and let $\net_1^{(0)} = \net_1$ and $\net_2^{(0)} = \net_2$. And let us assume that the networks $\net_1^{(i)}$ and $\net_2^{(i)}$ are isomorphic. This is true for the base case where $i = n$, such that $\net_1$ and $\net_2$ are both reduced to a network on a single leaf by the entire sequence $s_1,s_2,\ldots,s_n$. Now take the networks $\net_1^{(i-1)}$ and $\net_2^{(i-1)}$ generated by performing reductions $s_1,s_2,\ldots,s_{i-1}$. By Corollary 1 of \cite{janssen2021cherry}, there is exactly one way to generate $\net_1^{(i-1)}$ and $\net_2^{(i-1)}$ from $\net_1^{(i)}$ and $\net_2^{(i)}$, respectively. 
From this it follows that, because $\net_1^{(i)}$ and $\net_2^{(i)}$ are isomorphic, we also have that $\net_1^{(i-1)}$ and $\net_2^{(i-1)}$ are isomorphic. Finally, because we have shown that the networks are isomorphic for $i = n$ and that they are isomorphic for $i = j-1$ if they are isomorphic for $i = j$, we can conclude that they are isomorphic for $i = 0$. This means $\net_1$ and $\net_2$ are isomorphic.
\end{proof}

\subsection{The \texorpdfstring{$\mux$}{m}-distance as a metric}\label{subsec:mux_metric}
By the definition as set out in \Cref{sec: preliminaries}
the symmetric difference between two multisets is empty, if, and only if, they are the same multiset. Furthermore, because the $\mu$-representation of a network is well-defined, if two networks are isomorphic then their $\mu$-representations are equal. If however, two networks have equal $\mu$-representation, this does not necessarily mean they are isomorphic, see the examples in \autoref{fig: counterYuki} and \autoref{fig: stack-free counter}. \autoref{thm: orchard encoding} shows that given two semi-binary stack-free networks with equal modified $\mu$-representations, if one of them is orchard, then they are isomorphic.
Let us define the \emph{$\mux$-distance} on networks~$\net_1$ and~$\net_2$ by taking the cardinality of the symmetric difference of modified $\mu$-representations, i.e., $d_{\mux}(\net_1, \net_2) = |\mux(\net_1) \triangle \mux(\net_2)|$.
By \autoref{thm: orchard encoding}, this is a metric on the class of semi-binary stack-free orchard networks.

\subsection{Non-binary stack-free orchard networks}
\begin{figure}
    \centering
    \includegraphics[width=0.9\textwidth]{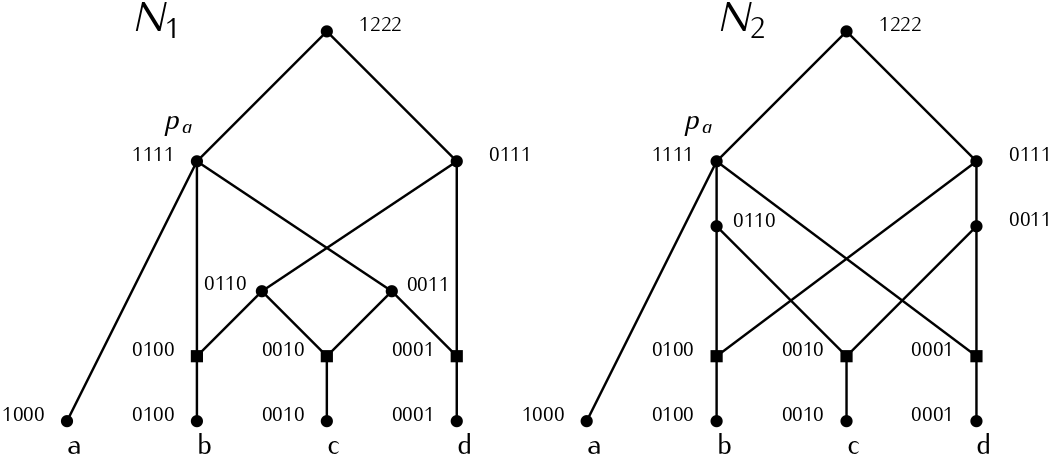}
    \caption{The networks $\net_1$ and $\net_2$ are both non-binary stack-free orchard with the same $\mux$-representation and equal out-degrees, however they are non-isomorphic. In $\net_1$, \pair{} is a reticulated cherry, while in $\net_2$ it is not. Similarly in $\net_2$, $(d,a)$ is a reticulated cherry, while in $\net_1$ it is not. This situation arises because the parent $p_a$ of leaf $a$ has out-degree 3. Notably $p_a$ is the only node with degree greater than 3.}
    \label{fig: Non-binary reticulated cherry counter example}
\end{figure}
In this section we will discuss whether our encoding results for the modified $\mu$-representation extend to non-binary orchard networks. \chris{We claim that} \autoref{lem: stack free leaf mults} regarding the parents of leaf nodes, the multiplicity of the $\mu$-vectors of leaf nodes, and 
the $\mu$-vector of the tree-node parent of a leaf node, holds for non-binary stack-free networks without any further modification. Similarly, we \chris{claim} \autoref{lem: cherry mu} holds for non-binary stack-free networks and therefore cherries are uniquely determined by the $\mu$-representation for non-binary stack-free networks.\medskip

However, \autoref{lem: ret cherry mu} does not have an obvious equivalent for non-binary networks. When considering whether the leaf pair \pair{} is a reticulated cherry, we can no longer require the existence of $\mu(p_a) = \mu(a) + \mu(b)$, because the parent of $a$ may have more children than just $a$ and $p_b$. By \autoref{lem: leaf tree parent} we can find the $\mu$-vector of the parent of $a$, but if it is not equal to $\mu(a) + \mu(b)$, then it is impossible to determine whether there are any other nodes on the path $p_a\pth p_b$. \autoref{fig: Non-binary reticulated cherry counter example} displays two non-binary stack-free orchard networks with the same $\mux$-representation, which are not isomorphic. Because the parent $p_a$ of $a$ has three children and its $\mu$-vector $\mu(p_a)$ is equal to the sums of the $\mu$-vectors of 2 different sets of 3 nodes ($1111=1000 + 0100 + 0011=1000 + 0110 + 0001$) of which all those that differ belong to reticulations, it is impossible to tell which set belongs to the children of $p_a$ and therefore whether \pair{} is a reticulated cherry or not. As a consequence of the \chris{counter}example given in \autoref{fig: Non-binary reticulated cherry counter example} we obtain the following:

\begin{theorem} \label{thm: non-binary orchard not encoded}
    Non-binary stack-free orchard networks are not encoded by their $\mux$-representation.
\end{theorem}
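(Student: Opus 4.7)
The statement is a negative encoding result, so the natural plan is to certify the counterexample already displayed in the figure and to verify that it satisfies every hypothesis of the theorem. Concretely, I would let $\net_1$ and $\net_2$ be the two networks shown in the figure on leaf set $\{a,b,c,d\}$ and carry out four checks: (i) both are non-binary phylogenetic networks (the parent $p_a$ of $a$ has out-degree $3$); (ii) both are stack-free (no reticulation is a child of a reticulation, which is read off directly from the picture); (iii) both are orchard (by exhibiting, in each network, a sequence of cherry and reticulated-cherry reductions that collapses it to a single leaf); (iv) $\mux(\net_1)=\mux(\net_2)$; and (v) $\net_1\not\cong\net_2$.

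For the orchard check, I would write down an explicit reduction sequence for each network and use the alternative characterisation of orchard networks as those admitting a complete sequence of reducible pairs. For the equality of modified $\mu$-representations, I would tabulate $\mux(v)=\indeg{v}\oplus\mu(v)$ for every internal node and leaf in both networks and match them elementwise (with multiplicities). The key numerical identity driving the construction is the one highlighted in the paragraph preceding the theorem: the vector $\mu(p_a)=1111$ decomposes as $1000+0100+0011$ in $\net_1$ and as $1000+0110+0001$ in $\net_2$, and the two extra summands are the $\mu$-vectors of reticulations; all of these reticulation $\mu$-vectors appear with identical multiplicities in both $\mux$-representations because their children (the leaves $b,c,d$) are the same in both networks. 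Checking that all other vertices agree is then a routine bookkeeping exercise.

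For non-isomorphism, I would use the observation already present in the discussion: $(b,a)$ is a reticulated cherry in $\net_1$ but not in $\net_2$, since in $\net_2$ the parent $p_b$ of $b$ is not a child of $p_a$. Because the property of admitting a given leaf pair as a reticulated cherry is invariant under isomorphism (both the labeling of leaves and the parent-child relation are preserved by $f$), this rules out $\net_1\cong\net_2$. The theorem then follows: $\net_1$ and $\net_2$ are non-binary, stack-free, orchard, non-isomorphic, yet $\mux(\net_1)=\mux(\net_2)$, so the $\mux$-representation fails to encode the class.

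The only mildly delicate step is (iv), the equality of the modified $\mu$-representations, because one must be careful with multiplicities at the reticulations (recall that in a stack-free network, a reticulation on leaf $\ell$ shares its $\mu$-vector with $\ell$, so the multiplicities of the leaf vectors $\mux(b)$, $\mux(c)$, $\mux(d)$ depend on which leaves are reticulation leaves in each network). The check amounts to observing that the same three leaves are reticulation leaves in $\net_1$ and $\net_2$ and that the in-degrees of the corresponding reticulations coincide; once this is verified, every other node is a non-elementary tree-node whose $\mux$-vector can be matched directly across the two networks. No further technical difficulty arises.
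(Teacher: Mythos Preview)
Your proposal is correct and follows essentially the same approach as the paper: the paper's ``proof'' is simply the sentence ``As a consequence of the counterexample given in \autoref{fig: Non-binary reticulated cherry counter example} we obtain the following'', so the content of the argument is exactly the verification you outline. Your write-up is in fact more explicit than the paper's, spelling out the five checks (non-binary, stack-free, orchard, equal $\mux$-representations, non-isomorphic) and isolating the combinatorial reason the construction works (the two decompositions $1111=1000+0100+0011=1000+0110+0001$ with the differing summands belonging to reticulations).

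One small caution on step~(iv): your phrasing ``the same three leaves are reticulation leaves'' presupposes that every reticulation in the example is the parent of a leaf, but two of the children of $p_a$ carry the non-unit $\mu$-vectors $0011$ and $0110$, so some reticulations sit above internal tree-nodes rather than directly above leaves. This does not affect the argument---you still match $\mux$-vectors node by node---but when you carry out the tabulation you should phrase the bookkeeping in terms of matching reticulation $\mu$-vectors and in-degrees across the two networks, not in terms of reticulation leaves alone.
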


Note that nodes with the same $\mux$-vectors in $\net_1$ and $\net_2$ also have the same out-degrees. This means that the logical extension of the modified $\mu$-representation which adds the out-degrees of nodes does not lead to an encoding result for the class of non-binary stack-free orchard networks. 

\section{Encoding by \texorpdfstring{$\mu$}{m}-representations} \label{sec: strongly ret-vis encoding} 

In \cite{cardona2008comparison}, Cardona et al. showed that the $\mu$-representation serves as an encoding for non-binary tree-child phylogenetic networks. We have shown that for more general networks, even if they are stack-free semi-binary, two non-isomorphic networks may have the same $\mu$-representation. See \autoref{fig: counterYuki} for an example. \chris{We have also shown that for two semi-binary stack-free networks $\net, \net'$ with $\mux(\net) = \mux(\net')$ if either of them is orchard then they are isomorphic (\autoref{thm: orchard encoding}). Therefore, if we can find a subclass where $\mu(\net) = \mu(\net')$ implies that the nodes in $\net$ and $\net'$ have the same in-degrees} then we can show that equivalent $\mu$-representations imply isomorphism as long as one of the networks is orchard. This would give us an encoding result for this subclass.
\chris{In this section we set out to show the following main result.}

\begin{theorem*}[\ref{thm: strong ret vis encoding}]
    Let $\net_1$ and $\net_2$ be two semi-binary stack-free networks with $\mu(\net_1) = \mu(\net_2)$. Let $\net_1$ be strongly reticulation-visible and orchard. Then, $\net_1 \cong \net_2$.
\end{theorem*}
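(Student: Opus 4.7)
The plan is to leverage \autoref{thm: orchard encoding}, which already gives isomorphism whenever two semi-binary stack-free networks share the same \emph{modified} $\mu$-representation and one of them is orchard. Since only the unmodified $\mu$-representations are assumed equal here, the task reduces to upgrading $\mu(\net_1)=\mu(\net_2)$ to $\mux(\net_1)=\mux(\net_2)$, using strong reticulation-visibility of $\net_1$.

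First, I would apply \autoref{thm: strong ret vis in-degrees}. Since $\net_1$ is strongly reticulation-visible and $\mu(\net_1) = \mu(\net_2)$, this theorem yields $\mux(\net_1) = \mux(\net_2)$. The mechanism, traced through the section, is that $\mu(\net)$ together with the reticulation in-degrees satisfies the linear system of \autoref{thm: indeg formula}; \autoref{thm: bridge-node mu-rep} and \autoref{lem: lowest ret above bridge} show that bridges and their lowest reticulation ancestors are recoverable purely from $\mu(\net)$; and strong reticulation-visibility supplies the tree-paths from each reticulation child to a bridge that force the system to have a unique solution, which must then agree across $\net_1$ and $\net_2$. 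In-degrees of leaves and internal tree-nodes are determined directly by the $\mu$-representation (via \autoref{lem: stack free leaf mults} together with the observation that any non-root, non-leaf tree-node has in-degree one), so matching modified $\mu$-vectors coordinate-wise follows once the reticulation in-degrees are pinned down.

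Second, with $\mux(\net_1)=\mux(\net_2)$ in hand, $\net_1$ semi-binary stack-free orchard, and $\net_2$ semi-binary stack-free, \autoref{thm: orchard encoding} applies directly to give $\net_1 \cong \net_2$. This second step is essentially a one-line invocation.

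The main obstacle is the first step: ensuring that the uniqueness argument underlying \autoref{thm: indeg ret vis} and the bridge-identification lemmas transfer to $\net_2$ even though $\net_2$ itself is not hypothesized to be strongly reticulation-visible. The way through is to observe that $\mu(\net_1) = \mu(\net_2)$ forces $\net_2$ to share with $\net_1$ all the features (bridges, lowest reticulation ancestors of bridges, and the induced system of equations) that are characterized in terms of the $\mu$-representation alone; hence the same unique in-degree assignment governs both networks, yielding the required $\mux$-equality and allowing the appeal to \autoref{thm: orchard encoding} to close the argument.
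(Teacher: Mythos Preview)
Your overall strategy matches the paper's: upgrade $\mu(\net_1)=\mu(\net_2)$ to $\mux(\net_1)=\mux(\net_2)$ and then invoke \autoref{thm: orchard encoding}. However, the first step as you present it has a genuine gap. You cannot invoke \autoref{thm: strong ret vis in-degrees} directly, since that theorem assumes \emph{both} networks are strongly reticulation-visible, and here only $\net_1$ is. You recognise this obstacle, but your workaround (``$\mu(\net_1)=\mu(\net_2)$ forces $\net_2$ to share all features characterised by the $\mu$-representation alone'') omits the one hypothesis that actually makes the transfer go through: the orchard property of $\net_1$, which you are currently reserving only for step two.

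Concretely, to apply \autoref{thm: indeg ret vis} you must know that $\net_1$ and $\net_2$ agree on their reticulation sets and that each reticulation is stable for a common leaf in both. The bridge-identification results (\autoref{thm: bridge-node mu-rep}, \autoref{lem: lowest ret above bridge}) tell you which $\mu$-vectors are lowest-above-some-bridge reticulations in each network, and strong reticulation-visibility of $\net_1$ says every reticulation of $\net_1$ arises this way. But nothing yet prevents $\net_2$ from having \emph{additional} reticulations, or a multiplicity-$2$ vector from belonging to a pair of tree-clones in $\net_2$. The paper closes this gap by using orchard on $\net_1$ to conclude $\net_1$ has no tree-clones, hence every $\mu$-vector has multiplicity at most~$2$; then every multiplicity-$2$ vector in the common $\mu$-representation is minimal in some $A_b$ (via strong reticulation-visibility of $\net_1$), and \autoref{lem: tree-clones not minimal ancestors of bridge} forces such vectors to belong to reticulations in $\net_2$ as well. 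Only after this bookkeeping does \autoref{thm: indeg ret vis} apply. You should make the orchard hypothesis do this work explicitly in step one rather than defer it entirely to the final appeal to \autoref{thm: orchard encoding}.
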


\chris{We want to show that there exists a subclass of phylogenetic networks, which we call \emph{strongly reticulation-visible}, that can be encoded by the $\mu$-representation without modification. To do so, we first give a set of equations which relate the in-degrees of the nodes in a phylogenetic network to the $\mu$-representation (\Cref{thm: indeg formula}). 
We then show that for reticulation-visible networks, the $\mu$-representation uniquely determines the in-degrees of all nodes, as long as we know which of the $\mu$-vectors belong to reticulations, and if we know, for each reticulation, the leaf for which it is \emph{stable} (\Cref{thm: indeg ret vis}). 
A vertex $v$ is \emph{stable}, if there exists a leaf~$a$ such that all paths from the root to~$a$ visit~$v$.}

\chris{
We then ask if this can also be done for orchard networks.
Because orchard networks do not contain tree-clones, it is easy to determine which $\mu$-vectors belong to reticulations in these networks, namely those which have multiplicity greater than 1. However, determining the stability of reticulations from the $\mu$-vectors is not obvious. 
In \Cref{sec: mu stabilitiy} we set out a way to do this.
It turns out this can be done if one more condition is added: each reticulation must be the lowest reticulation above some bridge. At the end of \Cref{sec: prelim lemmas strong ret vis}, we show that if we know which $\mu$-vector belongs to the head of a bridge (which we call a \emph{bridge-node}), then we can determine which $\mu$-vector belongs to the reticulation which is lowest above that bridge (\Cref{lem: lowest ret above bridge}). 
For these reticulations, it follows that we can identify the leaves with which they are stable.
We finally give a characterization in \Cref{thm: bridge-node mu-rep} for the $\mu$-vectors of bridge-nodes in semi-binary stack-free networks. By combining these results with \Cref{thm: indeg ret vis}, it follows that for strongly reticulation-visible networks, the in-degrees are uniquely determined by the $\mu$-representation. Therefore, there is a bijection between the $\mu$-representation and the modified $\mu$-representation for these networks. It then follows from \Cref{thm: orchard encoding} what we set out to prove:  strongly reticulation-visible orchard networks are encoded by their $\mu$-representation (\Cref{thm: strong ret vis encoding}).}

\subsection{Determining the in-degrees of reticulations} \label{sec: in-degree equation}
In this section we set out a relationship between the in-degrees of reticulations and the number of paths to the leaves below that reticulation. Naturally, higher in-degrees correspond to a larger number of paths crossing a reticulation. Therefore, it follows that the $\mu$-vectors of ancestors of a reticulation may contain information pertaining to the in-degree of that reticulation. 
\subsubsection{An equation relating the in-degrees of reticulations and \texorpdfstring{$\mu$}{m}-vectors}

Let $\net = (V,E)$ be \chris{an arbitrary} X-DAG. Recall that $\rho$ denotes the root of the network.
\begin{proposition} \label{thm: indeg formula}
    Let 
    $a$ be an element of X. Let $R$ be the set of reticulations in $V$. Then,
   \begin{align*}
        \mu(\rho)_a = \sum_{r_i \in R} (\indeg{r_i} - 1)\mu(r_i)_a + 1.
    \end{align*}
\end{proposition}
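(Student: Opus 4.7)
The approach is a one-step double-counting argument derived from the recursive path-count identity. Write $g(v):=\mu(v)_a$ for brevity. By the defining property of $\mu$, every non-leaf vertex $v$ satisfies
\[
g(v) \;=\; \sum_{w\,:\,(v,w)\in E} g(w).
\]
I will sum this over all $v\in V\setminus L$, where $L$ is the set of leaves, and reindex the right-hand side by the head of each edge. Since each vertex $w$ is the head of exactly $\indeg{w}$ edges,
\[
\sum_{v\in V\setminus L} g(v) \;=\; \sum_{w\in V}\indeg{w}\,g(w),
\]
which rearranges to
\[
\sum_{w\in V}\bigl(\indeg{w}-1\bigr)\,g(w) \;=\; -\sum_{v\in L} g(v).
\]
Because $g(v)=\mu(v)_a$ equals $1$ when $v=a$ and $0$ for every other leaf, the right-hand side is simply $-1$.

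The second step is to partition the left-hand sum using the vertex classification from the Preliminaries. The root contributes $(0-1)\,g(\rho)=-\mu(\rho)_a$; every non-root tree-node (internal or a tree-node leaf) has $\indeg{w}=1$ and contributes $0$; and each reticulation $r\in R$ contributes $(\indeg{r}-1)\,\mu(r)_a$. Moving the root's contribution to the right-hand side and adding $1$ to both sides produces the claimed equality $\mu(\rho)_a = \sum_{r\in R}(\indeg{r}-1)\,\mu(r)_a+1$.

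I expect no real obstacle: the identity is a discrete "conservation" law, and the manipulations above are essentially one application of handshake-style reindexing. The only care required is to verify that the vertex classification exhausts $V$ without overlap — in particular, that a leaf which is simultaneously a reticulation (indegree $\geq 2$) is counted exactly once, namely as an element of $R$ on the left-hand side and as the unique contributor to the leaf sum on the right-hand side when that leaf is $a$; checking this edge case confirms that both appearances are consistent with the final formula.
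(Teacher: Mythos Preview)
Your proof is correct and takes a genuinely different route from the paper's. The paper proves the identity by induction on the hybridization number: for the base case $h(\net)=0$ the network is a tree and the formula reduces to $\mu(\rho)_a=1$; for the inductive step one deletes a reticulation edge entering a highest reticulation $r_j$ (peeling back any elementary path above it), applies the hypothesis to the resulting $X$-DAG with hybridization number $k-1$, and then accounts for the single extra root--to--$r_j$ path restored when the edge is put back.

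Your argument bypasses induction entirely by summing the recursion $g(v)=\sum_{(v,w)\in E}g(w)$ over all non-leaves and reindexing by heads of edges, yielding $\sum_{w}(\indeg{w}-1)g(w)=-\sum_{v\in L}g(v)=-1$. Splitting the left side according to the root / non-root tree-node / reticulation trichotomy then isolates $-\mu(\rho)_a$ and the reticulation sum. The only point worth making explicit is that every non-root tree-node indeed has in-degree exactly~$1$: this follows because an $X$-DAG has a unique vertex of in-degree~$0$, so any other tree-node has in-degree at least~$1$ and hence exactly~$1$. With that, the edge case of a reticulation leaf that you flag is handled automatically, since such a vertex sits in $R$ on the left and contributes $-1$ on the right only when it equals~$a$.

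Your approach is shorter and more conceptual (a one-line discrete divergence identity); the paper's inductive argument, while longer, has the virtue of showing operationally how each reticulation edge increments $\mu(\rho)_a$ by $\mu(r_j)_a$, which foreshadows the edge-deletion manipulations used later in \Cref{lem: indeg formula retvis}.
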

Note that $\mu(r_i)_a = 0$ if and only if $r_i$ is not an ancestor of~$a$, which means that the contributions to the sum of \autoref{thm: indeg formula} come from reticulation ancestors of~$a$. 
\begin{proof}
    We prove the theorem by induction on the hybridization number, $k = h(\net)$. We start by considering the base case, $k=0$.\medskip


Let us consider an $X$-DAG with $k = 0$, i.e., a tree. If the hybridization number is zero then the graph contains no reticulations and therefore $R$ is empty. This also means there is a unique tree-path from the root to each leaf 
and thus $\mu(\rho)_a = 1$. This shows the equation holds for each leaf of this graph.\medskip

 Now suppose the equation holds for each leaf of any $X$-DAG with hybridization number lower than $k$. Let $\net$ be an $X$-DAG with reticulation set $R$ such that $h(\net) = k$. Without loss of generality let $r_j$ be a highest reticulation in $\net$, which means $r_j$ has no reticulation ancestors. We can decrease the in-degree of $r_j$ by deleting an incoming edge $ur_j$. If $u$ is an elementary node we delete all edges on the maximal elementary path which visits $u$, as well as the nodes which become isolated by doing so. When we delete any of the incoming edges $ur_j$ in this way the resulting network $\net'$ is still an $X$-DAG. Furthermore, the in-degrees of any other reticulations in the network have not decreased, which means $\delta^-_{\net'}(r_i) = \delta^-_\net(r_i)$ for any $i \neq j$. Moreover, the number of paths from any reticulation $r$ to any leaf $a$ has not changed and thus $\mu'_a(r) = \mu(r)_a$. Finally, the hybridization number of $\net'$ is equal to $k-1$ and by the induction hypothesis we have,
 \begin{equation*}
     \mu'(\rho)_a = \sum_{r_i \in R'} (\delta^-_{\net'}(r_i) - 1)\mu(r_i)_a + 1.
 \end{equation*}
 Now there are two cases to consider:
 \begin{enumerate}
     \item $\delta^-_\net(r_j) = 2$
     \item $\delta^-_\net(r_j) > 2$.
 \end{enumerate}
 In the first case, the in-degree of $r_j$ after deleting an incoming edge becomes 1, which means $r_j \notin R'$. In this case $\delta^-_\net(r_j) - 2 = 0$. Which gives us:
\begin{align*}
    \mu'(\rho)_a    &= \sum_{r_i \in R'} (\delta^-_{\net'}(r_i) - 1)\mu(r_i)_a + 1 \\
                    &= \sum_{r_i \in R\setminus\{r_j\}} (\delta^-_\net(r_i) - 1)\mu(r_i)_a + (\delta^-_\net(r_j) - 2)\mu(r_j)_a + 1
\end{align*}
In the second case $r_j \in R'$ and, because we did not decrease the in-degree of any other reticulations, $R = R'$. The only difference is,  $\delta^-_{\net'}(r_j) = \delta^-_\net(r_j) - 1$. Therefore, we have:
\begin{align*}
    \mu'(\rho)_a    &= \sum_{r_i \in R'} (\delta^-_{\net'}(r_i) - 1)\mu(r_i)_a + 1 \\
                    &= \sum_{r_i \in R} (\delta^-_{\net'}(r_i) - 1)\mu(r_i)_a + 1 \\
                    &= \sum_{r_i \in R\setminus\{r_j\}} (\delta^-_\net(r_i) - 1)\mu(r_i)_a + (\delta^-_{\net'}(r_j) - 1)\mu(r_j)_a + 1\\
                    &= \sum_{r_i \in R\setminus\{r_j\}} (\delta^-_\net(r_i) - 1)\mu(r_i)_a + (\delta^-_\net(r_j) - 2)\mu(r_j)_a + 1.
\end{align*}
Note that after simplification the equation is the same for both cases.\medskip

When we add the edge $ur_j$ (or the maximal elementary path which visits $u$ and ends in $r_j$) back to the network, we generate $\net$ from $\net'$. In doing so we increase $\mu'_a(\rho)$ by $\mu(r_j)_a$ for any leaf $a$. 
Indeed, $r_j$ is chosen to be a highest reticulation in $\net$, and so there is a single path from the root to $r_j$ which uses the edge $(u,r_j)$.
By definition, there are $\mu(r_j)_a$ paths from $r_j$ to any leaf $a$. This means that, for any leaf $a$
\begin{align*}
    \mu(\rho)_a     &= \mu'(\rho)_a + \mu(r_j)_a \\
                    &= \sum_{r_i \in R\setminus\{r_j\}} (\indeg{r_i}_{\net} - 1)\mu(r_i)_a + (\indeg{r_j}_\net - 2)\mu(r_j)_a + 1 + \mu(r_j)_a \\
                    &= \sum_{r_i \in R\setminus\{r_j\}} (\indeg{r_i}_{\net} - 1)\mu(r_i)_a + (\indeg{r_j}_\net - 1)\mu(r_j)_a + 1\\
                    &= \sum_{r_i \in R} (\indeg{r_i}_{\net} - 1)\mu(r_i)_a + 1,
\end{align*}
which shows that the equation in the theorem holds for any leaf of $\net$. Thus, we have shown that if the equation holds for any leaf of an $X$-DAG with hybridization number $k-1$, then it holds for any leaf of an $X$-DAG with hybridization number $k$. We had already shown the equation holds for any leaf for the base case $k=0$. Therefore, we can conclude the equation holds for any leaf of any $X$-DAG, which proves the theorem.
\end{proof}

Note that this theorem holds for non-binary $X$-DAGs, as we never assumed a limit on the in-degree of reticulations or the out-degree of tree-nodes. 
\autoref{thm: indeg formula} provides us with a system of linear equations that govern the in-degrees of reticulations as a function of $\mu$-vectors. We know this system of equations must have a solution as long as it belongs to a valid phylogenetic network. However, there are no guarantees yet that it has a unique solution. A network may contain more reticulations than there are linearly independent equations in the system. The network displayed in \autoref{fig: indeg example} is semi-binary stack-free orchard, yet the system of equations generated by applying \autoref{thm: indeg formula} to each leaf does not have a unique solution. Note that leaf $c$ has no reticulation ancestors and therefore the equation for leaf $c$ ($\mu(\rho)_c=1$) does not contribute. This means that there is one equation for leaf $a$ and one for leaf $b$, but three variables, the in-degrees of reticulations $r_1,r_2$ and $r_3$. Note that the lower bound for the in-degree of reticulations is 2, and therefore, there are only a finite number of solutions. In this case there are two solutions to the system of equations and \autoref{fig: indeg example} shows $(\indeg{r_1},\indeg{r_2},\indeg{r_3}) =(3,3,2)$ is the one belonging to this network.

\begin{figure}
    \centering
    \includegraphics[width = 0.9\textwidth]{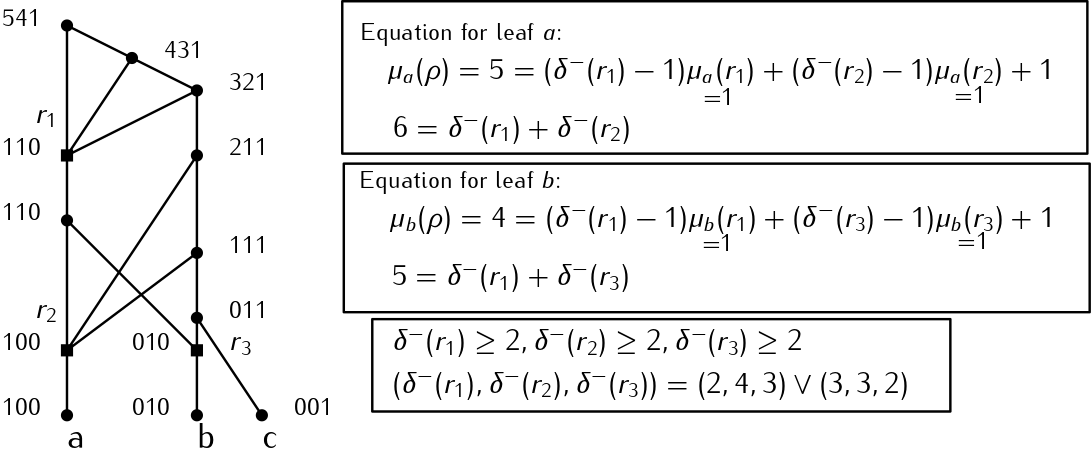}
    \caption{A semi-binary stack-free orchard network for which the system of equations generated by applying \autoref{thm: indeg formula} to each leaf does not have a unique solution. However, the solution $(\indeg{r_1},\indeg{r_2},\indeg{r_3}) =(3,3,2)$ is the unique one that belongs to this network.}
    \label{fig: indeg example}
\end{figure}

\subsubsection{In-degrees of stable reticulations} \label{sec: ret-vis in-degrees}
In the last section we presented a theorem which allows us to derive a system of equations which govern the in-degrees of reticulations as a function of the $\mu$-vectors. This system of equations applies to every network with the same $\mu$-representation and set of reticulations $R$. However, the system of equations does not always have a unique solution. In this section we will introduce specific conditions under which it does have a unique solution. We can extend the results of the previous section to the class of networks called reticulation-visible. Recall the definition of a reticulation-visible network:\medskip

\begin{definition}
    An X-DAG with reticulation set $R$ is called reticulation-visible if each reticulation $r \in R$ is stable, that is there exists a leaf $a \in X$ such that all paths from the root to $a$ visit $r$.
\end{definition}
If we let $\net$ be a reticulation-visible X-DAG with reticulation set $R$ then the following is true. For $r \in R$, with leaf $a$ below $r$ such that all paths from the root to $a$ visit $r$, and $v$ any ancestor of $r$,
\begin{equation} \label{eq: stable ret paths}
    \mu(v)_a = P_{vr}\mu(r)_a,
\end{equation}
where $P_{vr}$ is the number of paths from $v$ to $r$. This is true, as every path from $v$ to $a$ must be the concatenation of a path from $v$ to $r$ and a path from $r$ to $a$. With this we gain the following lemma. \chris{This shows that in reticulation-visible networks, there is a unique equation for each reticulation.}

\begin{lemma} \label{lem: indeg formula retvis}
    Let $\mathcal{N}$ be an X-DAG with reticulation set $R$. Let $r_\ell$ be any reticulation in $R$ and let $A$ be the set of ancestors of $r_\ell$. Suppose there exists a leaf $a \in X$  such that all paths from the root to $a$ visit $r_\ell$. Then,
   \begin{align} \label{eq: indeg ret vis}
        \frac{\mu(\rho)_a}{\mu(r_\ell)_a}
                    &= \sum_{r\in A}(\indeg{r} - 1) \frac{\mu(r)_a}{\mu(r_\ell)_a} + 1
    \end{align}
\end{lemma}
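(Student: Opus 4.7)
The plan is to derive the identity by applying \autoref{thm: indeg formula} twice: once to $\net$ and once to the sub-DAG $\net_\ell$ induced by the descendants of $r_\ell$. From \autoref{thm: indeg formula} for $\net$, $\mu(\rho)_a = \sum_{r_i \in R}(\indeg{r_i}-1)\mu(r_i)_a + 1$, and I would partition $R$ into three classes according to their position relative to $r_\ell$: the reticulation ancestors $R_A = A \cap R$ (containing $r_\ell$ itself), the strict reticulation descendants $R_D$, and the reticulations $R_I$ incomparable with $r_\ell$. For $r_i \in R_I$, if $\mu(r_i)_a > 0$ then concatenating a root-to-$r_i$ path with an $r_i \pth a$ path would yield a root-to-$a$ path avoiding $r_\ell$ (since $r_\ell$ is neither above nor below $r_i$), contradicting the stability hypothesis; so the $R_I$ terms vanish.

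Next, I would analyze the $R_D$ contributions by passing to $\net_\ell$. A parallel stability argument shows that every in-neighbor of a strict descendant of $r_\ell$ is itself a descendant of $r_\ell$, as otherwise a root-to-$a$ path could be routed bypassing $r_\ell$. Consequently $\indeg{\cdot}$ and the $\mu$-values to $a$ are unchanged on $R_D$ when passing to $\net_\ell$, and the reticulation set of $\net_\ell$ is exactly $R_D$ (since $r_\ell$ is the root of $\net_\ell$ with in-degree zero there). Applying \autoref{thm: indeg formula} to $\net_\ell$ with leaf $a$ then gives $\mu(r_\ell)_a = \sum_{r_i \in R_D}(\indeg{r_i}-1)\mu(r_i)_a + 1$, so the $R_D$ contribution in the original sum collapses to $\mu(r_\ell)_a - 1$.

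Substituting these two simplifications back into the equation from \autoref{thm: indeg formula} yields $\mu(\rho)_a = \sum_{r \in R_A}(\indeg{r}-1)\mu(r)_a + \mu(r_\ell)_a$, and dividing by $\mu(r_\ell)_a$ — which is strictly positive since $r_\ell \pth a$ exists — produces the claimed formula, with the sum over $A$ in the statement understood as ranging over the reticulation ancestors (tree-node ancestors contribute nothing through the factor $\indeg{r}-1$). The main obstacle in carrying out the plan is the in-degree preservation step: one must carefully justify, from the stability hypothesis, that no in-edge of any strict descendant reticulation of $r_\ell$ originates outside the descendant set, so that the second invocation of \autoref{thm: indeg formula} on $\net_\ell$ with its original in-degree labels is legitimate.
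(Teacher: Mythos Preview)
Your approach via two applications of \autoref{thm: indeg formula} is different from the paper's and essentially works, but the in-degree preservation claim is stated too strongly. It is \emph{not} true that every in-neighbor of a strict descendant of $r_\ell$ must be a descendant of $r_\ell$: a reticulation $r$ strictly below $r_\ell$ that is not an ancestor of $a$ may well have a parent outside the descendant set, and your stability argument (``otherwise a root-to-$a$ path could be routed bypassing $r_\ell$'') fails precisely because there is no path $r \pth a$ to complete the detour. Consequently $R(\net_\ell)$ can be a proper subset of $R_D$, and in-degrees need not be preserved on all of $R_D$. The fix is easy: the weaker claim, restricted to strict descendants of $r_\ell$ that are \emph{also ancestors of $a$}, is true by exactly the argument you give, and this suffices, since all other $r_i \in R_D$ have $\mu(r_i)_a = 0$ in both $\net$ and $\net_\ell$ and hence contribute nothing to either sum.

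For comparison, the paper sidesteps this issue entirely by a different trick: it attaches a fresh leaf $a'$ directly to $r_\ell$ and applies \autoref{thm: indeg formula} \emph{once} to the enlarged network with respect to $a'$. Since $r_\ell$ is by construction the lowest reticulation above $a'$, only reticulation ancestors of $r_\ell$ contribute, and the number of paths from any such ancestor $v$ to $a'$ equals $P_{vr_\ell} = \mu(v)_a/\mu(r_\ell)_a$ by \eqref{eq: stable ret paths}. This yields the formula directly, with no need to analyse the descendant sub-DAG at all. Your two-application route is a legitimate alternative once the claim is corrected; the paper's single-application route is a little shorter and avoids the case analysis on $R_D$.
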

\begin{proof}
    Let us generate $\net'$ from $\net$ by first attaching a new leaf $a'$ to $r_\ell$, by adding the edge $(r_\ell,a')$, and then adjusting the $\mu$-representation by adding a column for $a'$, such that $\mu'(v) = \mu(v) \oplus \mu(v)_{a'}$. Note that this network is now an $X'$-DAG, where $X' = X\cup \{a'\}$. Then by \autoref{thm: indeg formula} we have:
    \begin{equation} \label{eq: in-degree net'}
        \mu'_{a'}(\rho) = \sum_{r \in R}(\indeg{r} - 1)\mu'_{a'}(r) + 1.
    \end{equation}
    Now note that $r_\ell$ is the lowest reticulation above $a'$ by construction and therefore the only reticulations that contribute to the sum in \autoref{eq: in-degree net'} are the reticulations in $A$.
    Furthermore, the number of paths from any ancestor $v$ of $a'$, which is not $a'$, to $a'$ are equal to the number of paths from $v$ to $r_\ell$, which is the same in $\net$ and $\net'$, which means by \autoref{eq: stable ret paths},
    \begin{equation} \label{eq: paths net'}
        \mu'_{a'}(v) = P_{vr_\ell} = \frac{\mu(v)_a}{\mu(r_\ell)_a}. 
    \end{equation}
    Because $\rho$ is an ancestor of $a'$, as are the elements of $A$, we can substitute \autoref{eq: paths net'} in \autoref{eq: in-degree net'} to get:
    \begin{equation*}
        \frac{\mu(\rho)_a}{\mu(r_\ell)_a}
                        = \sum_{r\in A}(\indeg{r} - 1) \frac{\mu(r)_a}{\mu(r_\ell)_a} + 1.
    \end{equation*}
\end{proof}

For the following lemmas, let $\net_1$ and $\net_2$ be two reticulation-visible X-DAG's \chris{on the same leaf set~$X$} with $\mu(\net_1) = \mu(\net_2)$. 
\chris{Let~$R_1$ and $R_2$ denote the reticulation sets of~$\net_1$ and~$\net_2$, respectively.
We say that~$\net_1$ and~$\net_2$ \emph{agree on the reticulation set} if there exists a bijection~$f:R_1\rightarrow R_2$ such that~$\mu(r_1) =\mu(f(r_1))$ for all~$r_1\in R_1$.
For simplicity, we refer to~$R_1$ (and by bijection, also~$R_2$) as~$R$.
We assume that~$\net_1$ and~$\net_2$ agree on the reticulation set.}
Finally, assume that for each reticulation $r \in R$, there exists a leaf $a \in X$ such that $r$ is stable with respect to $a$ in both networks.

\begin{lemma} \label{lem: stable comparable}
    Each reticulation $r \in R$ has the same reticulation descendants and the same reticulation ancestors in $\net_1$ and $\net_2$.
\end{lemma}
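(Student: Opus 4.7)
The plan is to characterise the ancestor-descendant relation on reticulations purely through $\mu$-vectors and the common stability data, and then use the equality $\mu(\net_1)=\mu(\net_2)$ to conclude.

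First, I would prove a componentwise monotonicity: in any $X$-DAG, if $v'$ is a strict ancestor of $v$, then $\mu(v')\ge\mu(v)$ componentwise. This follows by concatenating each of the $P_{v'v}\ge 1$ directed paths from $v'$ to $v$ with each of the $\mu(v)_c$ paths from $v$ to any leaf $c$, producing at least $\mu(v)_c$ distinct paths from $v'$ to $c$ that traverse $v$; hence $\mu(v')_c\ge\mu(v)_c$ for every $c$.

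Next, I would establish that $\mu$ is injective on the reticulation set of any reticulation-visible $X$-DAG. Suppose towards contradiction that $r,r'\in R$ are distinct with $\mu(r)=\mu(r')$; using $a:=a(r)$ we have $\mu(r')_a=\mu(r)_a\ge 1$, so $r'$ is an ancestor of $a$, and the stability of $r$ at $a$ forces $r$ and $r'$ to be comparable. Without loss of generality $r'$ is a strict ancestor of $r$, and the equality $\mu(r')_a=P_{r'r}\mu(r)_a$ forces $P_{r'r}=1$. Since $r$ has in-degree at least two, it has a parent $q$ distinct from the immediate predecessor on the unique $r'\to r$ path. Either $q$ is a descendant of $r'$, producing a second $r'\to r$ path and contradicting $P_{r'r}=1$, or $q$ lies outside the $r'$-subtree, in which case a directed path from $\rho$ through $q$ to $r$ to $a(r')$ (noting $\mu(r)_{a(r')}=\mu(r')_{a(r')}\ge 1$) avoids $r'$, contradicting the stability of $r'$ at $a(r')$.

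Combining these two facts, for distinct $r,r'\in R$, $r'$ is a strict ancestor of $r$ if and only if $\mu(r')\ge\mu(r)$ componentwise: the forward direction is monotonicity, and conversely $\mu(r')_{a(r)}\ge\mu(r)_{a(r)}\ge 1$ makes $r'$ an ancestor of $a(r)$, hence comparable to $r$ by the stability of $r$, with the descendant option ruled out by applying monotonicity in the reverse direction together with injectivity. Since this equivalence is expressed entirely in terms of $\mu$-vectors, which are identical in $\net_1$ and $\net_2$, the reticulation ancestors of any $r\in R$ coincide in the two networks, and the symmetric statement yields the same for descendants. The principal obstacle is the injectivity of $\mu$ on $R$, which requires the delicate case analysis on the parents of $r$ and uses the reticulation-visibility hypothesis in full force.
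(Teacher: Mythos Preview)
Your argument is correct and follows the same line as the paper: the common stability leaf $a$ forces any reticulation $r_i$ with $\mu(r_i)_a\ge 1$ to be comparable to $r$, and the componentwise $\mu$-ordering then fixes the direction of comparability in both networks simultaneously. Your explicit injectivity step (no two distinct reticulations in a reticulation-visible $X$-DAG share a $\mu$-vector) is a useful refinement that the paper leaves implicit when it asserts ``$\mu(r_i)\ge\mu(r)\Rightarrow r_i$ is an ancestor'' and ``$\mu(r_i)\le\mu(r)\Rightarrow r_i$ is a descendant''; it cleanly disposes of the edge case $\mu(r_i)=\mu(r)$ with $r_i\neq r$ and makes the ancestor relation on $R$ a function of $\mu$-vectors alone.
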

\begin{proof}
    Let $r \in R$ be stable for leaf $a$ in both networks. Then all paths from the root to $a$ visit $r$ in both networks. Therefore, all reticulations $r_i \in R$, with $\mu(r_i) \geq \mu(a)$ must be on a path from the root to $a$ which also visits $r$ in both networks. Therefore, each $r_i \in R$, with $\mu(r_i) \geq \mu(a)$ must be either a descendant or an ancestor of $r$. If $\mu(r_i) \geq \mu(r)$ then $r_i$ must be an ancestor of~$r$ in both networks, and if $\mu(r_i) \leq \mu(r)$ then~$r_i$ must be a descendant of~$r$ in both networks. 
\end{proof}

\begin{lemma} \label{lem: in-degrees highest rets}
    Let $H \subseteq R$ be the subset of $R$ which contains only reticulations without other reticulation ancestors in either network. Then the reticulations in $H$ have the same in-degrees in both networks.
\end{lemma}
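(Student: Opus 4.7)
The plan is to show that for each $r_\ell\in H$, Lemma~\ref{lem: indeg formula retvis} collapses to a single closed-form expression that pins down $\indeg{r_\ell}$ entirely in terms of quantities determined by the $\mu$-representation, so that the in-degree of $r_\ell$ must necessarily agree in $\net_1$ and $\net_2$.

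First I would fix an arbitrary $r_\ell\in H$. By the assumption preceding the statement, there exists a leaf $a\in X$ such that $r_\ell$ is stable with respect to $a$ in both networks, so Lemma~\ref{lem: indeg formula retvis} applies in each of $\net_1$ and $\net_2$ and yields the identity $\mu(\rho)_a/\mu(r_\ell)_a = \sum_{r}(\indeg{r}-1)\,\mu(r)_a/\mu(r_\ell)_a + 1$, with the sum ranging over the reticulation ancestors of $r_\ell$ (including $r_\ell$ itself, since by convention every node is an ancestor of itself). The defining property of $H$ is precisely that $r_\ell$ has no other reticulation ancestor in either network, so the sum reduces to just its $r_\ell$-term, and the equation simplifies to $\indeg{r_\ell} = \mu(\rho)_a/\mu(r_\ell)_a$.

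It remains to observe that this right-hand side is common to both networks. The root is always the unique vertex whose $\mu$-vector dominates every other vector in the multiset, so $\mu(\rho)$ agrees in $\net_1$ and $\net_2$ once $\mu(\net_1)=\mu(\net_2)$. The vector $\mu(r_\ell)$ is identified across the two networks via the bijection $f:R_1\rightarrow R_2$ provided by the agreement on reticulation sets. Hence $\indeg_{\net_1}(r_\ell) = \indeg_{\net_2}(f(r_\ell))$, as desired.

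I do not expect any real obstacle here: this is essentially a direct specialization of Lemma~\ref{lem: indeg formula retvis} to the case where the only reticulation ancestor is the reticulation itself. The only minor sanity checks are that $\mu(r_\ell)_a\geq 1$ (immediate from $a$ being a descendant of $r_\ell$) so the division is well defined, and that the hypothesis ``no reticulation ancestors in either network'' in the definition of $H$ is exactly what forces the single-term collapse; one could alternatively have derived the latter from Lemma~\ref{lem: stable comparable} combined with the stability assumption, but defining $H$ this way is cleaner.
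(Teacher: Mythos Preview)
Your proposal is correct and follows essentially the same approach as the paper: both apply Lemma~\ref{lem: indeg formula retvis} to a reticulation $r_\ell\in H$, observe that the ancestor set $A$ collapses to $\{r_\ell\}$, and conclude that $\indeg{r_\ell}$ is determined by the $\mu$-representation. Your version spells out a bit more explicitly why $\mu(\rho)_a$ and $\mu(r_\ell)_a$ agree across the two networks, which the paper leaves implicit.
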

\begin{proof}
    For any reticulation $r \in H$ the set $A$ in \autoref{lem: indeg formula retvis} contains only $r$ and therefore the in-degree of $r$ is given directly by \autoref{eq: indeg ret vis}, applied to the leaf for which the reticulation is stable in both networks.
\end{proof}

By a similar reasoning we obtain the following:
\begin{lemma} \label{lem: indeg ret vis ancestors}
    Let $r_\ell$ in $R$ and let the in-degrees of the other ancestors of $r_\ell$ be the same in $\net_1$ and $\net_2$. Then the in-degree of $r_\ell$ is the same in $\net_1$ and $\net_2$.
\end{lemma}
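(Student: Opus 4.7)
The plan is to apply \Cref{lem: indeg formula retvis} to $r_\ell$ in both $\net_1$ and $\net_2$, and then isolate the contribution of $\indeg{r_\ell}$. By the standing assumption for this subsection, there exists some leaf $a \in X$ such that $r_\ell$ is stable with respect to $a$ in both networks. Writing $A_i$ for the set of reticulation ancestors of $r_\ell$ in $\net_i$ (including $r_\ell$ itself, since each node is an ancestor of itself), \Cref{lem: indeg formula retvis} yields, for each $i \in \{1,2\}$,
\begin{equation*}
    \frac{\mu(\rho)_a}{\mu(r_\ell)_a} = \sum_{r \in A_i} (\indeg{r} - 1)\frac{\mu(r)_a}{\mu(r_\ell)_a} + 1.
\end{equation*}

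The key observation is that every ingredient of this equation other than $\indeg{r_\ell}$ is already known to coincide across the two networks. The left-hand side depends only on the $\mu$-vectors of $\rho$ and $r_\ell$, which agree in $\net_1$ and $\net_2$ because $\mu(\net_1) = \mu(\net_2)$ and the networks agree on the reticulation set (so $\mu(r_\ell)$ is well-defined via the bijection between $R_1$ and $R_2$). By \Cref{lem: stable comparable}, the reticulation ancestors of $r_\ell$ are the same in both networks, so $A_1 \setminus \{r_\ell\}$ is identified with $A_2 \setminus \{r_\ell\}$ via the bijection on $R$, and these matched pairs have equal $\mu$-vectors. By the hypothesis of the lemma, the in-degrees of these ancestors also agree across $\net_1$ and $\net_2$, so every summand corresponding to $r \neq r_\ell$ contributes identically in both equations.

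Subtracting the two equations and canceling the matching contributions then leaves
\begin{equation*}
    (\indeg{r_\ell}^{\net_1} - \indeg{r_\ell}^{\net_2}) \cdot \frac{\mu(r_\ell)_a}{\mu(r_\ell)_a} = 0,
\end{equation*}
and since $\mu(r_\ell)_a / \mu(r_\ell)_a = 1$, we conclude $\indeg{r_\ell}^{\net_1} = \indeg{r_\ell}^{\net_2}$. There is essentially no obstacle here: the work was already done in \Cref{thm: indeg formula} and \Cref{lem: indeg formula retvis}, and the only bookkeeping is to note that \Cref{lem: stable comparable} makes the index sets $A_1$ and $A_2$ correspond and that the assumed equality of ancestor in-degrees kills all other summands, leaving a linear equation in the single unknown $\indeg{r_\ell}$.
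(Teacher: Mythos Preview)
Your proof is correct and follows essentially the same approach as the paper: apply \Cref{lem: indeg formula retvis} at a leaf for which $r_\ell$ is stable in both networks, use \Cref{lem: stable comparable} to match the ancestor sets, and observe that all terms except the one involving $\indeg{r_\ell}$ coincide, forcing the in-degrees to agree. The paper's version is terser but the argument is the same.
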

\begin{proof}
   Given a reticulation $r_\ell \in R$, the set $A$ in \autoref{eq: indeg ret vis} contains only $r_\ell$ and its other ancestors, which are the same in both networks. Now assume the in-degrees of the ancestors of $r_\ell$ are fixed, then the in-degree of $r_\ell$ is given by \autoref{eq: indeg ret vis}, applied to the leaf for which the reticulation is stable in both networks.
\end{proof}

This leads to the following conclusion:
\begin{proposition} \label{thm: indeg ret vis}
    Let $\net_1$ and $\net_2$ be two reticulation-visible X-DAG's with $\mu(\net_1) = \mu(\net_2)$. Let both networks have the same reticulation set $R$ and the same leaf set $X$. Finally, assume that for each reticulation $r \in R$, there exists a leaf $a \in X$ such that $r$ is stable with respect to $a$ in both networks. Then the in-degrees of the reticulations are the same in both networks.
\end{proposition}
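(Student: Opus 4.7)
The plan is to prove the proposition by strong induction on the reticulations, with the induction proceeding along a linear extension of the partial order on $R$ induced by the ancestor relation (highest reticulations first). The key observation enabling this is that by \autoref{lem: stable comparable}, the reticulation ancestors of each $r \in R$ are the same in $\net_1$ and in $\net_2$. In particular, being a ``highest'' reticulation (having no reticulation ancestors) is well-defined independent of which of the two networks we look at, and the notion of ``the set of reticulation ancestors of $r_\ell$'' is unambiguous. This allows us to assign a consistent height in the reticulation sub-DAG to each $r \in R$, and to induct on this height.

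The base case consists of the reticulations with no reticulation ancestors; for these, \autoref{lem: in-degrees highest rets} directly gives equality of in-degrees in $\net_1$ and $\net_2$. For the inductive step, fix $r_\ell \in R$ of positive height and suppose the claim holds for every reticulation strictly above $r_\ell$ in the reticulation order. Then all reticulation ancestors of $r_\ell$ have the same in-degree in $\net_1$ and $\net_2$ by the inductive hypothesis, and \autoref{lem: indeg ret vis ancestors} then yields that $\indeg{r_\ell}$ is the same in both networks. Completing the induction over all $r \in R$ gives the proposition.

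The only subtle point — really the only thing to check beyond plugging lemmas together — is that \autoref{lem: indeg ret vis ancestors} actually determines $\indeg{r_\ell}$ uniquely once the other ancestor in-degrees are fixed. This is immediate from Equation~(\ref{eq: indeg ret vis}) applied to the leaf $a$ for which $r_\ell$ is stable in both networks: the coefficient of $\indeg{r_\ell}$ there is $\mu(r_\ell)_a/\mu(r_\ell)_a = 1$, so the equation uniquely solves for $\indeg{r_\ell}$. I do not expect any serious obstacle here, because the heavy lifting (relating $\mu$-vectors to in-degrees, and then restricting attention to ancestors via the reticulation-visibility hypothesis and the common stable leaf) has been done in Lemmas \ref{lem: stable comparable}--\ref{lem: indeg ret vis ancestors}; the present proposition is essentially their packaging into a single inductive statement.
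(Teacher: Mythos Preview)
Your proposal is correct and follows essentially the same route as the paper: both arguments use \autoref{lem: stable comparable} to ensure the reticulation ancestor structure agrees in $\net_1$ and $\net_2$, then apply \autoref{lem: in-degrees highest rets} to the highest reticulations and repeatedly invoke \autoref{lem: indeg ret vis ancestors} moving downward. Your framing as strong induction on height in the reticulation sub-DAG is a slightly more formal packaging of the paper's ``highest, second-highest, \ldots'' layering, but the content is identical.
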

\begin{proof}
    Let $H$ be the set of highest reticulations in $R$, which is the same set for $\net_1$ and $\net_2$, by \autoref{lem: stable comparable}. By \autoref{lem: in-degrees highest rets} the in-degrees of the reticulations in $H$ are the same in both networks. Let $S$ be the set of second highest reticulations in $R$, such that each reticulation $r \in S$ only has ancestors in $H$. By applying \autoref{lem: indeg ret vis ancestors} the in-degrees of the reticulations in $S$ are also the same in both networks. Now, in the same way, the in-degrees of the third highest reticulations are equal and so on and so forth. This process must terminate as we only consider finite graphs. Thus, by repeated application of \autoref{lem: indeg ret vis ancestors}, we see that the in-degrees of all reticulations in $R$ are the same in both networks.
\end{proof}

\subsection{Determining stable nodes} 
\label{sec: mu stabilitiy}
In this subsection, we work with stack-free phylogenetic networks, denoted by $\net = (V,E)$. 
\autoref{thm: indeg ret vis} shows that if two reticulation-visible networks have equal $\mu$-representations, identical reticulation sets, and every reticulation is stable for at least one common leaf in both networks, then the in-degrees of the reticulations are guaranteed to be equal in both networks. These conditions ensure that the same system of equations that govern the in-degrees of reticulations holds for both networks and it has a unique solution. \medskip

To reiterate, we're trying to find the sub-class of semi-binary stack-free networks for which equal $\mu$-representations imply equal in-degrees. Now we have shown that this could be a sub-class of reticulation-visible networks, but  
we need to determine under which conditions the reticulation sets are equal and the reticulations are stable for common leaves. 
In the case of stack-free orchard networks the reticulations correspond to $\mu$-vectors with multiplicity 2. However, the stability of reticulations with respect to specific leaves may still differ between the networks, even if they share the same $\mu$-representation. Therefore, the next step will be to demonstrate the conditions under which the stability of a node with respect to a leaf is determined by the $\mu$-representation.

\subsubsection{Preliminary lemmas} \label{sec: prelim lemmas strong ret vis}
Recall that a tree-clone is a tree-node for which there exists another tree-node with the same $\mu$-vector.

\begin{lemma} \label{lem: tree-clones not stable}
    Tree-clones are not stable.
\end{lemma}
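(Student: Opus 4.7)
The plan is to argue by contradiction. Suppose $v$ is a tree-clone that is stable with respect to some leaf $a\in X$; then by the definition of tree-clone there exists a tree-node $v'\ne v$ with $\mu(v') = \mu(v)$, and by stability every path from $\rho$ to $a$ passes through $v$. I will use the existence of $v'$ to construct a path from $\rho$ to $a$ that forces $v$ and $v'$ to lie in an ancestor/descendant relation, then derive a strict inequality $\mu(v)>\mu(v')$ (or the symmetric one) from the degree restrictions in a phylogenetic network, contradicting $\mu(v)=\mu(v')$.

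First I would observe that since $v$ is stable for $a$ and the network contains at least one $\rho\pth a$ path, $v$ lies on such a path, so $\mu(v)_a\ge 1$. Hence $\mu(v')_a=\mu(v)_a\ge 1$, which means there is a path $v'\pth a$. Concatenating with any $\rho\pth v'$ path yields a $\rho\pth a$ path that, by stability, must visit $v$. Therefore either $v$ lies on the $\rho\pth v'$ portion (so $v$ is a strict ancestor of $v'$) or $v$ lies on the $v'\pth a$ portion (so $v$ is a strict descendant of $v'$); the case $v=v'$ is excluded.

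Next I would derive a strict inequality in either case using the structure of phylogenetic networks. A tree-node cannot be a leaf in the tree-clone sense, because the only node with $\mu$-vector equal to a unit vector $e_a$ is the leaf $a$ itself: any internal node has out-degree at least $2$, and since $\mu$-vectors of children are strictly positive, such a node's $\mu$-vector cannot equal any unit vector. So in either case the tree-node sitting \emph{above} is an internal tree-node (or the root), and hence has out-degree at least $2$. If $v$ is a strict ancestor of $v'$, let $c_1,\dots,c_k$ ($k\ge 2$) be the children of $v$, with $c_1$ an ancestor of $v'$. Then $\mu(c_1)\ge \mu(v')$, while $\mu(c_j)$ is a nonzero nonnegative vector for every $j\ge 2$, so
\[
\mu(v) \;=\; \sum_{j=1}^{k}\mu(c_j) \;\ge\; \mu(c_1) + \mu(c_2) \;>\; \mu(v'),
\]
contradicting $\mu(v)=\mu(v')$. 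The case when $v$ is a strict descendant of $v'$ is entirely symmetric with the roles of $v$ and $v'$ swapped, and gives the contradiction $\mu(v')>\mu(v)$.

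The key ingredient is the absence of elementary internal nodes in a phylogenetic network: it forces every non-leaf tree-node to have out-degree at least two, which is what turns the weak inequality $\mu(c_1)\ge\mu(v')$ into the strict inequality $\mu(v)>\mu(v')$. I do not expect a real obstacle here; the only point that needs a moment of care is ruling out the degenerate situation where $v$ or $v'$ is a leaf, which is handled by the unit-vector observation above.
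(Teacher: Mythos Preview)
Your argument is correct and follows essentially the same route as the paper's proof: both hinge on the fact that two distinct tree-nodes with equal $\mu$-vectors cannot stand in an ancestor/descendant relation, and that stability of one would force such a relation via a $\rho\pth v'\pth a$ path. The paper states the strict inequality $\mu(v_1)>\mu(v_2)$ for distinct tree-nodes with $v_1$ above $v_2$ as a known fact and argues directly (incomparable $\Rightarrow$ the path through $v'$ avoids $v$), whereas you run the contrapositive and spell out the inequality via out-degree $\geq 2$; the content is the same.
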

\begin{proof}
    Let $u,v$ be distinct tree-clones. Recall that for tree-nodes we have $\mu(v_1) > \mu(v_2)$ whenever $v_1 \neq v_2$ and $v_1$ is an ancestor of $v_2$. Therefore, because $\mu(u) = \mu(v)$, it is clear that $u,v$ are neither ancestors nor descendants of each other. This means a path never visits both $u$ and $v$. Furthermore, because the $\mu$-vectors are equal, we know that for each leaf, such that $u$ is on a path to that leaf, $v$ is also on a path to that leaf. By our previous statement these paths must be distinct and neither contains both nodes. Therefore, there are no leaves such that all paths to that leaf contain $u$, nor are there any leaves such that all paths to that leaf contain $v$.
\end{proof}

\begin{lemma} \label{lem: ret child stability}
    A reticulation is stable with respect to a leaf if, and only if, its child is stable with respect to that leaf.
\end{lemma}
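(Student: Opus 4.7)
The plan is to exploit the structural restrictions imposed on a reticulation and its child in a stack-free network. Let $r$ be a reticulation with child $c$. Because the network is stack-free, $c$ is not a reticulation, hence $c$ is a tree-node with $\indeg{c}\le 1$. Since $r$ is a parent of $c$, we actually have $\indeg{c}=1$, so $r$ is the \emph{unique} parent of $c$. Moreover, by the definition of a phylogenetic network, $r$ has out-degree $1$, so $c$ is the unique child of $r$.

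For the forward direction, I would assume that $r$ is stable with respect to a leaf $a$, i.e., every directed path from the root to $a$ visits $r$. Fix such a path $P$. Since $a$ is a leaf and $r\ne a$ (reticulations have out-degree $1$, not $0$), the path $P$ must leave $r$ along some outgoing edge; but $(r,c)$ is the only such edge, so $P$ visits $c$ immediately after $r$. Hence every root-to-$a$ path visits $c$, and $c$ is stable with respect to $a$.

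For the reverse direction, assume $c$ is stable with respect to $a$, and fix any root-to-$a$ path $P$. Then $P$ visits $c$. Since the root has in-degree $0$ whereas $c$ does not (as $\indeg{c}=1$ and $r\ne\rho$ because $\indeg{r}\ge 2$), $c$ is not the starting node of $P$, so $P$ enters $c$ through some incoming edge of $c$. The only incoming edge of $c$ is $(r,c)$, so $P$ must visit $r$ immediately before $c$. Hence every root-to-$a$ path visits $r$, and $r$ is stable with respect to $a$.

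The argument is essentially a routine unpacking of definitions, so I do not anticipate any real obstacles; the only thing to be careful about is to justify why the path must both enter and leave the relevant node (so that stack-freeness, together with the facts that $r\ne\rho$ and $r$ is not a leaf, is actually used). No induction, no case analysis on $\mu$-vectors, and no appeal to previous lemmas beyond the definition of stack-freeness is needed.
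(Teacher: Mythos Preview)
Your proof is correct and follows essentially the same approach as the paper's: both arguments hinge on the fact that in a stack-free phylogenetic network a reticulation has a unique outgoing edge (by definition) and its child has a unique incoming edge (by stack-freeness), so any root-to-leaf path visits one if and only if it visits the other. The paper compresses this into a single sentence, whereas you spell out the edge cases ($r$ is not a leaf, $c$ is not the root), but the underlying idea is identical.
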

\begin{proof}
    In stack-free phylogenetic networks, all paths from the root to a leaf which visit a reticulation must also visit its child and vice versa, as by definition there is a single edge leaving each reticulation and a single edge going into its child.
\end{proof}

\begin{corollary}
    Nodes with the same $\mu$-vector as tree-clones are not stable.
\end{corollary}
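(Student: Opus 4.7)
The plan is to case-split on the type of the node $w$ with $\mu(w) = \mu(u)$, where $u$ is a tree-clone, and to reduce each case to \autoref{lem: tree-clones not stable}, via \autoref{lem: ret child stability} when $w$ is a reticulation. First I would note that every node sharing $\mu(u)$ with a tree-clone falls into one of two types: a tree-node, or a reticulation whose unique child (a tree-node, since $\net$ is stack-free) inherits the same $\mu$-vector.

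\textbf{Tree-node case.} If $w$ is a tree-node, then $w$ together with the pair of tree-nodes witnessing $u$'s status as a tree-clone forms a set of at least two tree-nodes sharing the $\mu$-vector $\mu(u)$; hence $w$ is itself a tree-clone and \autoref{lem: tree-clones not stable} immediately yields that $w$ is not stable.

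\textbf{Reticulation case.} If $w$ is a reticulation, then as $\net$ is stack-free and $w$ has out-degree $1$, its unique child $c$ is a tree-node with $\mu(c) = \mu(w) = \mu(u)$. I would then argue that $c$ cannot be a leaf: were $\mu(u) = e_a$ a unit vector, the tree-node $u$ would have $\mu(u) = e_a$, and since an internal tree-node has out-degree at least $2$ with non-zero children whose $\mu$-vectors sum to $e_a$, no proper partition of $e_a$ into non-zero $\mu$-vectors is possible, forcing $u$ to equal the leaf $a$ itself; this contradicts the existence of a distinct tree-node sharing $\mu(u)$. Consequently $c$ is a non-leaf tree-node with $\mu(c) = \mu(u)$, so $c$ is a tree-clone. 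Applying \autoref{lem: tree-clones not stable} to $c$ shows that $c$ is not stable, and \autoref{lem: ret child stability} then transfers non-stability from $c$ to $w$.

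\textbf{Main obstacle.} The only delicate point is excluding the degenerate subcase in which the reticulation's child is a leaf, which is handled via the simple observation that unit $\mu$-vectors can only belong to leaves themselves in a phylogenetic network. Everything else is a direct appeal to the two preceding lemmas.
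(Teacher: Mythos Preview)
Your proposal is correct and follows essentially the same two-case split as the paper's proof (tree-node versus reticulation, reducing via \autoref{lem: tree-clones not stable} and \autoref{lem: ret child stability}). The only difference is that you spend a paragraph excluding the possibility that the reticulation's child $c$ is a leaf; this is harmless but unnecessary, since $c$ being a tree-node with $\mu(c)=\mu(u)$ already makes $c$ a member of the set of tree-clones witnessing $u$, and \autoref{lem: tree-clones not stable} applies directly without any non-leaf hypothesis.
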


Either they are themselves tree-clones, or they are the reticulation parent of a tree-clone, which means by \autoref{lem: ret child stability} that they are not stable.

\begin{lemma} \label{lem: no tree-clones mult limit}
    Let $\mu(\net)$ be the $\mu$-representation of a stack-free network and let $\mu(v) \in \mu(\net)$ be a $\mu$-vector, such that there does not exist a pair of tree-clones whose $\mu$-vectors are equal to $\mu(v)$. Then $\mu(v)$ has multiplicity at most 2 in $\mu(\net)$.
\end{lemma}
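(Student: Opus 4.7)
The plan is to prove the contrapositive: assuming $\#\mu(v) \ge 3$ in $\mu(\net)$, I will show that at least two tree-nodes of~$\net$ carry the $\mu$-vector~$\mu(v)$, which is precisely a pair of tree-clones with this value. The key structural observation I will exploit is that in a stack-free network every reticulation~$r$ has out-degree one and its unique child~$c$ must be a tree-node (otherwise $r$ and $c$ would form a stack). Because~$r$ has a single outgoing edge, paths from~$r$ to any leaf correspond bijectively to paths from~$c$, so $\mu(r) = \mu(c)$.

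First I would partition the nodes of~$\net$ whose $\mu$-vector equals~$\mu(v)$ into the set~$T_v$ of tree-nodes and the set~$R_v$ of reticulations; by assumption $|T_v| + |R_v| \ge 3$. If $|T_v| \ge 2$, any two elements of~$T_v$ already form the required pair of tree-clones and we are done. The remaining case to handle is $|T_v| \le 1$, which forces $|R_v| \ge 2$.

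In that case I examine the unique child of each reticulation in~$R_v$. By the observation above, each such child is a tree-node with $\mu$-vector equal to~$\mu(v)$ and therefore lies in~$T_v$. The crucial sub-step is that distinct reticulations $r_1, r_2 \in R_v$ must have distinct children: if they shared a common child~$c$, then~$c$ would have in-degree at least~$2$ and would itself be a reticulation, contradicting stack-freeness applied to its reticulation parent~$r_1$. Hence~$T_v$ contains at least~$|R_v| \ge 2$ distinct tree-nodes, contradicting $|T_v| \le 1$.

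I do not anticipate any serious obstacle here. The argument is purely structural and rests on two facts: in a stack-free network a reticulation always shares its $\mu$-vector with a tree-node child, and any tree-node has in-degree at most one so it cannot be the common child of two reticulations. The only thing to be careful about is the bookkeeping that ensures the extra copies of $\mu(v)$ forced by having multiple reticulations with that $\mu$-vector are realised by \emph{distinct} tree-nodes rather than collapsing onto a single node.
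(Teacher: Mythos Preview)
Your argument is correct and follows essentially the same route as the paper's proof: both pass to the contrapositive, observe that multiplicity at least~$3$ forces at least two reticulations with $\mu$-vector~$\mu(v)$, and then use stack-freeness to produce two tree-nodes with that $\mu$-vector. Your write-up is in fact slightly more careful than the paper's, which asserts ``there exist two tree-nodes'' without spelling out your sub-step that distinct reticulations cannot share a child.
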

\begin{proof}
    Assume there are no tree-clones with $\mu$-vectors equal to $\mu(v)$, but assume $\mu(v)$ has multiplicity more than 2. Then there must be at least 2 distinct reticulations $r_1,r_2 \in R$ with the same $\mu$-vector $\mu(r_1) = \mu(r_2) = \mu(v)$. As reticulations have the same $\mu$-vectors as their children and in a stack-free network the children of reticulations are tree-nodes, this means there exist two tree-nodes with $\mu$-vectors equal to $\mu(v)$. However, as we assumed there are no tree-clones with $\mu$-vectors equal to $\mu(v)$, we have reached a contradiction.
\end{proof}

\begin{corollary} \label{cor: stable mult limit}
    \chris{Let $\net$} be a stack-free network with $\mu$-representation $\mu(\net)$. If $\mu(v)$ has a multiplicity greater than 2 in $\mu(\net)$ then all nodes $v \in V$ with $\mu$-vector equal to $\mu(v)$ are not stable.
\end{corollary}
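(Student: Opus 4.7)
The plan is to deduce the corollary directly from \autoref{lem: no tree-clones mult limit}, \autoref{lem: tree-clones not stable}, and \autoref{lem: ret child stability}, with stack-freeness doing the bridging work between reticulations and their children.

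First, I would take the contrapositive of \autoref{lem: no tree-clones mult limit}: if $\mu(v)$ has multiplicity greater than 2 in $\mu(\net)$, then there must exist a pair of tree-clones whose $\mu$-vectors both equal $\mu(v)$. By the definition of tree-clones given in \Cref{sec: preliminaries}, this means there are at least two distinct tree-nodes $t_1, t_2 \in V$ with $\mu(t_1) = \mu(t_2) = \mu(v)$.

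Next, I would take an arbitrary node $u \in V$ with $\mu(u) = \mu(v)$ and show it is not stable by a short case analysis. If $u$ is a tree-node, then together with (at least one of) $t_1, t_2$ it forms a pair of tree-clones, so $u$ is not stable by \autoref{lem: tree-clones not stable}. If instead $u$ is a reticulation, then its unique child $c$ satisfies $\mu(c) = \mu(u) = \mu(v)$; since $\net$ is stack-free, $c$ is necessarily a tree-node, and hence $c$ is a tree-clone (again paired with one of $t_1, t_2$, which is distinct from $c$ since there are at least two such tree-nodes). Then $c$ is not stable by \autoref{lem: tree-clones not stable}, and \autoref{lem: ret child stability} transfers this conclusion to $u$.

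The argument is a straightforward assembly of the preceding lemmas and does not present a significant obstacle; the only small care point is verifying in the reticulation case that the child $c$ is genuinely a tree-clone rather than coinciding with one of the $t_i$, which follows because we have at least two tree-nodes sharing the $\mu$-vector $\mu(v)$, so at least one of them differs from $c$.
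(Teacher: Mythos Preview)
Your proposal is correct and follows essentially the same approach as the paper: the paper's justification (given immediately after the corollary) is the one-line observation that any such node is either itself a tree-clone or the reticulation parent of a tree-clone, invoking \autoref{lem: tree-clones not stable} and \autoref{lem: ret child stability} exactly as you do. Your write-up is in fact more careful than the paper's, since you explicitly check that the reticulation's child is distinct from at least one of the $t_i$.
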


Instead of showing directly when the $\mu$-representation determines whether a $\mu$-vector belongs to a stable node, we will first show some other results which we will use for our argument. A \emph{bridge} or a \emph{cut-edge} is an edge, for which it holds that if the edge would be deleted the number of connected components of the graph goes up. In the case of phylogenetic networks, which are connected graphs, deleting a bridge makes the graph no longer connected.


Let the head of a bridge be called a \emph{bridge-node}. Then, \chris{since reticulation edges are never bridges}, bridge-nodes are tree-nodes. All leaves are automatically bridge-nodes, as they become isolated whenever the edge directed into them is deleted. The root is not a bridge-node because it has zero in-degree.

\begin{observation} \label{obs: bridges are stable_1}
    Let $v_b$ be a tree-node other than the root. Then, $v_b$ is a bridge-node if, and only if, all paths from ancestors of $v_b$ to leaves below $v_b$ pass through $v_b$.
\end{observation}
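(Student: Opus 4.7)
The plan is to prove both directions by contradiction, using the fact that a tree-node $v_b$ other than the root has a unique incoming edge $e = (u, v_b)$, so $v_b$ is a bridge-node if and only if $e$ is a bridge in the underlying undirected graph. Throughout I would restrict to $w$ a strict ancestor and $\ell$ a strict descendant of $v_b$, since the remaining cases make the path condition trivially true.

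For the forward direction, I would suppose $v_b$ is a bridge-node but that some strict ancestor $w$ of $v_b$ admits a directed path $P_1:w\pth\ell$ to a leaf $\ell$ below $v_b$ that avoids $v_b$. Every directed path from $w$ to $v_b$ must terminate with the edge $e$, so $w$ is also an ancestor of $u$, giving a directed path $P_2:w\pth u$ not using $e$. Together with a directed path $P_3:v_b\pth\ell$, which also avoids $e$, these three paths chain together (with $P_3$ traversed backwards) into an undirected $u$-to-$v_b$ walk in the graph with $e$ removed, contradicting that $e$ is a bridge.

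For the reverse direction, I would assume the path condition holds but that $e$ is not a bridge. Then the graph minus $e$ remains connected, so there is an undirected walk $Q$ from $v_b$ to the root $\rho$ avoiding $e$. Letting $D$ denote the set of descendants of $v_b$, the walk $Q$ must leave $D$ for the first time via an edge incident to some $d\in D$ and some $y\notin D$. Children of descendants of $v_b$ also lie in $D$, so this edge must be directed $y\to d$; and $d\neq v_b$ because $e$ is the unique in-edge of $v_b$ and $Q$ avoids $e$. Concatenating a directed path $\rho\pth y$ (which avoids $v_b$ because $y\notin D$), the edge $(y,d)$, and a directed path $d\pth\ell$ to some leaf $\ell$ below $d$ (which avoids $v_b$ by acyclicity, since $d$ is a strict descendant of $v_b$) yields a directed path from the ancestor $\rho$ to $\ell$ that avoids $v_b$, contradicting the assumption.

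The main obstacle is the careful bookkeeping in the undirected-connectivity argument for the reverse direction: one must correctly identify the first exit edge from $D$, rule out the orientation $d\to y$, and verify that each piece of the constructed path neither uses $e$ nor passes through $v_b$, which relies on acyclicity and on $v_b$ being a tree-node with a unique in-edge.
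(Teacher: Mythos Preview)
Your proof is correct. The paper states this result as an Observation without proof, treating the equivalence between being the head of a cut-edge and separating ancestors from descendant leaves as an elementary graph-theoretic fact; your argument simply fills in the details the paper omits. One small point you leave implicit in the reverse direction is that the concatenation $\rho\pth y$, $(y,d)$, $d\pth\ell$ is genuinely a directed path and not just a walk, but this is immediate since any directed walk in a DAG has no repeated vertices.
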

In other words, this \chris{observation} implies that a bridge-node is stable for all leaves below it. This means that if $v_b$ is a bridge-node, then nodes which lie on a path to a leaf below $v_b$ are either ancestors or descendants of $v_b$.



\begin{observation} \label{obs: bridges are stable_2}
    Let $v_b$ be a bridge-node. Then, for $u \in V$, $u \neq v_b$ 
    \begin{itemize}
        \item $\mu(u) \geq \mu(v_b) \iff \mu(u)$ belongs 
        to \chris{an} ancestor of $v_b$.
        \item $\mu(u) < \mu(v_b) \iff \mu(u)$ belongs
        to \chris{a} descendant of $v_b$.
    \end{itemize}
\end{observation}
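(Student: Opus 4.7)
The plan is to partition $V\setminus\{v_b\}$ into three classes---ancestors of $v_b$, strict descendants of $v_b$, and \emph{parallel} nodes (those that are neither ancestor nor descendant)---and to show that these respectively yield $\mu(u)\geq\mu(v_b)$, $\mu(u)<\mu(v_b)$, and vectors incomparable to $\mu(v_b)$. Both biconditionals then drop out immediately: if $\mu(u)$ equals the $\mu$-vector of some ancestor $u'$ then $\mu(u)=\mu(u')\geq\mu(v_b)$ (and analogously for descendants), and conversely the case analysis forces $u$ itself to be an ancestor whenever $\mu(u)\geq\mu(v_b)$, and a descendant whenever $\mu(u)<\mu(v_b)$.

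The ancestor and descendant cases are handled by path composition. If $u$ is an ancestor of $v_b$, concatenating any path $u\pth v_b$ with the paths counted by $\mu(v_b)_a$ gives $\mu(u)_a\geq\mu(v_b)_a$ for every leaf $a$. If $u$ is a strict descendant, the reverse composition first gives $\mu(u)\leq\mu(v_b)$. To upgrade this to a strict inequality I would use that $v_b$ cannot be a leaf (otherwise no strict descendant exists) and, as a non-root tree-node in a phylogenetic network, has out-degree at least $2$, so $\mu(v_b)=\sum_i\mu(c_i)$ with at least two nonzero summands. I would then split on whether $u$ is a descendant of several of the $c_i$ or of exactly one: in the first subcase $P_{v_b,u}\geq 2$ and hence $\mu(v_b)_a\geq 2\mu(u)_a>\mu(u)_a$ for any leaf $a$ below $u$; in the second $u$ lies below only one child $c_1$, and any leaf $a$ below another child $c_j$ yields paths $v_b\to c_j\pth a$ that avoid $u$ and contribute at least $\mu(c_j)_a\geq 1$ beyond the paths through $u$, forcing $\mu(v_b)_a>\mu(u)_a$.

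The parallel case is where the bridge hypothesis is genuinely used. The bridge $e=pv_b$ splits the underlying undirected graph into two components: the descendant component $D$ containing $v_b$ and all its descendants, and the ancestor component containing $\rho$ and $p$. A parallel $u$ must lie outside $D$, since any directed path from $u$ into $D$ would have to traverse $e$ and would force $u$ to be an ancestor of $v_b$. Consequently $\mu(u)_a=0$ for every leaf $a$ in $D$ while $\mu(v_b)_a\geq 1$ for at least one such leaf, so $\mu(u)\not\geq\mu(v_b)$; simultaneously $u$ has its own descendant leaves, all of which lie outside $D$, so for one of them $\mu(u)_a\geq 1>0=\mu(v_b)_a$, giving $\mu(u)\not\leq\mu(v_b)$ and a fortiori $\mu(u)\not<\mu(v_b)$.

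The main obstacle will be producing the strict inequality in the descendant case: the bound $\mu(u)\leq\mu(v_b)$ is automatic from path concatenation, but ruling out equality requires the out-degree-at-least-$2$ structure of $v_b$ together with the child-level case split above. The parallel case, in contrast, is essentially a direct translation of the bridge property into path-count language, so the substance of the observation lies in organizing these three case analyses cleanly.
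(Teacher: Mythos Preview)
Your argument is correct. The paper records this as an observation without a formal proof, deriving it from the preceding observation that every root-to-leaf path for a leaf below $v_b$ passes through $v_b$; your two-component bridge decomposition is exactly the graph-theoretic content of that fact, and the only inessential detour is the case split in the descendant argument, since $\mu(u)\le\mu(c_i)<\sum_j\mu(c_j)=\mu(v_b)$ already yields the strict inequality regardless of how many children of $v_b$ lie above $u$.
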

Previously, these implications only held in one direction (right to left) for all tree-nodes.

\begin{observation} \label{obs: strongly stable}
    If there is a tree-path from the child $c_r$ of a reticulation $r$ to a bridge-node $v_b$, then $r$ is stable with respect to all leaves below $v_b$.
\end{observation}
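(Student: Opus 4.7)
The plan is to fix an arbitrary leaf $a$ below $v_b$ and argue that every root-to-$a$ path must pass through $r$. Since $v_b$ is a bridge-node, the earlier observation that bridge-nodes are stable for all leaves below them immediately gives that every path from $\rho$ to $a$ visits $v_b$. So the real work is to show that any such path, after reaching $v_b$, must also have visited $r$.

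For this I would trace the path backwards from $v_b$ along the given tree-path $c_r \pth v_b$. The key fact is that every node on a tree-path is a tree-node, hence has in-degree at most $1$. Writing the tree-path as $c_r = w_0, w_1, \ldots, w_k = v_b$, each $w_i$ with $i \ge 1$ has $w_{i-1}$ as its unique parent (since $w_i$ is a tree-node on the path, and it does have at least one parent, namely $w_{i-1}$). Therefore any path reaching $w_i$ from the root must enter $w_i$ through $w_{i-1}$. Applying this from $i = k$ down to $i = 1$, any root-to-$a$ path is forced to contain the reverse of the tree-path, and in particular passes through $c_r$.

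Finally I would invoke the stack-free hypothesis to handle the last step: because $\net$ is stack-free, the child $c_r$ of the reticulation $r$ cannot itself be a reticulation, so $c_r$ is a tree-node with in-degree exactly $1$, and its unique parent is $r$. Hence any path reaching $c_r$ from the root must enter through $r$, so every path from $\rho$ to $a$ visits $r$. Since $a$ was an arbitrary leaf below $v_b$, $r$ is stable with respect to all such leaves.

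I do not foresee a serious obstacle here; the statement is essentially a bookkeeping argument combining \autoref{obs: bridges are stable_1} with the in-degree constraint on tree-nodes and the definition of stack-freeness. The one point that needs a little care is making sure the backwards tracing is justified at each step, i.e., explicitly using that each $w_i$ on the tree-path is a tree-node and therefore has a unique parent, and using stack-freeness exactly once at the final step to control the parent of $c_r$.
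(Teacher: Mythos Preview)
Your argument is correct; the paper states this as an observation without an explicit proof, and your verification via \autoref{obs: bridges are stable_1} followed by backwards tracing along the tree-path using the in-degree bound on tree-nodes is the natural one. One minor point: stack-freeness is not actually needed at the final step, since $c_r$ is the initial vertex of a tree-path and hence already a tree-node by definition, so $r$ is automatically its unique parent.
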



Let us define the set $A_b$ as the subset of the ancestors of a bridge-node $v_b$, whose $\mu$-vectors have multiplicity \chris{exactly} 2 in $\mu(\net)$.
\chris{While we have defined $\mu$ on networks and on individual nodes, we generalize this definition to a set of vertices.}
For any set $S \subseteq V$, let $\mu(\net)_S$ be the multiset of $\mu$-vectors of the nodes in $S$.
\chris{In other words, $\mu(\net)_S = \{\mu(v)\in \mu(\net): v\in S\}$.}
Then $\mu(\net)_{A_b}$ is exactly the multiset $\{\mu(v): \mu(v) \geq \mu(v_b), \#\mu(v) = 2\}$.
By \autoref{obs: bridges are stable_2}, each vector in $\mu(\net)_{A_b}$ only belongs to ancestors of $v_b$. Notice that $\mu(\net)_{A_b}$ is purely defined in terms of $\mu$-vectors if given a $\mu(v_b)$ which belongs to a bridge-node, and it can be determined from the $\mu$-representation, without any knowledge of the graph.
\chris{Here, we show that it can be used to find $\mu$-vectors which belong to stable reticulations.
In \Cref{sec: strongly ret vis}, we show how to determine $\mu$-vectors that belong to bridge-nodes.}

\begin{lemma} \label{lem: tree-clones not minimal ancestors of bridge}
    If $\mu(v)$ is minimal in $\mu(\net)_{A_b}$, then it belongs to a reticulation and to the child of that reticulation.
\end{lemma}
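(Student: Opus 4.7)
The plan is a proof by contradiction: suppose $\mu(v)$ is minimal in $\mu(\net)_{A_b}$ but does not belong to a reticulation and its child. Since $\#\mu(v)=2$, the two realizers of $\mu(v)$ must then both be tree-nodes, i.e.\ tree-clones $u_1,u_2\in A_b$. I would first record two easy reductions. (i) Neither $u_i$ equals $v_b$: if $u_1=v_b$, then $u_2$ is a tree-node proper ancestor of $v_b$ whose $\mu$-vector strictly exceeds $\mu(v_b)$ (since a non-leaf tree-node has out-degree $\ge 2$), contradicting $\mu(u_1)=\mu(u_2)$. (ii) $\#\mu(v_b)=1$: the case analysis from \autoref{lem: no tree-clones mult limit} together with \autoref{obs: bridges are stable_2} forces $\#\mu(v_b)\le 2$, and $\#\mu(v_b)=2$ would mean $v_b$ has a reticulation parent $r$ with $\mu(r)=\mu(v_b)$, putting $r$ in $A_b$ strictly below $\mu(v)$ and contradicting minimality. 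In particular, $\mu(u_i)>\mu(v_b)$ strictly and $v_b$ has a tree-node parent.

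The key step is a ``descent to a reticulation'' claim: for any two distinct tree-nodes $t_1,t_2$ that are ancestors of $v_b$, distinct from $v_b$, and satisfy $\mu(t_1)=\mu(t_2)$, there is a reticulation $r$ ancestor of $v_b$ with $\mu(r)<\mu(t_1)$ strictly. To prove it, pick paths $P_1 : t_1\pth v_b$ and $P_2:t_2\pth v_b$, and walk them backward from $v_b$ to identify the maximal common suffix. Because tree-clones are never ancestors of each other (the argument in \autoref{lem: tree-clones not stable}), this suffix cannot consume either path completely, so the back-walk terminates at a node $m$ whose predecessors on $P_1$ and $P_2$ differ. Thus $m$ has in-degree at least $2$ and is a reticulation, and it sits strictly below $t_1$ on $P_1$; because $t_1$ is a tree-node of out-degree $\ge 2$, the first step of $P_1$ strictly decreases $\mu$, so $\mu(m)<\mu(t_1)$.

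Finally I would iterate the descent claim. Applied to $(u_1,u_2)$ it produces a reticulation $r^{(1)}$ ancestor of $v_b$ with $\mu(r^{(1)})<\mu(v)$; if $\#\mu(r^{(1)})=2$ then $r^{(1)}\in A_b$ contradicts minimality. Otherwise $\#\mu(r^{(1)})\ge 3$; by \autoref{lem: no tree-clones mult limit} there exist tree-clones with $\mu$-vector $\mu(r^{(1)})$, and \autoref{obs: bridges are stable_2} places them among the ancestors of $v_b$ (since $\mu(r^{(1)})\ge\mu(v_b)$). A short argument shows $\mu(r^{(1)})>\mu(v_b)$ strictly (if equality held, the tree-child of $r^{(1)}$ would either force $v_b$ to have a reticulation parent or be a tree-node proper ancestor with $\mu=\mu(v_b)$, both impossible), so these tree-clones are not $v_b$, and the descent claim applies again to produce $r^{(2)}$ with $\mu(r^{(2)})<\mu(r^{(1)})$. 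This yields a strictly decreasing chain $\mu(v)>\mu(r^{(1)})>\mu(r^{(2)})>\cdots$ in $\mathbb{Z}^X$ bounded below by $\mu(v_b)$, which must terminate, and termination forces some $\#\mu(r^{(n)})=2$, whereupon $r^{(n)}\in A_b$ gives the desired contradiction. The main difficulty is the case $\#\mu\ge 3$: a single descent need not land on a multiplicity-$2$ reticulation, so one has to iterate carefully through the tree-clones forced by \autoref{lem: no tree-clones mult limit}, using \autoref{obs: bridges are stable_2} throughout to keep the newly produced tree-clones inside the ancestor set of $v_b$ and ensure the chain stays strictly decreasing.
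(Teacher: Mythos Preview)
Your proof is correct and follows the same overall skeleton as the paper's: assume the minimal vector belongs to a pair of tree-clones above $v_b$, produce a reticulation strictly below them that is still an ancestor of $v_b$, and iterate through the multiplicity-$\ge 3$ case using \autoref{lem: no tree-clones mult limit} and finiteness until a multiplicity-$2$ reticulation appears in $A_b$, contradicting minimality.

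The only substantive difference is how the intermediate reticulation is obtained. The paper argues via stability: since tree-clones are not stable (\autoref{lem: tree-clones not stable}), the path from a tree-clone down to the bridge-node $v_b$ cannot be a tree-path, so some reticulation must lie on it. You instead construct the reticulation explicitly by comparing two paths $t_1\pth v_b$ and $t_2\pth v_b$ and locating the top of their common suffix. The paper's route is slightly shorter (it needs only one of the two tree-clones to be an ancestor of $v_b$, and it avoids your preliminary reductions (i) and (ii)); your route is more self-contained and does not rely on the stability argument. One small point: in (ii), the claim that the reticulation parent $r$ with $\mu(r)=\mu(v_b)$ lies \emph{strictly} below $\mu(v)$ tacitly uses that $\mu(v)>\mu(v_b)$, which is a consequence of (i) rather than something established independently; you should make that dependence explicit.
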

\begin{proof}
    A $\mu$-vector with multiplicity 2 either belongs to a pair of tree-clones or to a reticulation and its child. Now, if $\mu(v_1) \in \mu(\net)_{A_b}$ belongs to a pair of tree-clones $v_1,v_2$ with $\mu(v_1) = \mu(v_2)$, then there cannot be a tree-path from either of them to a bridge-node $v_b$, because by \autoref{lem: tree-clones not stable}, tree-clones are not stable. Therefore, $v_1$ and $v_2$ must have reticulation descendants $r_1,r_2$, who are  ancestors of $v_b$. Note that $r_1$ could be equal to $r_2$. If the multiplicity of either $r_1$ or $r_2$ is greater than 2, then their child must be a tree-clone, by the contrapositive of \autoref{lem: no tree-clones mult limit}. By the same argument there must then be other reticulation descendants of $v_1$ and $v_2$ above $v_b$. We only consider finite graphs, and therefore, w.l.o.g. we can assume $\mu(r_1)$ and $\mu(r_2)$ have multiplicity 2 in $\mu(\net)$. Then, $r_1,r_2 \in A_b$, with $\mu(r_1)< \mu(v_1)$ and $\mu(r_2) < \mu(v_2)$. Therefore, $\mu(v_1)$ is not minimal in $A_b$. We can conclude that if $\mu(r)$ is minimal in $A_b$, then $r$ is a reticulation.
\end{proof}

\begin{lemma} \label{lem: lowest ret above bridge}
    There is a tree-path from the child of a reticulation $r$ to a bridge-node $v_b$ if, and only if, $\mu(r)$ is minimal in the set $\mu(\net)_{A_b}$.
\end{lemma}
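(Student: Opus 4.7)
The plan is to prove both directions by exploiting the `unique in-neighbor' property of~$v_b$, which propagates up any tree-path ending at~$v_b$.

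For the forward direction, assume there is a tree-path $c_r=v_0\pth v_k=v_b$. Each $v_i$ is a tree-node with unique in-neighbor $v_{i-1}$, so by a short induction every path from the root to any leaf below~$v_b$ passes through every~$v_i$; in particular each~$v_i$ is stable. By Lemma~\ref{lem: tree-clones not stable} no~$v_i$ is a tree-clone, and Lemma~\ref{lem: no tree-clones mult limit} then gives $\mult{c_r}=2$, so $\mu(r)\in\mu(\net)_{A_b}$. To prove minimality, suppose $\mu(v)\in\mu(\net)_{A_b}$ with $\mu(v)<\mu(r)$. The same in-neighbor induction (running downward from $v_b$) shows that every ancestor of~$v_b$ is either on the tree-path or a strict ancestor of~$c_r$. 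The former forces $v=v_i$ for some~$i$; if $i=0$ then $\mu(v)=\mu(c_r)=\mu(r)$, contradicting $\mu(v)<\mu(r)$, and if $i\ge 1$ then $v$ is a stable tree-node whose unique parent $v_{i-1}$ is also a tree-node, so $\mult{v}=2$ would force $v$ to be a tree-clone (contradicting Lemma~\ref{lem: tree-clones not stable}). The latter yields $\mu(v)\ge\mu(c_r)=\mu(r)$, again a contradiction.

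For the reverse direction, assume $\mu(r)$ is minimal in $\mu(\net)_{A_b}$, so by Lemma~\ref{lem: tree-clones not minimal ancestors of bridge} we may take $r$ to be a reticulation with child~$c_r$. Since $r$ is a reticulation ancestor of~$v_b$, climbing upward from~$v_b$ through unique in-neighbors must eventually reach a reticulation (otherwise all strict ancestors of~$v_b$ would be tree-nodes). Let $r^*$ be the first reticulation encountered; then $c_{r^*}$ admits a tree-path to~$v_b$, and by the same in-neighbor induction every reticulation ancestor of~$v_b$ is an ancestor of~$c_{r^*}$. In particular $r$ is an ancestor of~$c_{r^*}$, giving $\mu(r)\ge\mu(r^*)$; combined with the minimality of $\mu(r)$, this forces $\mu(r)=\mu(r^*)$ and hence $\mu(c_r)=\mu(c_{r^*})$. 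If $c_r\ne c_{r^*}$ these would be two distinct tree-nodes with the same $\mu$-vector, i.e.~tree-clones, contradicting that $c_{r^*}$ is stable (Observation~\ref{obs: strongly stable} plus Lemma~\ref{lem: ret child stability}). Hence $c_r=c_{r^*}$, and unique parenthood forces $r=r^*$, so $c_r$ admits the required tree-path to~$v_b$.

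The main obstacle is handling the potential smaller multiplicity-$2$ element in the forward direction: one might worry that some~$v$ hides as a descendant of~$c_r$ off the tree-path. The unique-in-neighbor induction eliminates this possibility by funneling every ancestor of~$v_b$ either onto the tree-path (where stability rules out multiplicity~$2$) or above~$c_r$ (where $\mu$-vectors dominate~$\mu(r)$). The reverse direction then reduces to matching two candidate reticulations~$r$ and~$r^*$ via a tree-clone argument.
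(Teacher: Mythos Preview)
Your argument is essentially correct and follows the same overall strategy as the paper: both proofs hinge on identifying the unique ``lowest reticulation above $v_b$'' via the chain of unique in-neighbours, and comparing it with $r$. The paper's forward direction appeals to Lemma~\ref{lem: tree-clones not minimal ancestors of bridge} to push every multiplicity-$2$ ancestor above~$r$, while you replace this by the explicit in-neighbour induction and a case analysis on the tree-path; your version is a little more self-contained. For the converse, the paper argues by contradiction (if no tree-path, exhibit a strictly smaller $\mu(r_\ell)\in\mu(\net)_{A_b}$), whereas you construct~$r^*$ and then identify $r=r^*$ via the tree-clone step---conceptually the same move.

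There is one small gap to close in your converse: when you write ``combined with the minimality of $\mu(r)$, this forces $\mu(r)=\mu(r^*)$'', you are implicitly using $\mu(r^*)\in\mu(\net)_{A_b}$, which you have not verified. It does hold: since $c_{r^*}$ has a tree-path to~$v_b$, Observation~\ref{obs: strongly stable} makes $r^*$ stable, Corollary~\ref{cor: stable mult limit} gives $\mult{r^*}\le 2$, and together with the child~$c_{r^*}$ you get $\mult{r^*}=2$; hence $r^*\in A_b$. (Equivalently, apply your forward direction to~$r^*$.) With that line added, the proof is complete.
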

\begin{proof}
    For the first direction, assume there is a tree-path from the child of a reticulation $r$ to a bridge $v_b$. Then, by \autoref{obs: strongly stable}, $r$ is stable. Therefore, by \autoref{cor: stable mult limit}, $\mu(r)$ has multiplicity 2 in $\mu(\net)$. Furthermore, $r$ is an ancestor of $v_b$, therefore $r \in A_b$. Also note, that 
    $r$ is the lowest reticulation above $v_b$. Then, by 
    \autoref{lem: tree-clones not minimal ancestors of bridge}, all ancestors of $v_b$, with $\mu$-vectors which have multiplicity 2, are ancestors of $r$. Thus, we have $\mu(v) \geq \mu(r)$, for $\mu(v) \in \mu(\net)_{A_b}$. Therefore, $\mu(r)$ is minimal in $A_b$. This proves the first direction.\medskip

    \chris{The other direction we will prove by contradiction.} Assume $\mu(r)$ is minimal in $\mu(\net)_{A_b}$. By \autoref{lem: tree-clones not minimal ancestors of bridge}, $\mu(r)$ belongs to a reticulation $r$ and its child $c_r$. Furthermore, because $\mu(c_r) = \mu(r) \in \mu(\net)_{A_b}$, both $r$ and $c_r$ are ancestors of $v_b$. Therefore, there is a path from $c_r$ to $v_b$. If the path from $c_r$ to $v_b$ is not a tree-path, then $r$ is not the lowest reticulation above $v_b$. Let $r_\ell$ be the lowest reticulation above $v_b$, then 
    $r_\ell$ is a descendant of $r$. By \autoref{obs: strongly stable}, $r_\ell$ is stable. Therefore, by \autoref{cor: stable mult limit}, $\#\mu(r_l) \leq 2$. The child of $r_\ell$ has the same $\mu$-vector, therefore $\#\mu(r_l) = 2$.
    This means that $\mu(r_\ell) \in \mu(\net)_{A_b}$ and because $r_\ell$ is a descendant of $r$, also $\mu(r)>\mu(r_\ell)$. This contradicts our assumption that $\mu(r)$ is minimal in $\mu(\net)_{A_b}$. Therefore, $r$ is the lowest reticulation above $v_b$, and the path from $c_r$ to $v_b$ is a tree-path.
\end{proof}

\chris{\Cref{lem: lowest ret above bridge} shows that given a vector $\mu(v_b)$ which belongs to a bridge-node, we can determine the $\mu$-vector belonging to the lowest reticulation above that bridge by finding the minimal $\mu$-vector $\mu(v) \geq \mu(v_b)$ with multiplicity 2.}

\subsubsection{Strongly reticulation-visible networks} \label{sec: strongly ret vis}

\begin{figure}
    \centering
    \includegraphics[width = 0.2\textwidth]{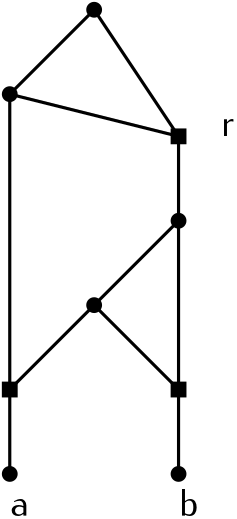}
    \caption{A phylogenetic network which is reticulation-visible but not strongly reticulation-visible. Reticulation $r$ is stable with respect to leaf $b$, but there is no tree-path from the child of $r$ to a bridge. The only bridges are the edges ending in $a$ and $b$.}
    \label{fig: not strongly retvis}
\end{figure}

The lemmas in the previous section show that if we know that a given $\mu$-vector belongs to a bridge-node, then we can find the $\mu$-vector of the lowest reticulation above this bridge-node and the reticulation will be stable for all leaves below the bridge-node (which are the leaves for which the $\mu$-vector of the bridge-node has non-zero coordinates). We therefore propose to consider the class of networks such that for each reticulation there is a tree-path from its child to a bridge. We will call this the class of \emph{strongly reticulation-visible networks}.
\medskip

Note that all strongly reticulation-visible networks are reticulation-visible. But there are reticulation-visible networks which are not strongly reticulation-visible. See \autoref{fig: not strongly retvis} for an example. Therefore the class of strongly reticulation-visible networks is a proper subclass of the class of reticulation-visible networks, which is itself a proper subclass of stack-free networks. If we can show that for strongly reticulation-visible networks it is possible to determine whether a $\mu$-vector belongs to a bridge-node from just the $\mu$-representation (\autoref{thm: bridge-node mu-rep}), we will have shown that the stability of reticulations can be determined. 
Knowing the stability of reticulations will show that the reticulation in-degrees can be determined (\autoref{thm: indeg ret vis}), finally leading to our main isomorphism result (\autoref{thm: strong ret vis encoding}). \medskip


We quickly summarize the collection of results obtained so far.
By \autoref{obs: bridges are stable_1}, bridge-nodes are stable. Therefore, if $v_b$ is a bridge-node, then by \autoref{cor: stable mult limit}, $\#\mu(v_b) \leq 2$. Furthermore, by \autoref{lem: tree-clones not stable}, $\mu(v_b)$ does not belong to a pair of tree-clones. Finally, all leaves are bridge-nodes, therefore all unit vectors in $\mu(\net)$ belong to bridge-nodes, so we only still have to consider $\mu$-vectors which are not unit vectors. \chris{Note that we also do not need to consider the unique maximal $\mu$-vector in $\mu(\net)$.}
\begin{observation}\label{obs:root mu vector}
    The $\mu$-vector~$\mu(\rho)$ belonging to the root of a network~$\net$ is the unique maximal $\mu$-vector in~$\mu(\net)$. 
\end{observation}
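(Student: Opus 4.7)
The plan is to prove the observation in two stages: first that $\mu(v)\le\mu(\rho)$ for every node~$v\in V$, and then that this inequality is strict whenever $v\neq\rho$. Together these give that $\mu(\rho)$ is the unique maximum in the coordinatewise partial order on $\mu$-vectors, and hence the unique maximal element of $\mu(\net)$.

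For the upper bound, I would exploit the fact that $\net$ is a rooted connected DAG, so every $v\in V$ is a descendant of $\rho$ and hence $P_{\rho v}\ge 1$. Concatenating any path $\rho\pth v$ with any path $v\pth a$ yields a path $\rho\pth a$, which shows
\begin{equation*}
  \mu(\rho)_a \;\ge\; P_{\rho v}\cdot \mu(v)_a \;\ge\; \mu(v)_a
  \quad\text{for every } a\in X.
\end{equation*}
This already establishes $\mu(v)\le\mu(\rho)$, so $\mu(\rho)$ is certainly maximal.

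For strict inequality when $v\neq\rho$, I would first dispose of the trivial case in which $\net$ consists of a single leaf, and then assume $\delta^+(\rho)\ge 2$ by the definition of a phylogenetic network. The argument splits on whether every child of $\rho$ is an ancestor of $v$. If some child $c$ of $\rho$ is \emph{not} an ancestor of $v$, then any path of the form $\rho\to c\pth a$, for a leaf $a$ below $c$, avoids $v$ entirely; adding this contribution to the $P_{\rho v}\cdot\mu(v)_a$ paths through $v$ gives $\mu(\rho)_a > \mu(v)_a$. Otherwise every child of $\rho$ is an ancestor of $v$, so $P_{\rho v}\ge 2$, and choosing any leaf $a$ below $v$ (for which $\mu(v)_a\ge 1$) gives $\mu(\rho)_a \ge 2\mu(v)_a > \mu(v)_a$.

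The main point to check carefully, though minor, is that every internal node of a phylogenetic network has at least one leaf descendant, so that the leaves $a$ below $c$ and below $v$ used above actually exist; this follows from finiteness and acyclicity of $\net$ by repeatedly descending along outgoing edges. Once this is in place, the two cases cover all $v\neq\rho$, yielding $\mu(v)<\mu(\rho)$ strictly and hence the uniqueness of the maximum.
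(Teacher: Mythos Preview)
The paper states this as an observation without proof, implicitly relying on the remark in the preliminaries that $\mu(u)\le\mu(v)$ whenever $u$ is a descendant of $v$; since every node is a descendant of $\rho$, maximality is immediate, and the authors evidently regard the strictness (uniqueness) as clear. Your argument is correct and supplies precisely that missing detail: the case split on whether every child of $\rho$ is an ancestor of $v$ cleanly yields $\mu(v)<\mu(\rho)$ for $v\neq\rho$, and the auxiliary fact that every node has a leaf descendant is justified as you indicate. There is nothing to compare methodologically, since the paper offers no argument of its own.
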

\chris{Now let $\net = (V,E)$ be a semi-binary stack-free network. \Cref{thm: bridge-node mu-rep} gives a set of conditions on the $\mu$-representation $\mu(\net)$ which determine whether a given $\mu$-vector belongs to a bridge-node.}

\begin{proposition} \label{thm: bridge-node mu-rep} 
    Let~$\mu(v_b)$ be a non-unit vector that is not $\mu(\rho)$.
    Then, it belongs to a bridge-node if, and only if, 
    \begin{enumerate}
        \item there is exactly one pair $\mu(k),\mu(\ell) \in \mu(\net)$ such that $\mu(v_b) = \mu(k) + \mu(\ell)$, and
        \item \chris{one of the following holds}
        \begin{enumerate}
            \item \chris{at least one of} $\mu(k)$ and $\mu(\ell)$ \chris{has multiplicity 1} in $\mu(\net)$; or
            \item $\mu(k) = \mu(\ell)$ \chris{and} $\#\mu(k) = 2$, and
        \end{enumerate}
        \item $\mu(\net)$ does not contain vectors $\mu(x),\mu(y),\mu(z)$, such that $\mu(z) \nleq \mu(v_b)$, $\mu(z) = \mu(x) + \mu(y)$ and $\mu(x) < \mu(v_b)$.
    \end{enumerate}
\end{proposition}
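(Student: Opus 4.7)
The plan is to prove both directions of the equivalence, relying primarily on \autoref{obs: bridges are stable_2} to translate between $\mu$-vector comparisons and ancestor/descendant relationships relative to a candidate bridge-node, together with the semi-binary and stack-free structural assumptions.

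For the forward direction, I would start from a bridge-node $v_b$ with the prescribed $\mu$-vector. Since reticulation edges are never bridges, $v_b$ is a tree-node; since $\mu(v_b)$ is neither a unit vector nor $\mu(\rho)$, the node $v_b$ is neither a leaf nor the root, so the semi-binary assumption forces $\indeg{v_b}=1$ and out-degree~$2$. Let $k, \ell$ denote its two children, giving $\mu(v_b) = \mu(k) + \mu(\ell)$ and establishing existence in condition~1. For uniqueness, any alternative decomposition $\mu(v_b) = \mu(x) + \mu(y)$ with both summands in $\mu(\net)$ must have $\mu(x), \mu(y) < \mu(v_b)$, so by \autoref{obs: bridges are stable_2} the witnessing nodes lie below $v_b$; the bridge property combined with the out-degree of $v_b$ forces the pair to coincide with $\{\mu(k), \mu(\ell)\}$. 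Condition~2 then follows by case analysis. If $\mu(k) = \mu(\ell)$, then $\mult{k} \ge 2$; a third node with this $\mu$-vector would, by \autoref{obs: bridges are stable_2}, lie strictly below $v_b$, and combining \autoref{lem: no tree-clones mult limit} with an analysis of where such a node's parents sit relative to $v_b$ yields either a cycle or a bridge-violation. If $\mu(k) \ne \mu(\ell)$, at least one of $k, \ell$ must be a tree-node whose $\mu$-vector is not shared by any other node, so its multiplicity is~$1$. For condition~3, any violating triple $(x, y, z)$ would place $z$ outside the descendants of $v_b$ and $x$ strictly below $v_b$; the bridge property forces all paths from $z$ to leaves below $v_b$ to go via $v_b$, but the decomposition $\mu(z) = \mu(x) + \mu(y)$ with $\mu(x) < \mu(v_b)$ would demand path contributions bypassing $v_b$, a contradiction.

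For the backward direction, I would argue by contrapositive: assuming no node with $\mu$-vector $\mu(v_b)$ is a bridge-node, deduce that one of the three conditions fails. Fix any node $v$ with $\mu(v) = \mu(v_b)$; by stack-freeness, if $v$ is a reticulation then its unique tree child shares the same $\mu$-vector, so without loss of generality $v$ is a tree-node whose incoming edge is not a bridge. Then there is a bypass: a path entering $v$'s subgraph at some strict descendant without visiting $v$. I would argue that this bypass structure produces either (i) an alternative pair of $\mu$-vectors in $\mu(\net)$ summing to $\mu(v_b)$, violating condition~1; (ii) an extra node sharing a $\mu$-vector with a child of $v$, so that multiplicities exceed what condition~2 permits; or (iii) a triple $(x, y, z)$ with $\mu(z) \nleq \mu(v_b)$, $\mu(x) < \mu(v_b)$, and $\mu(z) = \mu(x) + \mu(y)$, violating condition~3.

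The main obstacle will be the uniqueness of decomposition in condition~1 in the forward direction and, dually, the explicit construction of a contradiction in the backward direction. Both steps require a careful enumeration of how reticulations, tree-clones, and bypass paths can or cannot yield coincidental $\mu$-vector equalities, and the proof will depend essentially on the stack-free and semi-binary hypotheses to rule out pathological configurations, such as cycles created by reticulation children whose second parents lie within $v_b$'s own subgraph.
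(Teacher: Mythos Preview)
Your overall architecture matches the paper's: forward direction via structural analysis of the children of a bridge-node using \autoref{obs: bridges are stable_2}, backward direction by contrapositive via a bypass path. The paper organises the forward direction into auxiliary lemmas (on multiplicities of bridge-node children, on uniqueness of the sum decomposition, and on a factorisation of $\mu$-vectors relative to a bridge), but the underlying arguments are the ones you sketch.

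There is, however, one genuine gap in your treatment of condition~3 in the forward direction. From $\mu(z) = \mu(x) + \mu(y)$ with $\mu(x) < \mu(v_b)$ you cannot directly conclude that paths from $z$ bypass $v_b$: the vectors $\mu(x),\mu(y)$ need not belong to the children of $z$, only to some nodes in $\net$ whose $\mu$-vectors happen to sum to $\mu(z)$. The paper closes this gap with a factorisation lemma: if $v_b$ is a bridge-node, then every node $w$ that is not a strict descendant of $v_b$ satisfies $\mu(w) = P_{wv_b}\,\mu(v_b) + \mu'(w)$, where $\mu'(w)$ is supported only on leaves not below $v_b$. A purely arithmetic consequence is that no sum $\mu(k)+\mu(\ell)$ with both summands $\not< \mu(v_b)$ can equal a sum $\mu(x)+\mu(y)$ with $\mu(x) < \mu(v_b)$. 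Hence \emph{every} two-term decomposition of $\mu(z)$, in particular the one given by the actual children of $z$, contains a term strictly below $\mu(v_b)$, and this is what yields the bypassing path and the contradiction. Your sketch would also benefit from explicitly separating, in the backward direction, the case $\mu(z) = \mu(v_b)$ (where $v_b$ is a tree-clone and a dedicated multiplicity analysis of the children of the two clones is needed to violate condition~1 or~2) from the case $\mu(z) \nleq \mu(v_b)$ (where condition~3 fails immediately); the paper makes this split explicit.
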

\begin{proof}
\chris{For a more formal proof see the appendix. Here we would like to give a proof by illustration. First we will show that the three conditions hold for any bridge node $v_b$.}

\begin{figure}[ht]
    \centering
    \includegraphics[width=0.5\linewidth]{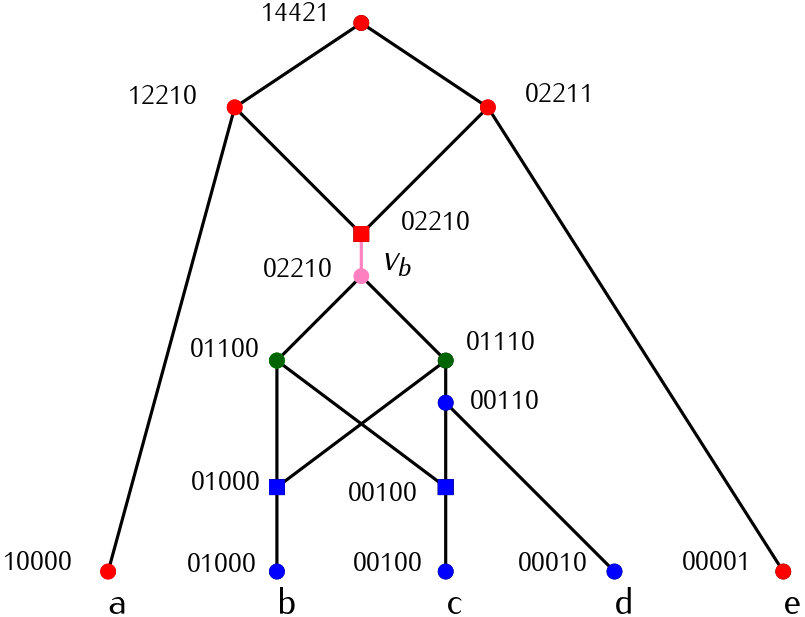}
    \caption{A semi-binary stack-free network with bridge-node $v_b$.}
    \label{fig: no other sum for bridge node children}
\end{figure}

 \chris{Observe \Cref{fig: no other sum for bridge node children} for an example of condition 1. Note that none of the red `outside' nodes have $\mu$-vectors which could ever contribute to a pair whose $\mu$-vectors sum up to $\mu(v_b)$ because their $\mu$-vectors are simply too large and/or contain non-zero coordinates that correspond to leaves that are not below $v_b$. Note also that none of the $\mu$-vectors of the blue `inside' nodes (which are not children of $v_b$) could ever contribute to a sum of only two $\mu$-vectors which sum up to $\mu(v_b)$ because they are simply too small. Only the green child nodes of $v_b$ have $\mu$-vectors which together sum up to $\mu(v_b)$. There is always just a single pair of $\mu$-vectors which sum up to $\mu(v_b)$ if $v_b$ is a bridge-node and they belong to the children of $v_b$.}\medskip

 \chris{For the second condition consider what it means if either \textit{(a)} or \textit{(b)} would not hold. Then the children of $v_b$ have distinct $\mu$-vectors which both have multiplicity 2 or higher or they have equal $\mu$-vectors and the multiplicity is 3 or higher. There are three such possible cases: either both children are reticulations (\autoref{fig: not two reticulation children}), or one of the children is a reticulation and the other is a tree-clone (\autoref{fig: not a ret and tree-clone child}), or both are tree-clones (\autoref{fig: not two tree-clone children}). In the final case, they either have distinct $\mu$-vectors, or they have equal $\mu$-vectors and there is a third node with the same $\mu$-vector.}

 \begin{figure}[ht]
     \centering
     \includegraphics[width=0.4\linewidth]{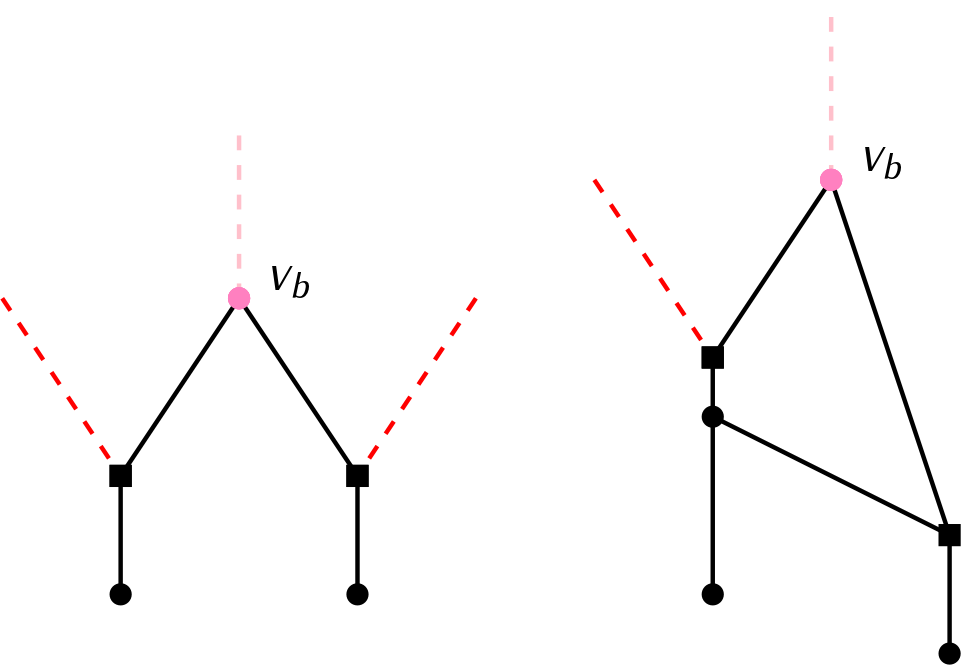}
     \caption{Two networks in which $v_b$ has two reticulation children. The dashed lines indicate that they connect somewhere above.}
     \label{fig: not two reticulation children}
 \end{figure}

 \chris{\Cref{fig: not two reticulation children} shows that even if one of the reticulation children of $v_b$ is a descendant of the other, there is a reticulation edge going into one of the reticulations (the red dashed line) which is not coming from $v_b$ and therefore $v_b$ is not a bridge-node.}

 \begin{figure}[ht]
     \centering
     \includegraphics[width=0.6\linewidth]{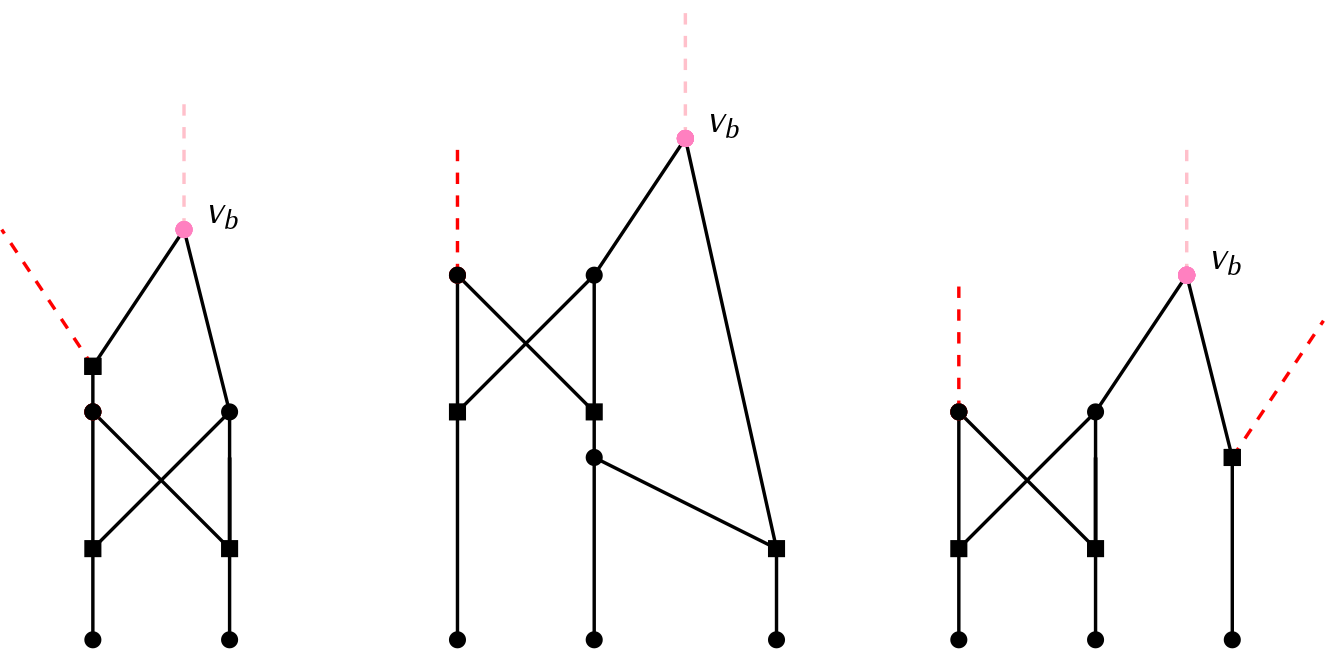}
     \caption{Three networks in which $v_b$ has a reticulation and a tree-clone child.}
     \label{fig: not a ret and tree-clone child}
 \end{figure}

\chris{\Cref{fig: not a ret and tree-clone child} illustrates the case when $v_b$ has a reticulation and a tree-clone as children. Note that even if either child is a descendant of the other child, then there is still a path (via the red dashed line) to some of the descendants of $v_b$ which does not pass through $v_b$. Therefore, $v_b$ cannot be a bridge-node.}

\begin{figure}[ht]
    \centering
    \includegraphics[width=0.5\linewidth]{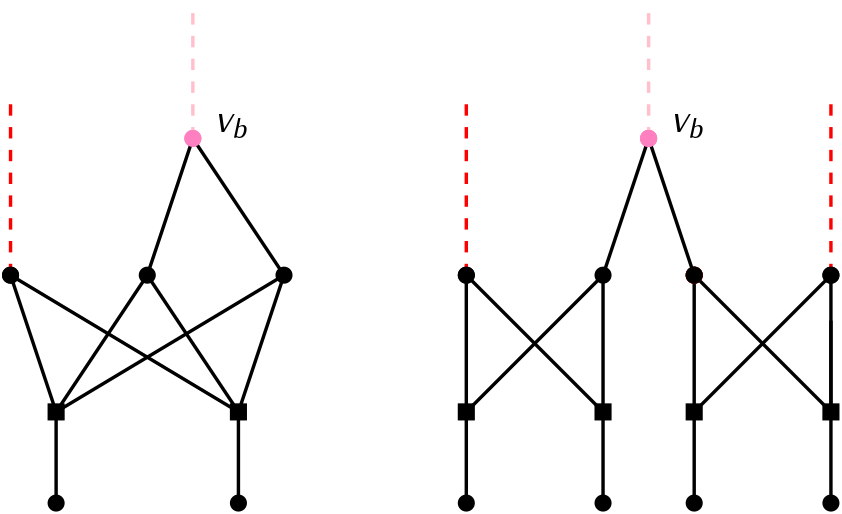}
    \caption{Two networks in which $v_b$ has two tree-clone children.}
    \label{fig: not two tree-clone children}
\end{figure}

\chris{Now assume $v_b$ has two children which are tree-clones which either have the same $\mu$-vector with multiplicity 3 or higher, or have distinct $\mu$-vectors. \Cref{fig: not two tree-clone children} illustrates that in either of these cases there must be a path (via the red dashed line) to descendants of $v_b$ which do not pass through $v_b$, which implies $v_b$ is not a bridge-node.}

\chris{For the third condition we refer back to \Cref{fig: no other sum for bridge node children}. Let us assume that $v_b$ is a bridge-node but $\mu(\net)$ does contain vectors $\mu(x),\mu(y),\mu(z)$, such that $\mu(z) \nleq \mu(v_b)$, $\mu(z) = \mu(x) + \mu(y)$ and $\mu(x) < \mu(v_b)$. Note that $\mu(z)$ has to belong to a red `outside' node and if it belongs to a reticulation then it also belongs to its tree-node child. Now imagine that the $\mu$-vector of such an outside tree-node is the sum of the $\mu$-vector of a blue `inside' node and another $\mu$-vector. Even if there are multiple pairs (of possible children) whose $\mu$-vectors sum up to the $\mu$-vector of $z$ this would imply $z$ has a child which is not an ancestor of $v_b$.
Yet, there are paths from this child to the leaves below $v_b$. 
This would mean there is a path to a leaf below $v_b$, which passes through $z$ and its child, which does not pass through $v_b$. This contradicts our assumption that $v_b$ is a bridge-node. 
Therefore if $v_b$ is a bridge-node, then  condition 3 must hold.} 

\chris{To prove the other direction, we will show that if $v_b$ is not a bridge-node then one of the above conditions does not hold. If $v_b$ is not a bridge-node, then there is a path from the root to a descendant of $v_b$ which does not pass through $v_b$. Now because $\mu(\rho) \not< \mu(v_b)$ and for any descendant $d$ of $v_b$, with $d \neq v_b$, we have $\mu(d) < \mu(v_b)$. This implies that there is a node $z$ on that path with $\mu(z) \not< \mu(v_b)$ which is the parent of a node $x$ with $\mu(x) < \mu(v_b)$ (because $v_b$ is not a bridge-node this does not necessarily mean that $x$ is itself a descendant of $v_b$ but that does not matter). There are two cases to consider: either $\mu(z) = \mu(v_b)$ or $\mu(z) \nleq \mu(v_b)$.}

\begin{figure}[h]
    \centering
    \includegraphics[width=0.7\linewidth]{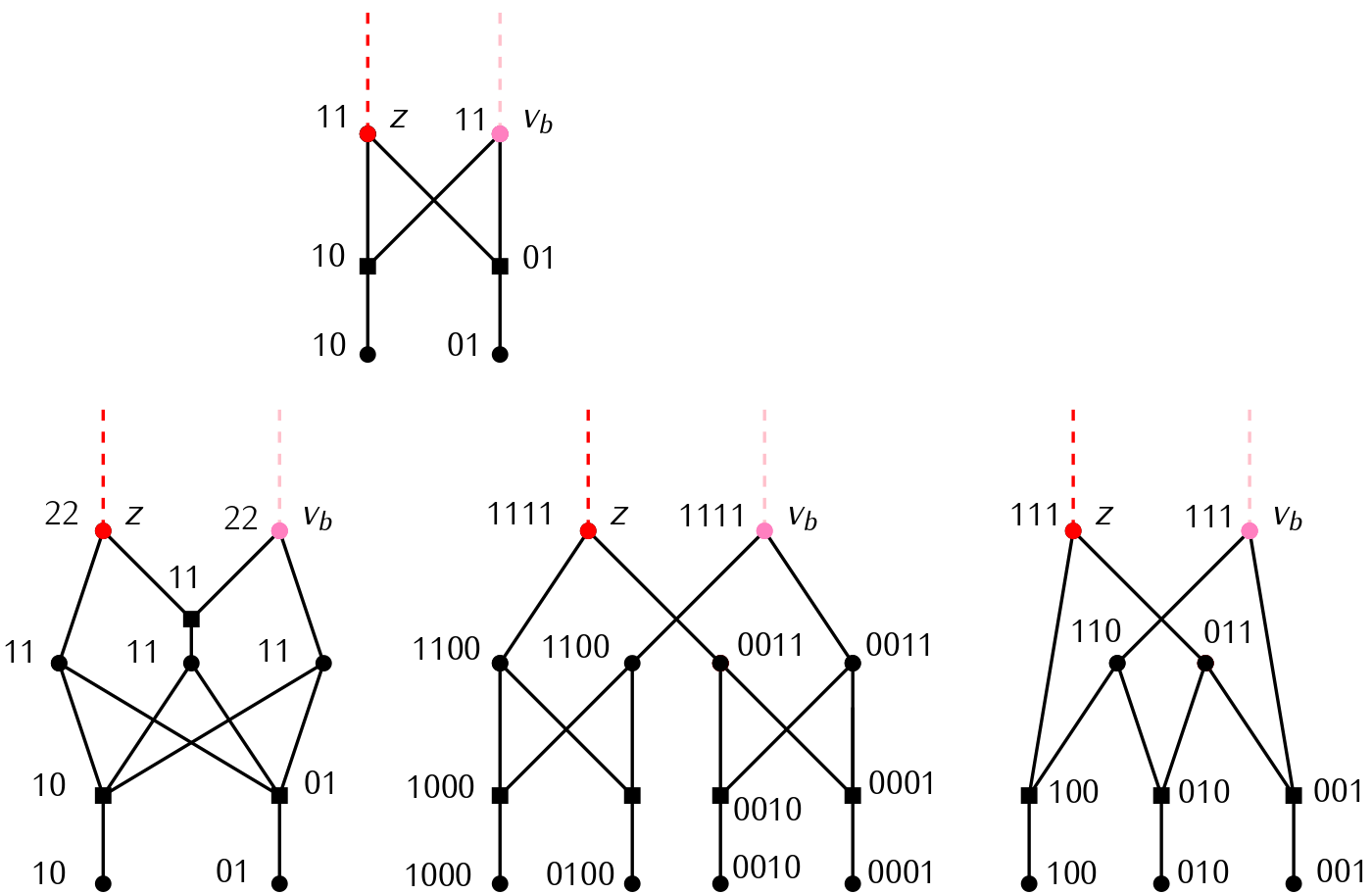}
    \caption{Four networks in which $z$ and $v_b$ are tree-clones with $\mu(z) = \mu(v_b)$.}
    \label{fig: not a tree-clone bridge}
\end{figure}

\chris{If $\mu(z) = \mu(v_b)$, then $\mu(v_b)$ is a tree-clone. \Cref{fig: not a tree-clone bridge} shows four networks where $v_b$ and $z$ share all, some, or none of their children. It can be seen that in the top network and the leftmost network where they share one or more of their children. 
Condition 2 of the proposition does not hold because the multiplicities of the $\mu$-vectors of the children are too high. In the third and fourth cases, they do not share any of their children. In the middle network, the children share the same pair of $\mu$-vectors and therefore the multiplicities are too high, again breaking condition 2. And in the rightmost network, there are two distinct pairs of $\mu$-vectors which sum up to the value of $\mu(v_b)$ and $\mu(z)$, which means that the first condition does not hold.}

\begin{figure}[ht]
    \centering
    \includegraphics[width=0.7\linewidth]{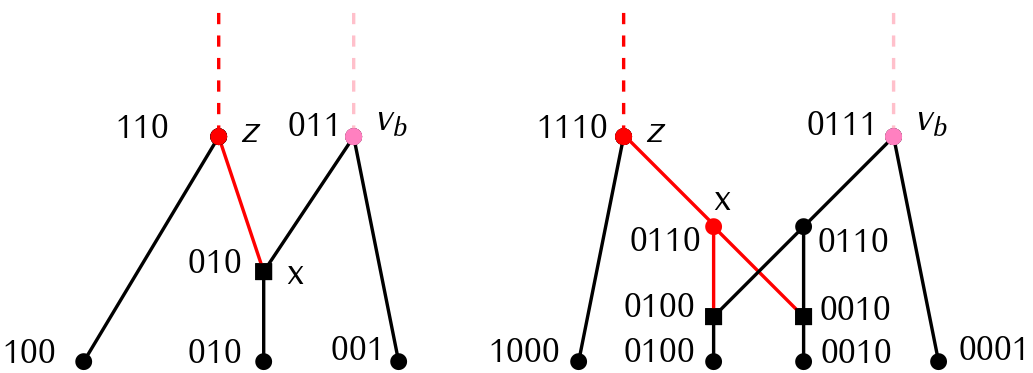}
    \caption{Two networks where there exists a node $z$ with $\mu(z) \nleq \mu(v_b)$ with a child $x$ such that $\mu(x) < \mu(v_b)$.}
    \label{fig: not a bad path bridge}
\end{figure}

\chris{Finally if $\mu(z) \nleq \mu(v_b)$ then condition 3 does not hold because $\mu(z) = \mu(x) + \mu(y)$ for some node $x$ with $\mu(x) < \mu(v_b)$. \Cref{fig: not a bad path bridge} shows two such networks. The second network shows that $x$ does not have to be a descendant of $v_b$ in this case. We have now shown that if $v_b$ is not a bridge node then one of the conditions in \Cref{thm: bridge-node mu-rep} does not hold. We can conclude that the conditions in \Cref{thm: bridge-node mu-rep} are both necessary and sufficient for $\mu(v_b)$ to belong to a bridge-node.}
\end{proof}

\autoref{thm: bridge-node mu-rep} shows that for two semi-binary stack-free networks $\net_1$ and $\net_2$, if $\mu(\net_1) = \mu(\net_2)$, then $\mu(v_b)\in\mu(\net_1)$ belongs to a bridge-node in~$\net_1$ if, and only if it belongs to a bridge-node in~$\net_2$.
In other words, the same $\mu$-vectors will belong to bridge-nodes in both networks. 
While $\mu(\rho)$ also satisfies the conditions of \autoref{thm: bridge-node mu-rep}, a $\mu$-vector belonging to a bridge-node in $\net_1$ will never belong to the root in $\net_2$, since we can identify $\mu(\rho)$ as we noted in \autoref{obs:root mu vector}.
\autoref{lem: lowest ret above bridge}
shows that if $\mu(v_b)$ belongs to a bridge-node in both networks, then the same $\mu$-vector belongs to the lowest reticulation above that bridge-node in both networks. 
Then, by \autoref{obs: strongly stable}, these reticulations will be stable with respect to the same leaves. 
This leads us to the following theorem.

\begin{theorem} \label{thm: strong ret vis in-degrees}
    Let $\net_1$ and $\net_2$ be two strongly reticulation-visible semi-binary networks, where $\mu(\net_1) = \mu(\net_2)$. Then, $\mux(\net_1) = \mux(\net_2)$.
\end{theorem}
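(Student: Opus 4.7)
The plan is to show that the $\mu$-representation uniquely determines the in-degree of every node in a strongly reticulation-visible semi-binary network. Since such a network is reticulation-visible and hence stack-free, the only nodes with in-degree different from $1$ are the root (in-degree $0$) and the reticulations (in-degree $\ge 2$). The root is identified as the unique maximal vector in $\mu(\net)$ by \autoref{obs:root mu vector}, so its contribution $[0]\oplus\mu(\rho)$ to $\mux(\net_1)$ and $\mux(\net_2)$ is the same. It then remains to show (i) that the same $\mu$-vectors correspond to reticulations in both networks, and (ii) that each such reticulation has the same in-degree in both networks.

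For step (i), I would apply \autoref{thm: bridge-node mu-rep}: since the characterization of bridge-node $\mu$-vectors is phrased purely in terms of the multiset $\mu(\net)$, the same $\mu$-vectors are recognized as belonging to bridge-nodes in $\net_1$ and $\net_2$. For each bridge-node vector $\mu(v_b)$, \autoref{lem: lowest ret above bridge} then picks out the minimal vector $\mu(r)$ in $\mu(\net)_{A_b}$ as the $\mu$-vector of the lowest reticulation above $v_b$. Because both networks are strongly reticulation-visible, every reticulation is the lowest reticulation above some bridge-node it is tree-path-connected to, so this procedure identifies \emph{every} reticulation $\mu$-vector in both networks. This gives a natural bijection between $R(\net_1)$ and $R(\net_2)$ under which corresponding reticulations share the same $\mu$-vectors, so $\net_1$ and $\net_2$ agree on the reticulation set.

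For step (ii), I would observe that for each identified reticulation $r$, \autoref{obs: strongly stable} combined with the tree-path guaranteed by strong reticulation-visibility implies that $r$ is stable in $\net_1$ with respect to every leaf $a$ having $\mu(v_b)_a>0$, and analogously in $\net_2$. Since $\mu(v_b)$ is the same in both networks, we may pick a common stability leaf $a$ for $r$ in $\net_1$ and $\net_2$. The hypotheses of \autoref{thm: indeg ret vis} are therefore met, and the in-degrees of the reticulations agree across the two networks.

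Finally, I would assemble these ingredients multiset-theoretically: each $\mu$-vector $\mu(v)$ contributes a known multiset of modified vectors to $\mux(\net)$, namely $[1]\oplus\mu(v)$ for each tree-node it belongs to, together with $[\indeg{r}]\oplus\mu(r)$ if $\mu(v) = \mu(r)$ for some reticulation $r$. Since both networks agree on which $\mu$-vectors belong to reticulations (step (i)) and on the corresponding in-degrees (step (ii)), and the total multiplicity of each $\mu$-vector in $\mu(\net_1)=\mu(\net_2)$ is the same, the two multisets $\mux(\net_1)$ and $\mux(\net_2)$ coincide. The main obstacle is really a bookkeeping one: making sure the bridge-based reticulation identification captures \emph{all} reticulations (which is precisely what strong reticulation-visibility guarantees) and that the leaves certifying stability can be chosen consistently on both sides, so that \autoref{thm: indeg ret vis} applies to a common system of equations in both networks.
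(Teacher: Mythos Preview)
Your proposal is correct and follows essentially the same route as the paper: identify bridge-node $\mu$-vectors via \autoref{thm: bridge-node mu-rep}, use \autoref{lem: lowest ret above bridge} together with strong reticulation-visibility to pin down the full reticulation set and common stability leaves, and then invoke \autoref{thm: indeg ret vis} to equate the in-degrees. The paper's proof is slightly more explicit about why high-multiplicity $\mu$-vectors and tree-clones cannot interfere (via \autoref{cor: stable mult limit} and \autoref{lem: tree-clones not stable}), whereas you handle this implicitly through the multiplicity-2 restriction built into $A_b$; both arguments are sound.
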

\begin{proof}
    As $\net_1$ and $\net_2$ are strongly reticulation-visible, there is a tree-path to a bridge-node from the child of each reticulation. As mentioned above, the same $\mu$-vectors belong to bridge-nodes in both networks, and the same $\mu$-vectors belong to the lowest reticulation above those bridge-nodes. Therefore, the same $\mu$-vectors in both $\mu$-representations will belong to reticulations which are stable for the same set of leaves. To make this more clear, note the following. By \autoref{lem: tree-clones not stable}, $\mu$-vectors which belong to stable nodes cannot belong to tree-clones. Furthermore, by \autoref{cor: stable mult limit}, $\mu$-vectors with multiplicity greater than 2 do not belong to stable nodes. Moreover, as $\net_1$ and $\net_2$ are strongly reticulation-visible, all the reticulations in both networks are stable. Therefore, the $\mu$-vectors with multiplicity greater than 2 in $\mu(\net_1)$ and $\mu(\net_2)$ do not belong to reticulations in either network. 
    By combining \autoref{thm: bridge-node mu-rep} and \autoref{lem: lowest ret above bridge}, we can obtain 
    the set of $\mu$-vectors with multiplicity 2
    that is in bijection with the set of reticulations in both networks. 
    In conclusion, the same $\mu$-vectors belong to reticulations in $\net_1$ and $\net_2$, and they are stable for a common set of leaves. Thus, by \autoref{thm: indeg ret vis}, $\mux(\net_1) = \mux(\net_2)$.
\end{proof} 

\subsection{Encoding strongly reticulation-visible orchard semi-binary networks}

We now combine \autoref{thm: orchard encoding}, \autoref{thm: indeg ret vis}, \autoref{thm: bridge-node mu-rep}, and \autoref{lem: lowest ret above bridge} to prove the following main result.

\begin{theorem} \label{thm: strong ret vis encoding}
    Let $\net_1$ and $\net_2$ be two semi-binary stack-free networks with $\mu(\net_1) = \mu(\net_2)$. Let $\net_1$ be strongly reticulation-visible and orchard. Then, $\net_1 \cong \net_2$.
\end{theorem}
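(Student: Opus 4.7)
The plan is to combine the two principal results of the paper: first establish $\mux(\net_1) = \mux(\net_2)$, then apply \autoref{thm: orchard encoding} to conclude $\net_1 \cong \net_2$. We cannot invoke \autoref{thm: strong ret vis in-degrees} verbatim, because that result assumes both networks are strongly reticulation-visible, whereas here only $\net_1$ is. The first stage of the argument therefore has to transfer the structural information needed to $\net_2$ using only the shared $\mu$-representation.

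To carry out the transfer, I would apply \autoref{thm: bridge-node mu-rep} to both networks: since $\mu(\net_1) = \mu(\net_2)$ and both are semi-binary stack-free, exactly the same non-unit, non-root $\mu$-vectors belong to bridge-nodes in $\net_1$ and in $\net_2$, and together with the leaves this pins down the bridge-nodes of both networks at the level of $\mu$-vectors. Then, for each bridge-node vector $\mu(v_b)$, \autoref{lem: lowest ret above bridge} identifies the $\mu$-vector of the lowest reticulation above $v_b$ as the minimal element of $\mu(\net)_{A_b}$; this same vector therefore belongs in each network to a reticulation paired with its tree-node child via a tree-path. Now I use the hypotheses on $\net_1$: since $\net_1$ is orchard it has no tree-clones, so the multiplicity-$2$ $\mu$-vectors of $\mu(\net_1)$ are in bijection with its reticulations; since $\net_1$ is strongly reticulation-visible, every such reticulation is the lowest reticulation above some bridge-node, so every multiplicity-$2$ $\mu$-vector is captured by the identification above. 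Consequently the same $\mu$-vectors belong to reticulations in $\net_2$ (rather than to tree-clones), and these exhaust the reticulations of $\net_2$ as well.

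At this point the hypotheses of \autoref{thm: indeg ret vis} are in place: both networks are reticulation-visible, they share the same reticulation set in terms of $\mu$-vectors, and each reticulation is stable with respect to a common leaf (any leaf below the corresponding bridge-node). Hence the in-degrees of corresponding reticulations agree in $\net_1$ and $\net_2$, and because all remaining nodes are either the root or tree-nodes of in-degree exactly $1$, we obtain $\mux(\net_1) = \mux(\net_2)$. Applying \autoref{thm: orchard encoding}, whose hypotheses are met since $\net_1$ is semi-binary stack-free orchard and $\net_2$ is semi-binary stack-free, then yields $\net_1 \cong \net_2$. The most delicate step, and where the assumptions on $\net_1$ do their real work, is ruling out tree-clones in $\net_2$: this requires the orchard property of $\net_1$ (to guarantee no tree-clones in $\net_1$, hence no multiplicity above $2$) together with strong reticulation-visibility (to guarantee that every multiplicity-$2$ $\mu$-vector is recognized as belonging to a reticulation via a bridge-node by \autoref{lem: lowest ret above bridge}).
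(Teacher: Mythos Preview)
Your proposal is correct and follows essentially the same route as the paper's proof: use \autoref{thm: bridge-node mu-rep} and \autoref{lem: lowest ret above bridge} to match bridge-nodes and their lowest reticulation ancestors across the two networks, use orchard plus strong reticulation-visibility of $\net_1$ to conclude that every multiplicity-$2$ vector is accounted for this way, deduce that $\net_1$ and $\net_2$ have the same reticulation set with common stability witnesses, invoke \autoref{thm: indeg ret vis} to get $\mux(\net_1)=\mux(\net_2)$, and finish with \autoref{thm: orchard encoding}. You are also right to flag that \autoref{thm: strong ret vis in-degrees} cannot be used directly; the paper's proof avoids it in exactly the way you describe.
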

\begin{proof}
    Because $\net_1$ is orchard, it contains no tree-clones. Therefore, each $\mu$-vector in $\mu(\net_1)$ has multiplicity at most 2. And the subset of $\mu(\net_1)$ of $\mu$-vectors with multiplicity 2 is exactly the set of $\mu$-vectors which belong to reticulations in $\net_1$. Furthermore, because $\net_1$ is strongly reticulation-visible, these $\mu$-vectors belong to reticulations which are lowest above some bridge in $\net_1$. By \autoref{thm: bridge-node mu-rep}, the same $\mu$-vectors belong to bridge-nodes in $\net_1$ and $\net_2$. And by \autoref{lem: lowest ret above bridge}, every lowest reticulation - bridge-node pair is preserved in $\net_2$. As both $\mu$-representations do not contain vectors with multiplicity greater than 2, there are no other $\mu$-vectors belonging to reticulations in $\net_2$. Note that each $\mu$-vector with multiplicity 2 belongs to exactly one reticulation in $\net_1$. Therefore, $\net_1$ and $\net_2$ have the same reticulation set $R$ and each reticulation is stable for a common set of leaves in both networks. This means that, by \autoref{thm: indeg ret vis}, $\mux(\net_1) = \mux(\net_2)$. Therefore, by \autoref{thm: orchard encoding}, $\net_1 \cong \net_2$.
\end{proof} 

Recall that in \autoref{subsec:mux_metric}, we defined a metric for the class of semi-binary stack-free orchard networks using modified $\mu$-representations. 
Here, we define an analogous result using \autoref{thm: strong ret vis encoding}.
Let us define the \emph{$\mu$-distance} on networks~$\net_1$ and~$\net_2$ by taking the cardinality of the symmetric difference of $\mu$-representations, i.e., $d_{\mu}(\net_1, \net_2) = |\mu(\net_1) \triangle \mu(\net_2)|$.
By \autoref{thm: strong ret vis encoding}, this is a metric on the class of semi-binary strongly reticulation-visible orchard networks.


\section{Conclusion and Discussion} \label{sec: conclusion & disc}
In this section we will outline and discuss the main results, some of which are displayed in \autoref{tab: results}. 

\begin{table} [ht]
  \centering
  \begin{tabular}{ccccc}
    \arrayrulecolor{CadetBlue}
    \toprule
    \rowcolor{MidnightBlue!20} Given & class $\net_1$ & class $\net_2$ & Result & Theorem \\
    \midrule
    $\mux(\net_1) = \mux(\net_2)$ & SB SF orchard & SB SF & $\net_1 \cong \net_2$ & \autoref{thm: orchard encoding} \\
    \addlinespace[3pt]
    \rowcolor{MidnightBlue!5} $\mu(\net_1) = \mu(\net_2)$ & SB SRV & SB  SRV & $\mux(\net_1) = \mux(\net_2)$ & \autoref{thm: strong ret vis in-degrees}\\
    \addlinespace[3pt]
    $\mu(\net_1) = \mu(\net_2)$ & SB SRV orchard & SB SF & $\net_1 \cong \net_2$ & \autoref{thm: strong ret vis encoding} \\
    \bottomrule
  \end{tabular}
  \caption{A table detailing the main results. SB stands for semi-binary, SF stands for stack-free and SRV stands for strongly reticulation-visible. Note that strongly reticulation-visible networks are stack-free. The first row reads: ``Given two networks $\net_1$ and $\net_2$ with the same modified $\mu$-representation, if $\net_1$ is semi-binary stack-free orchard and $\net_2$ is semi-binary stack-free, then they are isomorphic''.}
  \label{tab: results}
\end{table}
We have shown that semi-binary stack-free orchard networks are encoded in the space of semi-binary stack-free networks by a modified $\mu$-representation (\autoref{thm: orchard encoding}). This modified $\mu$-representation, 
\leo{contains}
the same path multiplicity vectors as the standard $\mu$-representation as originally proposed by Cardona et al. in \cite{cardona2008comparison}, but \leo{additionally includes}
the in-degrees of nodes. 
With this result we have shown that 
the cardinality of the symmetric difference of the $\mux$-representations is a metric for the class of semi-binary stack-free orchard networks.
We have also shown that this encoding result does not extend to non-binary stack-free orchard networks 
(\autoref{thm: non-binary orchard not encoded}) \leo{even if the outdegrees are also included in the modified $\mu$-representation}.\medskip


We proposed the class of strongly reticulation-visible networks, as the class of networks where for each reticulation there is a tree-path from its child to a bridge.
For this class we proved that for any two networks with the same $\mu$-representation, the $\mu$-vectors belong to nodes with equal in-degrees. Therefore, they have the same modified $\mu$-representation (\autoref{thm: strong ret vis in-degrees}). Finally, we concluded that strongly reticulation-visible semi-binary stack-free orchard networks are encoded in the class of semi-binary stack-free networks by their $\mu$-representation (\autoref{thm: strong ret vis encoding}). 
This means that the cardinality of the symmetric difference of the $\mu$-representation is a metric for the class of semi-binary strongly reticulation-visible orchard networks.\medskip

We now give potential future research directions.
All of the following points are elaborated on in~\cite{Reichling2023}.
In \cite{cardona2024comparison} Cardona et al. propose an extended $\mu$-representation for encoding binary orchard networks. We wonder if this can be extended to encode semi-binary orchard networks.
In another direction, we build on the results of \Cref{sec: mu stabilitiy}.
We showed that the bridge-nodes and stability of reticulations in semi-binary strongly reticulation-visible networks can be identified from their $\mu$-representations.
Can we do the same for the non-binary variant?
Finally, 
the original problem of encoding semi-binary stack-free orchard networks using $\mu$-representations
remains open but perhaps our results here can illuminate the next steps towards a proof.

\bibliographystyle{plain}
\newcommand{\etalchar}[1]{$^{#1}$}

\clearpage

\appendix
\section{Proof of \texorpdfstring{\Cref{lem: equivalence mu reduction}}{}}
\begin{proof} 
Let $\net' = (V',E')$ be the network generated from $\net$ by reducing \pair{} and let $\mux'(\net)$ be the multiset generated from $\mux(\net)$ by reducing \pair{}. We will show that $\mux(\net') = \mux'(\net)$.\medskip

First let us assume that \pair{} is a cherry in $\net$. Let $p_{ab}$ be the parent of $a$ and $b$ in $\net$. In this case we have to show that $\mux(\net')$ contains exactly the $\mux$-vectors $(\mux(v)_i)_{i\in (\{0\}\cup X\setminus \{b\})}$ for each $\mux(v) \in \mux(\net)\setminus\{\mux(b),\mux(p_{ab})\}$, where $\mux(p_{ab}) = [1] \oplus (\mu(a) + \mu(b))$. First note that reducing \pair{} in $\net$ does not change the number of paths to any leaf but $b$ for any node which is in both $\net$ and $\net'$.
Moreover, because $\net'$ is a network on leaf set $X \setminus \{b\}$, for each node $v$ which is in both networks, $\mux(\net')$ contains $(\mux(v)_i)_{i\in \{0\}\cup X\setminus \{b\}}$. Furthermore, only the nodes $b$ and $p_{ab}$ are removed, when reducing \pair{} in $\net$. Therefore $V' = V\setminus\{b,p_{ab}\}$ and $p_{ab}$ is a tree-node with children $a$ and $b$ in $\net$, which means $\mux(p_{ab}) = [1] \oplus (\mu(a) + \mu(b))$ in $\mux(\net)$. This shows the claim and therefore $\mux'(\net) = \mux(\net')$.\medskip

Now let us assume that \pair{} is a simple reticulated cherry in $\net$, where $b$ has a reticulation parent $p_b$ and $p_a$ is the parent of $a$. To show that $\mux'(\net) = \mux(\net')$, we have to show that $\mux(\net')$ contains the vector $\mux(a)$ and for each $\mux(v)$ in $\mux(\net) \setminus \{\mux(p_a), \mux(p_b), \mux(a)\}$, $\mux(\net')$ contains a vector $\mux'(v)$ such that $\mux'(v)_i = \mux(v)_i$ for $i \in X \setminus \{b\}$ and $\mux'(v)_b = \mux(v)_b - \mux(v)_a$. Note that in reducing \pair{} in $\net$ the nodes $p_a$ and $p_b$ are suppressed therefore $V' = V \setminus \{p_a, p_b\}$. Furthermore, for each node in $V'$ the number of paths to any leaf other than $b$ in $\net'$ is equal to the number of paths to that leaf in $\net$. However, the paths to leaf $b$ which include $p_a$ in $\net$ are not present in $\net'$ because of the removal of the edge $p_ap_b$. For any node other than $a$ the number of paths to leaf $b$ which pass through $p_a$ in $\net$ is equal to the number of paths to leaf $a$, but there are no paths from $a$ to $b$ in either network. Therefore, $\mux(a)$ is an element of $\mux(\net')$ and for each other node $v$ which is in both networks $\mux(\net')$ contains the vector $\mux'(v)$  such that $\mux'(v)_i = \mux(v)_i$ for $i \in X \setminus \{b\}$ and $\mux'(v)_b = \mux(v)_b - \mux(v)_a$. Thus proving the claim.\medskip

Finally, let us assume that \pair{} is a complex reticulated cherry in $\net$. The only difference as compared to the case where \pair{} was a simple reticulated cherry is the fact that when reducing \pair{} in $\net$ the parent $p_b$ of $b$ is not suppressed but its in-degree is lowered by 1. Therefore, $\mux(p_b)$ is not removed from $\mux(\net)$ but $\mux(p_b)_0$ is lowered by 1.
All other arguments still hold, thus also in this case $\mux(\net') = \mux'(\net)$.
\end{proof}

\section{Formal proof of \texorpdfstring{\Cref{thm: bridge-node mu-rep}}{}}
We use the following lemmas (\autoref{lem: mult of bridge-node children} to \autoref{lem: no other parent of bridge descendants}) to show how to determine whether a non-unit $\mu$-vector belongs to a bridge-node (\autoref{thm: bridge-node mu-rep}).

\begin{lemma} \label{lem: mult of bridge-node children}

    If $v_b$ is a bridge-node and $c_1$ and $c_2$ are its children, then either the $\mu$-vectors of $c_1$ and $c_2$ are distinct and at most one has multiplicity greater than 1 in $\mu(\net)$ or they are equal with multiplicity exactly 2.
\end{lemma}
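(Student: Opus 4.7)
The plan is to exploit the bridge property of $v_b$, which confines every node whose $\mu$-vector matches $\mu(c_i)$ to the subtree rooted at $v_b$, and then case-analyse on whether the children $c_1,c_2$ are tree-nodes or reticulations.

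First I would observe that $v_b$ must be a tree-node: if $v_b$ were a reticulation with $\indeg{v_b}\ge 2$, removing any one of its incoming edges would leave the network connected, contradicting the bridge status of the edge into $v_b$. Since $\net$ is semi-binary and $v_b$ is internal and not the root, $v_b$ has out-degree $2$, so exactly two children $c_1,c_2$. The bridge property forces every root-to-leaf path visiting a descendant of $v_b$ to pass through $v_b$; hence any node $w$ with $\mu(w)_a>0$ for some leaf $a$ below $v_b$ is either an ancestor or a descendant of $v_b$. Since $\mu(v_b)=\mu(c_1)+\mu(c_2)$ with both summands nonzero, ancestors of $v_b$ and $v_b$ itself have $\mu$-vectors strictly greater than each $\mu(c_i)$. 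Therefore every node with $\mu$-vector equal to $\mu(c_i)$ is a strict descendant of $v_b$, and hence a descendant of $c_1$ or $c_2$.

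I would then use the standard consequence of stack-freeness: if $w$ is a strict descendant of $u$ with $\mu(w)=\mu(u)$, then $u$ is a reticulation and $w$ its tree-child --- because internal tree-nodes strictly dominate each child's $\mu$-vector while reticulations inherit it, and stack-freeness prevents the equality from chaining. Ruling out the two-reticulation case is then quick: if both $c_1,c_2$ were reticulations, each would need a second parent located below $v_b$ (by the bridge), which cannot lie below itself (a cycle), so must lie below the other, closing a directed cycle $c_1\pth c_2\pth c_1$, a contradiction. Next, suppose exactly one of the children is a reticulation, say $c_1$, with $c_2$ a tree-node. The second parent $p$ of $c_1$ is then a descendant of $c_2$. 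Since $c_2$ is a semi-binary tree-node with children $y_1,y_2$ where (WLOG) $p$ is a descendant of $y_1$, path-counting along $y_1\pth p\to c_1\to d_1\pth a$ (where $d_1$ is the tree-child of $c_1$) shows $\mu(y_1)\ge\mu(c_1)$ componentwise, while $\mu(y_2)$ has a positive entry since every internal node reaches some leaf. Summing gives $\mu(c_2)>\mu(c_1)$ strictly. Combined with the first step and with $c_2$ being a tree-node, no other node carries $\mu(c_2)$, so $\mult{c_2}=1$, yielding the ``distinct with at most one multiplicity greater than $1$'' case.

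Finally, if both $c_1$ and $c_2$ are tree-nodes, any node $w$ with $\mu(w)=\mu(c_i)$ is a descendant of $c_1$ or $c_2$, but a strict descendant of a tree-node can never match its $\mu$-vector. So $w\in\{c_1,c_2\}$: if $\mu(c_1)=\mu(c_2)$ the multiplicity is exactly $2$, and if they differ each has multiplicity $1$. The main obstacle is the careful path-counting in the mixed case --- verifying that the back-path through $c_2$ compensates for all paths from $c_1$, and that the other child $y_2$ of $c_2$ still contributes a strictly positive coordinate.
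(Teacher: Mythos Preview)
Your direct approach—first using the bridge to confine every node sharing a $\mu$-vector with $c_1$ or $c_2$ to the subgraph below $v_b$, then case-splitting on whether each child is a tree-node or a reticulation—is different from the paper's, which argues the contrapositive through five failure scenarios (both children reticulations; both tree-clones with distinct or equal vectors; one of each). Your localisation step is a clean unifying observation, and the two-reticulation and mixed cases go through essentially as you describe.

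There is, however, a real error in the final case. When both $c_1,c_2$ are tree-nodes with $\mu(c_1)\ne\mu(c_2)$, you assert that each has multiplicity~$1$, reasoning that a strict descendant of a tree-node cannot match \emph{its} $\mu$-vector. But that only forbids $\mu(w)=\mu(c_j)$ for $w$ strictly below $c_j$; it does not forbid a node $w$ strictly below $c_2$ from satisfying $\mu(w)=\mu(c_1)$. And this genuinely occurs: let $c_1$ have two reticulation children $r_1,r_2$, let $c_2$ have a child $y_1$ whose children are again $r_1,r_2$, and let the other child of $c_2$ be a leaf. Then $v_b$ is still a bridge-node, $c_1$ and $y_1$ are tree-clones, and $\mult{c_1}=2$. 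So the intermediate claim ``$w\in\{c_1,c_2\}$'' and the conclusion ``each has multiplicity~$1$'' are both false.

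The repair is short and stays inside your framework. If some $w\ne c_1$ has $\mu(w)=\mu(c_1)$, your localisation forces $w$ below $c_2$, whence $\mu(c_1)=\mu(w)\le\mu(c_2)$ and therefore $\mu(c_1)<\mu(c_2)$. Now any node with $\mu$-vector equal to $\mu(c_2)$ can lie neither below $c_1$ (since $\mu(c_1)<\mu(c_2)$) nor strictly below the tree-node $c_2$, so $\mult{c_2}=1$. Hence at most one of the two multiplicities exceeds~$1$, which is precisely what the lemma asserts. (A similar one-line adjustment covers the small omission in the mixed case where the extra parent $p$ of $c_1$ equals $c_2$: then one of $c_2$'s children is $c_1$ itself, and $\mu(c_2)=\mu(c_1)+\mu(y_2)>\mu(c_1)$ directly.)
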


\begin{proof} We will show the lemma holds by proving the contrapositive: if the $\mu$-vectors of the two children of a tree node are distinct with multiplicity greater than 1 or are equal with multiplicity greater than 2, then their parent is not a bridge-node. This statement can be broken down into five unique cases (there are six in total, but two of the subcases, namely the first and the fourth bullet points, can be dealt with simultaneously).
If the $\mu$-vectors of two nodes $c_1,c_2$, who are children of the same node, are distinct with multiplicity greater than 1, then one of the following holds:
\begin{itemize}
    \item $c_1$ and $c_2$ are reticulations,
    \item $c_1$ and $c_2$ are tree-clones with different $\mu$-vectors,
    \item $c_1$ is a reticulation and $c_2$ is a tree-clone with a different $\mu$-vector.
\end{itemize}
If their $\mu$-vectors are equal with multiplicity greater than 2, one of these must hold:
\begin{itemize}
    \item $c_1$ and $c_2$ are reticulations,
    \item $c_1$ and $c_2$ are tree-clones with the same $\mu$-vector and there exists a third tree-clone $c_3$ with the same $\mu$-vector as $c_1$ and $c_2$,
    \item $c_1$ is a tree-clone and $c_2$ is a reticulation with the same $\mu$-vector.
\end{itemize}
Given a tree-node $v_b$, let us first assume $v_b$ has two children $c_1$ and $c_2$ which are reticulations. Because $c_1$ and $c_2$ are reticulations they must have at least one other parent besides $v_b$. Neither $c_1$ is the parent of $c_2$ nor is $c_2$ the parent of $c_1$ because the network is stack-free. Note that, not both $c_1$ and $c_2$ can have a parent which is a descendant of the other one, because in that case there would be a cycle, from $c_1$ to the parent of $c_2$ to $c_2$ to the parent of $c_1$ to $c_1$. Furthermore, all descendants of $v_b$ except $v_b$ itself are descendants of $c_1$ or $c_2$. Therefore, either $c_1$ or $c_2$ has a parent which is not a descendant of $v_b$. But then there would be a path from the root to a leaf below $v_b$ via this parent, which does not pass through $v_b$. Thus, by \autoref{obs: bridges are stable_1}, this means $v_b$ is not a bridge-node.\medskip

For the second case let us assume $v_b$ has two children $c_1$ and $c_2$ which are tree-clones, with different $\mu$-vectors. In this case there exists a tree-clone $c_1'$ with $\mu(c_1') = \mu(c_1)$ which has a parent which is not $v_b$. And there must be a tree-clone $c_2'$ with $\mu(c_2') = \mu(c_2)$, which has a parent which is not $v_b$. It cannot be the case that the parent of $c_1'$ is a descendant of $c_2$ and the parent of $c_2'$ is a descendant of $c_1$. Because in that case, $\mu(c_1) = \mu(c_1') \leq \mu(c_2)$ and $\mu(c_2) = \mu(c_2') \leq \mu(c_1)$. Which means $\mu(c_1) = \mu(c_2)$ which contradicts our assumption. Nor can the parent of $c_1'$ be a descendant of $c_1$, because then $\mu(c_1') < \mu(c_1)$, which contradicts our assumption. The same holds for $c_2'$ and $c_2$. This means that either, $c_1'$ or $c_2'$ must have a parent which is not a descendant of $v_b$. W.l.o.g. we can assume $c_1'$ has a parent which is not a descendant of $v_b$. Note that, $\mu(c_1') < \mu(v_b)$ so there must be paths from $c_1'$ to leaves below $v_b$. This means that there is a path from the root to a leaf below $v_b$ via $c_1'$ which does not visit $v_b$. Thus, $v_b$ is not a bridge-node.\medskip

Third, let us assume $v_b$ has one child $c_1$ which is a reticulation and a child $c_2$ which is a tree-clone, such that $\#\mu(c_1) \geq 2$, $\#\mu(c_2) \geq 2$ and $\mu(c_1) \neq \mu(c_2)$. If $c_1$ is not a descendant of $c_2$ then $c_1$ has another parent which is not a descendant of $v_b$. This means there must be a path from the root to a leaf below $v_b$ via this parent, which does not pass through $v_b$. In this case $v_b$ is not a bridge-node. Alternatively, let us assume $c_1$ is a descendant of $c_2$. Then there must exist tree-clone $v$ with $\mu(c_2) = \mu(v)$ and $\mu(v) > \mu(c_1)$, with a parent which is not a descendant of $v_b$. Because $\mu(v_b) > \mu(c_2) = \mu(v)$ there must be a path from $v$ to a leaf below $v_b$ and therefore, there must be a path from the root to a leaf below $v_b$ via $v$, which does not pass through $v_b$. This means $v_b$ is not a bridge-node.\medskip

Fourth, let us assume that $v_b$ has two children $c_1$ and $c_2$, which are tree-clones with $\mu(c_1) = \mu(c_2)$ and there exists another tree-clone $c_3$ with $\mu(c_3) = \mu(c_1) = \mu(c_2)$. Note that the parent of $c_3$ cannot be a descendant of $c_1$ or $c_2$, because $c_1$ and $c_2$ are tree-nodes and this would mean either $\mu(c_3) < \mu(c_1)$ or $\mu(c_3) < \mu(c_2)$, which we have assumed is not the case. Nor can the parent of $c_3$ be $v_b$, because $v_b$ already has 2 children. Therefore, $c_3$ must have a parent which is not a descendant of $v_b$ and, by the same argument as before, this implies $v_b$ is not a bridge-node.\medskip

Finally, let us assume $v_b$ has two children, $c_1$ which is a tree-node and $c_2$ which is a reticulation, with $\mu(c_2) = \mu(c_1)$. Then, $c_2$ must have at least one other parent, which is not $v_b$. This parent cannot be a descendant of $c_1$, because $c_1$ is a tree-node and therefore this would imply that $\mu(c_2) < \mu(c_1)$, which contradicts our assumption. Therefore, we can assume $c_2$ has a parent which is not a descendant of $v_b$. Which again implies that $v_b$ is not a bridge-node. 
\end{proof}

\begin{lemma} \label{lem: bridge-node non ambiguous}
    If $\mu(v_b)$ belongs to a bridge-node which is not a leaf then $\mu(\net)$ contains exactly one pair of vectors $\mu(x),\mu(y)$ such that $\mu(v_b) = \mu(x) + \mu(y)$.
\end{lemma}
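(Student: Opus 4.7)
Since~$v_b$ is a bridge-node that is not a leaf, the bridge edge entering~$v_b$ must end at a tree-node (reticulation edges are never bridges), and~$v_b$ is not the root (which has in-degree~$0$). By the semi-binary assumption, $v_b$ therefore has in-degree~$1$ and out-degree~$2$; let~$c_1$ and~$c_2$ be its two children. Since~$v_b$ is a tree-node, $\mu(v_b) = \mu(c_1) + \mu(c_2)$, which establishes existence of such a pair.

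For uniqueness, suppose $\mu(x), \mu(y) \in \mu(\net)$ satisfy $\mu(x) + \mu(y) = \mu(v_b)$. Since $\mu$-vectors are nonzero, $\mu(x) < \mu(v_b)$ and $\mu(y) < \mu(v_b)$ strictly, so by \autoref{obs: bridges are stable_2} both~$\mu(x)$ and~$\mu(y)$ belong to descendants of~$v_b$. A short consequence of the bridge property is that the sub-DAG induced on~$v_b$ and its descendants is closed under taking parents: if some descendant $u \neq v_b$ had a parent~$w$ that were not a descendant of~$v_b$, then $\rho$ would remain undirectedly connected to~$v_b$ (via a directed path $\rho \pth w$ avoiding~$v_b$, the edge $wu$, and a reversed directed path $v_b \pth u$) after removing the bridge edge into~$v_b$, contradicting bridge-ness. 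Consequently, every descendant of~$v_b$ other than~$v_b$ itself is reachable from~$c_1$ or from~$c_2$; writing $L_i := \{a \in X : \mu(c_i)_a > 0\}$, this means that for any such descendant~$z$, either $\mathrm{supp}(\mu(z)) \subseteq L_1$ or $\mathrm{supp}(\mu(z)) \subseteq L_2$.

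The proof then splits by cases on whether the descendants carrying~$\mu(x)$ and~$\mu(y)$ lie beneath~$c_1$ or beneath~$c_2$. In the main case, one lies under~$c_1$ and the other under~$c_2$: then $\mu(x) \le \mu(c_1)$ and $\mu(y) \le \mu(c_2)$ coordinate-wise, and combined with $\mu(x) + \mu(y) = \mu(c_1) + \mu(c_2)$ this forces equality coordinate-wise, giving $\{\mu(x), \mu(y)\} = \{\mu(c_1), \mu(c_2)\}$. The main obstacle is the degenerate case in which both carriers lie under only one of the children (say only $c_1$), which can occur only if $L_2 \subseteq L_1$ (for otherwise the support of $\mu(x) + \mu(y)$ would miss $L_2 \setminus L_1$, while $\mu(v_b)$ does not). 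To settle this case, my plan is to combine \autoref{lem: mult of bridge-node children}, which tightly constrains the multiplicities of~$\mu(c_1)$ and~$\mu(c_2)$ in~$\mu(\net)$, with the chain of inequalities $\mu(c_2) \le \mu(x), \mu(y) \le \mu(c_1)$ (which follow from $\mu(x), \mu(y) \le \mu(c_1)$ together with the sum identity) and a coordinate-wise analysis on $L_1 \setminus L_2$ (where $\mu(x)_a + \mu(y)_a = \mu(c_1)_a$ pins down each summand) and on $L_1 \cap L_2$ to force $\{\mu(x), \mu(y)\} = \{\mu(c_1), \mu(c_2)\}$.
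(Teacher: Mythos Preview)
Your setup, the existence argument, and the main case of uniqueness are correct and essentially parallel the paper: both use \autoref{obs: bridges are stable_2} to place any node with $\mu$-vector strictly below $\mu(v_b)$ among the descendants of $v_b$, hence under $c_1$ or $c_2$, and then observe that if the two carriers lie under different children the coordinate-wise inequalities $\mu(x)\le\mu(c_1)$, $\mu(y)\le\mu(c_2)$ together with $\mu(x)+\mu(y)=\mu(c_1)+\mu(c_2)$ force $\{\mu(x),\mu(y)\}=\{\mu(c_1),\mu(c_2)\}$.

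The genuine gap is your degenerate case. The claim that on $L_1\setminus L_2$ the equation $\mu(x)_a+\mu(y)_a=\mu(c_1)_a$ ``pins down each summand'' is false: it is one linear constraint on two nonnegative unknowns, each bounded above by $\mu(c_1)_a$, and you give no mechanism by which \autoref{lem: mult of bridge-node children} or the sandwich $\mu(c_2)\le\mu(x),\mu(y)\le\mu(c_1)$ resolves this underdetermination. The paper handles the corresponding step differently and more directly: it fixes $x,y$ to be the \emph{children} of $v_b$, assumes a second pair $\mu(k),\mu(\ell)$, invokes \autoref{lem: mult of bridge-node children} to rule out $\mu(k),\mu(\ell)\in\{\mu(x),\mu(y)\}$, and then argues that since $k,\ell$ are proper (in $\mu$-vector) descendants of the children one has $\mu(k)+\mu(\ell)<\mu(x)+\mu(y)$. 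A cleaner justification of this last step (which you could also use) is a lowest-common-ancestor argument in the sub-network below $v_b$: if $v^*$ is an LCA of the carriers and $v^*\notin\{k,\ell\}$, then $v^*$ is a tree-node, the carriers split among its two children, and $\mu(k)+\mu(\ell)\le\mu(v^*)\le\mu(c_i)<\mu(v_b)$; the remaining case (one carrier an ancestor of the other) still needs a short separate argument, and your coordinate-wise plan does not supply one.
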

\begin{proof}
    We will show the lemma is true with a proof by contradiction, by showing that there cannot be a second pair. Assume $\mu(v_b)$ is a bridge-node which is not a leaf and $\mu(\net)$ contains at least two pairs $\mu(x),\mu(y)$ and $\mu(k),\mu(\ell)$ such that $\mu(v_b) = \mu(x) + \mu(y) = \mu(k) + \mu(\ell)$. W.l.o.g. we can assume $\mu(x)$ and $\mu(y)$ belong to the children $x,y$ of $v_b$. By \autoref{lem: mult of bridge-node children}, we know that
    $\mu(k) \neq \mu(x)$, because otherwise $\mu(\ell) = \mu(y)$, in which case both $\#\mu(x) \geq 2$ and $\#\mu(y)\geq 2$, and if $\mu(x) = \mu(y)$ then $\#\mu(x) \geq 4$. The same goes for $\mu(k) \neq \mu(y)$, $\mu(\ell) \neq \mu(x)$ and $\mu(\ell) \neq \mu(y)$. Note also that none of $x,y,k,\ell$ are ancestors of $v_b$ because their $\mu$-vectors are each lower than $\mu(v_b)$. Now note that, by \autoref{obs: bridges are stable_2}, $\mu(v_b) > \mu(k)$ and $\mu(v_b) > \mu(\ell)$ implies they are descendants of $v_b$. However, as we mentioned $x,y$ are the children of $v_b$. This means $k,\ell$ must be descendants of $x,y$, which are not equal to $x,y$, because their $\mu$-vectors differ. From this it follows that $\mu(x) + \mu(y) > \mu(k) + \mu(\ell)$. This contradicts our assumption that the sums were equal.
\end{proof}

\begin{lemma} \label{lem: mult of tree-clone children}
    If $\mu(v_t) \in \mu(\net)$ belongs to a tree-clone, then one of the following is true:
    \begin{itemize}
        \item $\mu(\net)$ contains exactly one pair $\mu(x),\mu(y)$ such that $\mu(v_t) = \mu(x) + \mu(y)$ and either~$\mu(x) \neq \mu(y)$, in which case $\#\mu(x)\geq 2$ and $\#\mu(y)\geq 2$, or $\mu(x) = \mu(y)$, in which case $\#\mu(x) \geq 4$.
        \item $\mu(\net)$ contains at least one more pair $\mu(k),\mu(\ell)$ which is distinct from $\mu(x),\mu(y)$, such that $\mu(v_t) = \mu(k) + \mu(\ell)$. 
    \end{itemize}
\end{lemma}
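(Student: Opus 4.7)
The plan is as follows. Let $v_t$ and $v_t'$ be two distinct tree-nodes with $\mu(v_t)=\mu(v_t')$; these exist because $v_t$ is a tree-clone. Since the network is semi-binary and since $v_t$ cannot be the root (as $\mu(\rho)$ is the unique maximal vector in $\mu(\net)$ by \autoref{obs:root mu vector}, while $v_t$ shares its vector with another node), $v_t$ has out-degree exactly $2$, and similarly for $v_t'$. Write the children as $x,y$ of $v_t$ and $k,\ell$ of $v_t'$, so that $\mu(v_t)=\mu(x)+\mu(y)=\mu(k)+\mu(\ell)$, and note that $x\neq y$ and $k\neq \ell$ because phylogenetic networks have no parallel arcs.

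The main case split is on whether the multisets $\{\mu(x),\mu(y)\}$ and $\{\mu(k),\mu(\ell)\}$ are equal. If they are not, then these already are two distinct pairs of vectors in $\mu(\net)$ that sum to $\mu(v_t)$, and we are in the second bullet. If they are equal, relabel so that $\mu(x)=\mu(k)$ and $\mu(y)=\mu(\ell)$; should any additional pair summing to $\mu(v_t)$ exist in $\mu(\net)$, we again land in the second bullet. So assume $\{\mu(x),\mu(y)\}$ is the unique such pair and verify the multiplicity conditions of the first bullet.

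I split on whether $\mu(x)=\mu(y)$. In the sub-case $\mu(x)\neq\mu(y)$, I show $\#\mu(x)\geq 2$ as follows: if $x\neq k$, then $x$ and $k$ are two distinct nodes both carrying $\mu(x)$; if $x=k$, then this node has two distinct parents $v_t,v_t'$, hence indegree $\geq 2$, making it a reticulation. Stack-freeness and the semi-binary assumption force its unique child $c_x$ to be a tree-node with $\mu(c_x)=\mu(x)$, and $c_x\neq x$ trivially, so again $\#\mu(x)\geq 2$. The symmetric argument gives $\#\mu(y)\geq 2$.

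In the sub-case $\mu(x)=\mu(y)$, all four vectors $\mu(x),\mu(y),\mu(k),\mu(\ell)$ agree and I need $\#\mu(x)\geq 4$. If $x,y,k,\ell$ are pairwise distinct, we are done. Otherwise, any coincidence (e.g.\ $x=k$) forces that node to be a reticulation with tree-node child $c_x$, $\mu(c_x)=\mu(x)$; I then argue $c_x$ is distinct from the remaining nodes among $\{x,y,k,\ell\}$ by noting that $c_x$'s unique parent is $x$, and equating $c_x$ with $y$ or $\ell$ would force $v_t=x$ or $v_t'=x$, which is impossible because $\mu(v_t)>\mu(x)$ for tree-ancestors. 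A short case analysis over single coincidences (which add one auxiliary tree child) and double coincidences $x=k,y=\ell$ or $x=\ell,y=k$ (which add $c_x,c_y$, distinct because they have different unique parents) yields at least four distinct nodes carrying $\mu(x)$. The main obstacle is precisely this last sub-case: producing four distinct witnesses from a configuration in which two or more of $x,y,k,\ell$ collapse, where the auxiliary tree-children supplied by stack-freeness must be shown to be genuinely new and not to re-collapse with already-counted nodes.
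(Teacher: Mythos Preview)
Your proof is correct and follows essentially the same approach as the paper: take two tree-clones, look at their two pairs of children, and either obtain two distinct $\mu$-vector pairs summing to $\mu(v_t)$ or exploit node coincidences (together with stack-freeness to pass to reticulation children) to certify the required multiplicities. The only organisational difference is that the paper first splits on how many \emph{nodes} the two child-sets share (both, one, or none) and derives the $\mu$-vector consequences in each branch, whereas you first split on whether the \emph{$\mu$-vector multisets} of the two child-pairs agree and then, in the equal case, do the node-coincidence analysis; your ordering dispatches the second bullet of the lemma slightly more directly, and your explicit exclusion of the root (via \autoref{obs:root mu vector}) to guarantee out-degree exactly~$2$ is a point the paper's proof leaves implicit.
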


\begin{proof}
   If $\mu(v_t) \in \mu(\net)$ belongs to a tree-clone, then there are at least two tree-nodes $v_1$ and $v_2$ with $\mu$-vector equal to $\mu(v_1)$. By \autoref{lem: stack free leaf mults}, unit-vectors do not belong to tree-clones, therefore $v_1$ and $v_2$ are not leaves. This means $v_1$ and $v_2$ each have two children. Let $c_1$ and $c_2$ be the children of $v_1$, and let $c_3$ and $c_4$ be the children of $v_2$. Note that $c_1$ must be distinct from $c_2$, and $c_3$ must be distinct from $c_4$, because phylogenetic networks do not contain parallel edges. We then have: $\mu(c_1) + \mu(c_2) = \mu(v_1) = \mu(v_2) = \mu(c_3) + \mu(c_4)$. The statement can be broken down into three unique cases.\medskip

    If the children of $v_1$ are the same as the children of $v_2$ then $c_1$ and $c_2$ both have two parents and are therefore reticulations. In that case, $\#\mu(c_1) \geq 2$ and $\#\mu(c_2) \geq 2$, and if $\mu(c_1) = \mu(c_2)$, then $\#\mu(c_1)\geq 4$, because $c_1$ cannot have the same child as $c_2$ because the network is stack-free.\medskip

    If $v_1$ and $v_2$ share only one child, say $c_2 = c_3$, then $c_1$ and $c_4$ are distinct nodes. In this case, $c_2$ is a reticulation, and $\#\mu(c_2) \geq 2$. Then, either $\mu(c_1) = \mu(c_4) \neq \mu(c_2)$, so that $\#\mu(c_1) \geq 2$ as well. Or $\mu(c_1) = \mu(c_4) = \mu(c_2)$, which implies $\#\mu(c_1) \geq 4$, because neither $c_1$ nor $c_4$ can be the child of $c_2$, because the network is stack-free.\medskip

    Finally, if all nodes $c_1,c_2,c_3$ and $c_4$ are distinct from each other. Then either, the sets $\{\mu(c_1),\mu(c_2)\}$ and $\{\mu(c_3),\mu(c_4)\}$ are not the same set, in which case $\mu(\net)$ contains two distinct pairs, whose sum is $\mu(v_1)$. Or they are the same set, in which case we can say w.l.o.g. that $\mu(c_1) = \mu(c_3)$, which implies $\mu(c_2) = \mu(c_4)$. Which means that both $\#\mu(c_1) \geq 2$ and $\#\mu(c_2) \geq 2$, and if $\mu(c_1) = \mu(c_2)$ then $\#\mu(c_1) \geq 4$.
\end{proof}

\begin{lemma} \label{lem: bridge mu-vector factorization}
    Let $\mu(v_b) \in \mu(\net)$ belong to a bridge-node $v_b$. Let $I_b$ be the set of leaves below $v_b$, so that $\mu(v_b)_i = 0$ for $i \notin I_b$. For any $\mu(x) \in \mu(\net)$ with $\mu(x) \not< \mu(v_b)$, it holds that $\mu(x) = P_{xv_b}\mu(v_b) + \mu'(x)$, where $P_{xv_b}$ is a non-negative integer, equal to the number of paths from $x$ to $v_b$, and $\mu'(x)$ is a $\mu$-vector such that $\mu'_i(x) = 0$ for $i \in I_b$.
\end{lemma}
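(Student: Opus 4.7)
The plan is to classify the node $x$ relative to $v_b$ using the characterization of bridge-nodes in \autoref{obs: bridges are stable_2}, and then express the $a$-th coordinate of $\mu(x)$ as a count of paths decomposed by whether they visit $v_b$. Since $\mu(x) \not< \mu(v_b)$, the possibilities for $x$ are: (i) $x = v_b$, (ii) $\mu(x) \geq \mu(v_b)$ so that $x$ is an ancestor of $v_b$, or (iii) $\mu(x)$ and $\mu(v_b)$ are incomparable, in which case $x$ is neither an ancestor nor a descendant of $v_b$. In case (i), take $P_{xv_b} = 1$ (the trivial path) and $\mu'(x) = 0$; the claim is immediate.

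For case (ii), I would invoke \autoref{obs: bridges are stable_1}: every path from an ancestor of $v_b$ to a leaf below $v_b$ passes through $v_b$. Hence, for each $a \in I_b$, the paths from $x$ to $a$ are in bijection with pairs (path $x \pth v_b$, path $v_b \pth a$), giving $\mu(x)_a = P_{xv_b}\cdot \mu(v_b)_a$. For $a \notin I_b$, the second coordinate $\mu(v_b)_a$ is zero, so $\mu'(x)_a := \mu(x)_a - P_{xv_b}\mu(v_b)_a = \mu(x)_a$, and the required vanishing of $\mu'(x)$ on $I_b$ holds automatically.

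Case (iii) is where the small subtlety sits. I would show both that $P_{xv_b} = 0$ and that $\mu(x)_a = 0$ for every $a \in I_b$; together these give $\mu(x) = 0 \cdot \mu(v_b) + \mu'(x)$ with $\mu'(x)_i = 0$ for $i \in I_b$. The first is clear since $x$ is not an ancestor of $v_b$. For the second, suppose there were a path $P: x \pth a$ with $a \in I_b$. Concatenate $P$ with any path $Q: \rho \pth x$ to obtain a root-to-$a$ path; since $v_b$ is a bridge-node and $a$ is below $v_b$, this concatenated path must pass through $v_b$. But $v_b$ cannot lie on $Q$ (else $x$ would be a descendant of $v_b$) nor on $P$ (else $x$ would be an ancestor of $v_b$), contradicting case (iii). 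Hence no such $P$ exists and $\mu(x)_a = 0$.

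Finally, I would note the nonnegativity and integrality of $\mu'(x)$ from the combinatorial interpretation: $P_{xv_b}\cdot \mu(v_b)_a$ counts paths from $x$ to $a$ that pass through $v_b$, so $\mu'(x)_a$ counts the paths that avoid $v_b$, which is a nonnegative integer. The main obstacle, such as it is, is the incomparable case (iii); the other cases follow directly from \autoref{obs: bridges are stable_1} and \autoref{obs: bridges are stable_2}, which do most of the heavy lifting.
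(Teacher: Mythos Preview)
Your proof is correct and follows essentially the same approach as the paper. The paper's version is more compressed: rather than splitting into your three cases, it observes in one stroke that $\mu(x)\not<\mu(v_b)$ implies (via \autoref{obs: bridges are stable_2}) that $x$ is not a strict descendant of $v_b$, and hence every path from $x$ to a leaf below $v_b$ must visit $v_b$, giving $\mu(x)_a = P_{xv_b}\,\mu(v_b)_a$ for all $a\in I_b$; your case~(iii) argument is precisely the justification the paper leaves implicit in that step.
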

\begin{proof}
    Let $\mu(v_b) \in \mu(\net)$ belong to a bridge-node. For any $\mu(x) \in \mu(\net)$ with $\mu(x) \not< \mu(v_b)$, by \autoref{obs: bridges are stable_2},  $\mu(x)$ does not belong to a descendant of the bridge-node with $\mu$-vector $\mu(v_b)$. Therefore, all paths from $x$ to leaves below $v_b$ must visit $v_b$. Thus, if $a$ is a leaf below $v_b$, then any path from $x$ to $a$ is the composition of a path from $x$ to $v_b$ and a path from $v_b$ to $a$. It follows that for each such leaf $a$, $\mu(x)_a = P_{xv_b}\mu(v_b)_a$. From this it follows that $\mu(x) = P_{xv_b}\mu(v_b) + \mu'(x)$, where $\mu'(x)$ contains only the paths to leaves not below $v_b$, and therefore $\mu'(x)$ is a $\mu$-vector such that $\mu'_i(x) = 0$ for $i \in I_b$.
\end{proof}

\begin{lemma} \label{lem: non-ambiguous bridge descendants}
    Let $\mu(v_b) \in \mu(\net)$ belong to a bridge-node $v_b$. Then, for any $\mu(x), \mu(y) \in \mu(\net)$ such that $\mu(x) < \mu(v_b)$, $\mu(\net)$ does not contain $\mu(k),\mu(\ell)$ with $\mu(k) \not< \mu(v_b)$ and $\mu(\ell) \not< \mu(v_b)$, such that $\mu(k) + \mu(\ell) = \mu(x) + \mu(y)$.
\end{lemma}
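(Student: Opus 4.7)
The plan is to argue by contradiction, assuming such vectors $\mu(k)$ and $\mu(\ell)$ exist in $\mu(\net)$. Since both satisfy $\mu(k), \mu(\ell) \not< \mu(v_b)$, \autoref{lem: bridge mu-vector factorization} supplies decompositions $\mu(k) = P_{kv_b}\mu(v_b) + \mu'(k)$ and $\mu(\ell) = P_{\ell v_b}\mu(v_b) + \mu'(\ell)$, where $\mu'(k)$ and $\mu'(\ell)$ vanish on the coordinates indexed by $I_b$. The argument then splits on whether $\mu(y) < \mu(v_b)$ or $\mu(y) \not< \mu(v_b)$.

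First I would handle the case $\mu(y) \not< \mu(v_b)$. Here a second application of the factorization lemma gives $\mu(y) = P_{yv_b}\mu(v_b) + \mu'(y)$ with $\mu'(y)$ supported outside $I_b$. Restricting the identity $\mu(k) + \mu(\ell) = \mu(x) + \mu(y)$ to the coordinates in $I_b$, and using that $\mu(x) \leq \mu(v_b)$ forces $\mu(x)$ itself to be supported on $I_b$, I obtain $\mu(x)_i = c\,\mu(v_b)_i$ for every $i \in I_b$, where $c = P_{kv_b} + P_{\ell v_b} - P_{yv_b}$ is a single integer independent of $i$. Since $\mu(v_b)_i \geq 1$ on $I_b$, non-negativity of $\mu(x)$ rules out $c \leq -1$, the requirement that $\mu$-vectors be non-zero rules out $c = 0$, the bound $\mu(x) \leq \mu(v_b)$ rules out $c \geq 2$, and the strict inequality $\mu(x) < \mu(v_b)$ rules out $c = 1$. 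All integer values of $c$ are therefore eliminated.

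Next I would address the case $\mu(y) < \mu(v_b)$. Then $\mu(x) + \mu(y)$ is supported entirely inside $I_b$, so restricting the identity $\mu(k) + \mu(\ell) = \mu(x) + \mu(y)$ to coordinates outside $I_b$ and invoking non-negativity forces $\mu'(k) = \mu'(\ell) = 0$. Because $\mu$-vectors are never zero, this in turn forces $P_{kv_b}, P_{\ell v_b} \geq 1$, so $\mu(k) + \mu(\ell) \geq 2\mu(v_b)$ on $I_b$. On the other hand, $\mu(x) < \mu(v_b)$ combined with $\mu(y) \leq \mu(v_b)$ yields $\mu(x) + \mu(y) \leq 2\mu(v_b)$ with strict inequality at a coordinate of $I_b$ where $\mu(x)$ is strictly smaller than $\mu(v_b)$, producing the contradiction.

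The main obstacle will be managing the case analysis cleanly: the statement imposes no restriction on $\mu(y)$ relative to $\mu(v_b)$, so both branches must be treated separately, and in each branch it is essential to use the strict inequality $\mu(x) < \mu(v_b)$ rather than the weaker $\mu(x) \leq \mu(v_b)$, which would leave the value $c = 1$ (or $N = 2$ in the second case) unexcluded. The factorization of \autoref{lem: bridge mu-vector factorization} is the key tool; once it is applied, the remainder of the argument is a short piece of integer arithmetic on the $I_b$-coordinates.
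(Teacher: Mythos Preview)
Your proposal is correct and follows essentially the same route as the paper: a contradiction argument that splits on the position of $\mu(y)$ relative to $\mu(v_b)$ and uses \autoref{lem: bridge mu-vector factorization} to reduce everything to integer arithmetic on the $I_b$-coordinates. Your treatment of the case $\mu(y)<\mu(v_b)$ is in fact cleaner than the paper's, which claims $\mu(x)+\mu(y)\le\mu(v_b)$ without justification; your bound $\mu(k)+\mu(\ell)\ge 2\mu(v_b)$ versus $\mu(x)+\mu(y)<2\mu(v_b)$ is the right way to close that case.
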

\begin{proof}
    We will show this with a proof by contradiction. Let us assume that $\mu(v_b) \in \mu(\net)$ belong to a bridge-node $v_b$ and $\mu(\net)$ contains some $\mu(x),\mu(y)$ such that $\mu(x) < \mu(v_b)$. Now let us assume, $\mu(\net)$ contains $\mu(k),\mu(\ell)$ with $\mu(k) \not< \mu(v_b)$ and $\mu(\ell) \not< \mu(v_b)$, such that $\mu(k) + \mu(\ell) = \mu(x) + \mu(y)$. First note, that if $\mu(y) \leq \mu(v_b)$, then $\mu(y)$ belongs to a descendant of $v_b$ and therefore $\mu(k) + \mu(\ell) = \mu(x) + \mu(y) \leq \mu(v_b)$. But this contradicts our assumption that $\mu(k) \not< \mu(v_b)$. Therefore, we can assume that $\mu(y) \nleq \mu(v_b)$.
    Then, by \autoref{lem: bridge mu-vector factorization}:
    \begin{align*}
        \mu(k) &= P_{kv_b}\mu(v_b) + \mu'(k) \\
        \mu(\ell) &= P_{\ell v_b}\mu(v_b) + \mu'(\ell)
    \end{align*}
    and
    \begin{align*}
        \mu(y) = P_{yv_b}\mu(v_b) + \mu'(y).
    \end{align*}
    Furthermore, from $\mu(x) + \mu(y) = \mu(k) + \mu(\ell)$ it follows:
    \begin{align*}
        \mu(x) &= \mu(k) + \mu(\ell) - \mu(y)\\
            &= P_{kv_b}\mu(v_b) + \mu'(k) + P_{\ell v_b}\mu(v_b) + \mu'(\ell) - [P_{yv_b}\mu(v_b) + \mu'(y)]\\
            &= (P_{kv_b} + P_{\ell v_b} - P_{yv_b})\mu(v_b) + \mu'(k) + \mu'(\ell) - \mu'(y).
    \end{align*}
    Then, from $\mu(x) < \mu(v_b)$ it follows that $P_{kv_b} + P_{\ell v_b} - P_{yv_b} = 0$, and $\mu'(k) + \mu'(\ell) - \mu'(y) = 0$. But then $\mu(x)$ is the zero vector, which is not possible because the zero vector is not a $\mu$-vector and therefore not contained in $\mu(\net)$. Thus we have reached a contradiction.
\end{proof}
\begin{lemma} \label{lem: no other parent of bridge descendants}
    Given a $\mu$-vector $\mu(v_b) \in \mu(\net)$. If $\mu(\net)$ contains $\mu(z) \nleq \mu(v_b)$, $\mu(x) < \mu(v_b)$ and $\mu(y)$, such that $\mu(z) = \mu(x) + \mu(y)$. Then, $\mu(v_b)$ does not belong to a bridge-node.
\end{lemma}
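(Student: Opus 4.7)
The plan is to argue by contradiction: assume $v_b$ is a bridge-node and that such $\mu(x), \mu(y), \mu(z) \in \mu(\net)$ exist. The ultimate goal is to produce a pair of $\mu$-vectors in $\mu(\net)$, both $\nless \mu(v_b)$, whose sum equals $\mu(x) + \mu(y)$, thereby directly contradicting the preceding \autoref{lem: non-ambiguous bridge descendants}. The natural candidate pair consists of the $\mu$-vectors of the two children of $z$, so the real work is to pin down the position of $z$ relative to $v_b$ and then control its children.

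I first locate $z$. Since $\mu(x) < \mu(v_b)$ and $\mu(x) \ne 0$, some leaf $a$ below $v_b$ satisfies $\mu(x)_a > 0$, and hence $\mu(z)_a \geq \mu(x)_a > 0$, so $z$ has a path to $a$. From $\mu(z) \nleq \mu(v_b)$ I get $\mu(z) \nless \mu(v_b)$ and $\mu(z) \neq \mu(v_b)$, so by \autoref{obs: bridges are stable_2}, $z$ is not a strict descendant of $v_b$; the bridge property then forces $v_b$ to lie on the $z$-to-$a$ subpath, making $z$ a strict ancestor of $v_b$. If $z$ is a reticulation, I replace it by its unique tree-node child, which exists by stack-freeness and carries the same $\mu$-vector; $z$ cannot be a leaf, since a unit vector is not the sum of two nonzero $\mu$-vectors.

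With $z$ now a tree-node strictly dominating $v_b$, I consider its children $c_1, c_2$. The key step is to show $\mu(c_i) \nless \mu(v_b)$ for $i = 1, 2$: if some $c_i$ were a strict descendant of $v_b$, then the edge $z \to c_i$ together with any root-to-$z$ path (which avoids $v_b$ since $v_b$ lies strictly below $z$) would produce a root-to-$c_i$ path that bypasses $v_b$, contradicting the bridge property. Then $\mu(c_1) + \mu(c_2) = \mu(z) = \mu(x) + \mu(y)$ is the forbidden pair, and \autoref{lem: non-ambiguous bridge descendants} closes the argument.

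The main obstacle is the first step, correctly orienting $z$ as a strict ancestor of $v_b$ from the purely combinatorial hypothesis $\mu(z) \nleq \mu(v_b)$ alone; once this is in place, the bridge argument on the children is essentially immediate. A minor care-point, in the event that $z$ equals the root with out-degree greater than two, is that the "pair of children" step needs a brief adaptation (for instance, grouping the children of $z$ into those whose $\mu$-vector dominates $\mu(v_b)$ and the rest), but the same bridge argument still forces every child of $z$ to satisfy $\mu(c) \nless \mu(v_b)$, so the contradiction carries over.
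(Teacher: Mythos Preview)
Your argument is correct and uses the same two ingredients as the paper's proof---the bridge property (\autoref{obs: bridges are stable_1}/\autoref{obs: bridges are stable_2}) and \autoref{lem: non-ambiguous bridge descendants}---but applies them in the reverse order. The paper first invokes \autoref{lem: non-ambiguous bridge descendants} to conclude that \emph{every} pair summing to $\mu(z)$ contains a vector $<\mu(v_b)$, hence the actual children of $z$ include one with small $\mu$-vector; then, since $\mu(z)\nleq\mu(v_b)$ makes $z$ a non-descendant of $v_b$, the edge from $z$ to that small child yields a path to a leaf below $v_b$ avoiding $v_b$, contradicting the bridge property. You instead use the bridge property first to argue that \emph{no} child $c_i$ of $z$ can satisfy $\mu(c_i)<\mu(v_b)$, and then feed the pair $\mu(c_1),\mu(c_2)$ into \autoref{lem: non-ambiguous bridge descendants} for the contradiction. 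Your route has the minor advantage of avoiding the paper's case split on whether $\mu(x),\mu(y)$ is the unique pair summing to $\mu(z)$. Two inessential points: you only need that $z$ is not a descendant of $v_b$ (which is immediate from $\mu(z)\nleq\mu(v_b)$ and \autoref{obs: bridges are stable_2}), not that it is a strict ancestor; and the root care-point is not needed, since in the semi-binary setting only reticulations may have high degree and you have already passed to a tree-node $z$.
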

\begin{proof}
    We will argue by contradiction. Assume $\mu(\net)$ contains $\mu(v_b),\mu(z),\mu(x)$ and $\mu(y)$ as described in the lemma and let $\mu(v_b)$ belong to a bridge-node $v_b$. If $\mu(z)$ belongs to a reticulation $r$, then the child of $r$ must be a tree-node with the same $\mu$-vector, so w.l.o.g. we can assume $\mu(z)$ belongs to a tree-node $z$. Because $\mu(z) = \mu(x) + \mu(y)$, we know $z$ is not a leaf, because $\mu(z)$ is not a unit vector. If $\mu(x),\mu(y)$ is the only pair in $\mu(\net)$ which sum up to $\mu(z)$ then $z$ must be the parent of $x$. If there are more pairs in $\mu(\net)$ then by \autoref{lem: non-ambiguous bridge descendants}, each of those pairs must contain at least one $\mu$-vector lower than $\mu(v_b)$. This means that in any case $z$ will have a child with $\mu$-vector lower than $\mu(v_b)$. However, $\mu(z) \nleq \mu(v_b)$ implies that $z$ is not a descendant of $v_b$.  Then there must be a path from the root to a leaf below $v_b$ via $z$ which does not visit $v_b$. By \autoref{obs: bridges are stable_1}, this means that $v_b$ is not a bridge-node. Which contradicts our assumption.
\end{proof}

\begin{proposition*}[\ref{thm: bridge-node mu-rep}]
    Let~$\mu(v_b)$ be a non-unit vector that is not $\mu(\rho)$.
    Then, it belongs to a bridge-node if, and only if, 
    \begin{enumerate}
        \item there is exactly one pair $\mu(k),\mu(\ell) \in \mu(\net)$ such that $\mu(v_b) = \mu(k) + \mu(\ell)$, and
        \item \chris{one of the following holds}
        \begin{enumerate}
            \item \chris{at least one of} $\mu(k)$ and $\mu(\ell)$ \chris{has multiplicity 1} in $\mu(\net)$; or
            \item $\mu(k) = \mu(\ell)$ \chris{and} $\#\mu(k) = 2$, and
        \end{enumerate}
        \item $\mu(\net)$ does not contain vectors $\mu(x),\mu(y),\mu(z)$, such that $\mu(z) \nleq \mu(v_b)$, $\mu(z) = \mu(x) + \mu(y)$ and $\mu(x) < \mu(v_b)$.
    \end{enumerate}
\end{proposition*}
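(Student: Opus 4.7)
The plan is to prove both directions of the characterization, using the contrapositive for the backward direction, since the natural case analysis happens on why $v_b$ fails to be a bridge. Throughout, I would rely on the structural facts already established: bridges are always tree-edges, \autoref{obs: bridges are stable_1} and \autoref{obs: bridges are stable_2} describing which $\mu$-vectors belong to descendants of a bridge-node, and the semi-binary/stack-free assumptions constraining out-degrees and reticulation arrangements.

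For the forward direction, assume $v_b$ is a bridge-node. Since $v_b$ is a tree-node other than the root or a leaf, semi-binaryness forces out-degree exactly $2$, with children $c_1, c_2$ giving the canonical decomposition $\mu(v_b) = \mu(c_1) + \mu(c_2)$. For condition 1, any other pair $\mu(x) + \mu(y) = \mu(v_b)$ would require $\mu(x), \mu(y) \le \mu(v_b)$, hence by \autoref{obs: bridges are stable_2} both belong to descendants of $v_b$; a short counting argument shows that any two descendant $\mu$-vectors summing to $\mu(v_b)$ must in fact be $\mu(c_1), \mu(c_2)$. For condition 2, I would rule out the forbidden multiplicity patterns by cases on the types of $c_1, c_2$: if $c_i$ is a reticulation or tree-clone with high multiplicity, its duplicate node has a parent that must lie outside the descendants of $v_b$ (because identical $\mu$-vectors cannot be nested strictly below a tree-node), producing a path bypassing $v_b$. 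Condition 3 follows by contradiction: if $\mu(z) = \mu(x) + \mu(y)$ with $\mu(z) \not\le \mu(v_b)$ and $\mu(x) < \mu(v_b)$, then some child of $z$ has $\mu$-vector $\mu(x)$, so $z$ lies above a proper descendant of $v_b$ while not being an ancestor of $v_b$, giving a bypass path.

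For the backward direction (contrapositive), assume $v_b$ is not a bridge-node. Then by \autoref{obs: bridges are stable_1} there is a path from the root to some descendant of $v_b$ that avoids $v_b$. Walking this path, I locate an edge $(z, x)$ with $\mu(x) < \mu(v_b)$ but $\mu(z) \not< \mu(v_b)$. If $\mu(z) \not\le \mu(v_b)$, letting $\mu(y)$ be the $\mu$-vector of $z$'s other child (or the sum over remaining children) immediately violates condition 3. The delicate subcase is $\mu(z) = \mu(v_b)$, i.e., $z$ and $v_b$ are tree-clones (or $z$ is a reticulation whose tree-child is such a clone). Here I analyze how the children of $z$ and $v_b$ interact: if they share both children, those children must be reticulations, forcing both $\mu(c_1)$ and $\mu(c_2)$ to have multiplicity $\ge 2$ (and $\ge 4$ when equal), violating condition 2; if they share only one child, that child is a reticulation and the remaining children carry the same $\mu$-vector, again violating condition 2; and if they share no children, the sets of children either realize the same multiset of $\mu$-vectors (again violating condition 2 via elevated multiplicities) or realize distinct pairs summing to $\mu(v_b)$ (violating condition 1).

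The main obstacle I anticipate is precisely this tree-clone subcase. Stack-freeness and semi-binaryness must be used carefully to ensure that reticulations involved here never stack, that the out-degree of tree-nodes is $2$ so sums decompose into exactly two child-vectors, and that multiplicities are counted correctly across the $(z, v_b)$ pair plus any further clones. In particular, I expect the bookkeeping to be the most error-prone step: given two tree-nodes sharing a $\mu$-vector, one must account for all combinations of their $2+2$ children being reticulations, tree-nodes, coincident, or disjoint, and verify that in every scenario at least one of conditions 1 or 2 fails. Once that case analysis is resolved, the rest of the proof is bookkeeping on \autoref{obs: bridges are stable_2}.
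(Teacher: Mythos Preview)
Your overall architecture matches the paper's formal proof: forward direction establishes each condition, backward direction is by contrapositive with the case split on whether $\mu(z)=\mu(v_b)$ or $\mu(z)\nleq\mu(v_b)$, and the tree-clone subcase is handled by exactly the child-sharing analysis you describe.

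There is, however, a genuine gap in your forward argument for condition~3. You write that if $\mu(z)=\mu(x)+\mu(y)$ with $\mu(z)\nleq\mu(v_b)$ and $\mu(x)<\mu(v_b)$, then ``some child of $z$ has $\mu$-vector $\mu(x)$''. This is not justified: the hypothesis only says that \emph{some} pair of $\mu$-vectors in $\mu(\net)$ sums to $\mu(z)$ with one summand below $\mu(v_b)$; it does not say that $x$ is a child of $z$. The actual children $c_1,c_2$ of $z$ give a (possibly different) decomposition $\mu(z)=\mu(c_1)+\mu(c_2)$, and you need one of $\mu(c_1),\mu(c_2)$ to be $<\mu(v_b)$ to produce the bypass path. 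The paper fills this gap with a factorization lemma (\autoref{lem: bridge mu-vector factorization}): for any node $u$ not a proper descendant of the bridge-node $v_b$, the restriction of $\mu(u)$ to the leaves below $v_b$ is an integer multiple $P_{uv_b}\cdot\mu(v_b)$. From this one deduces (\autoref{lem: non-ambiguous bridge descendants}) that \emph{every} pair summing to $\mu(z)$ must contain at least one vector $<\mu(v_b)$; in particular the children of $z$ do, and the bypass follows. Without this factorization step, the existence of the pair $(\mu(x),\mu(y))$ does not transfer to the children of $z$, and your contradiction does not close.
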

\begin{proof}
    Assume the non-unit vector $\mu(v_b)$ belongs to a bridge-node. Then $\mu(v_b)$ belongs to a tree-node which is not a leaf and, by \autoref{lem: bridge-node non ambiguous}, there exists exactly one pair $\mu(k),\mu(\ell)$ with $\mu(v_b) = \mu(k) + \mu(\ell)$. Furthermore, by \autoref{lem: mult of bridge-node children}, the combined multiplicity of  $\mu(k)$ and $\mu(\ell)$ in $\mu(\net)$ is lower than or equal to 3. Finally, by \autoref{lem: no other parent of bridge descendants}, there does not exist $\mu(z) \nleq \mu(v_b)$ such that $\mu(z) = \mu(x) + \mu(y)$ for $\mu(x) < \mu(v_b)$. This proves the first direction of the biconditional. \medskip

    For the other direction, we will show the inverse holds. Let us assume the non-unit vector $\mu(v_b)$ does not belong to a bridge-node. Then, for each node $v_b$ with $\mu$-vector $\mu(v_b)$ there must be a path from the root to a descendant of $v_b$, which does not visit $v_b$. Note that for any reticulation $r$ with $\mu$-vector equal to $\mu(v_b)$, there must be a tree-node with the same $\mu$-vector and $r$ itself is not a bridge-node. So w.l.o.g. it is enough to show that this holds for tree-nodes with $\mu$-vector $\mu(v_b)$. Note that, for the $\mu$-vector of the root $\mu(\rho) \nleq \mu(v_b)$. Therefore, there must be a node $z$ on the path from the root to a descendant of $v_b$ which does not visit $v_b$, with $\mu(z) \not< \mu(v_b)$, and $z$ is the parent of a node $x$, with $\mu(x) < \mu(v_b)$. Note that $z$ cannot be a reticulation, because then $\mu(v_b) \not> \mu(z) = \mu(x) < \mu(v_b)$.\medskip
    
    
    Now there are two cases we should consider, either ~$\mu(z) = \mu(v_b)$ or ~$\mu(z) \nleq \mu(v_b)$. If ~$\mu(z) = \mu(v_b)$, then $\mu(v_b)$ belongs to a tree-clone. In that case, there is at least one pair $\mu(k),\mu(\ell) \in \mu(\net)$ which belong to the children of a node $v_b$, such that $\mu(v_b) = \mu(k) + \mu(\ell)$. Then, by \autoref{lem: mult of tree-clone children}, either there is more than one such pair, or $\#\mu(k) \geq 2$ and $\#\mu(\ell) \geq 2$, and if $\mu(k) = \mu(\ell)$ then $\#\mu(k) \geq 4$. This violates condition 1 or 2.\medskip
    
    If $\mu(z) \nleq \mu(v_b)$, then $\mu(v_b)$ does not necessarily belong to a tree-clone. Note $z$ is also not a leaf, as a leaf has no children. Therefore $z$ is a tree-node with two children, one of which is $x$. This means there exists a node $y$, the other child of $z$, such that $\mu(z) = \mu(x) + \mu(y)$. This violates condition 3. Now, we have shown one of the three conditions must be false. This proves the inverse statement.
\end{proof}


\end{document}